\tikzstyle directed=[postaction={decorate,decoration={markings,
    mark=at position .65 with {\arrow[arrowstyle]{latex}}}}]
\tikzstyle arrowstyle=[scale=1]
\newtheorem{thm}{Theorem}[section]
\newtheorem{cor}[thm]{Corollary}
\newtheorem{conj}[thm]{Conjecture}
\newtheorem{prop}[thm]{Proposition}
\newtheorem{lem}[thm]{Lemma}
\theoremstyle{definition}
\newtheorem{example}[thm]{Example}
\newtheorem{definition}[thm]{Definition}
\newtheorem{remark}[thm]{Remark}
\numberwithin{equation}{section}
\numberwithin{figure}{section}
\DeclareMathOperator{\sgn}{\operatorname{sgn}}
\DeclareMathOperator{\Res}{\operatorname{Res}}
\newcommand{\Wr}{\operatorname{Wr}}
\newcommand{\N}{\mathbb{N}}
\newcommand{\Nz}{{\N_0}}
\newcommand{\Z}{\mathbb{Z}}
\newcommand{\C}{\mathbb{C}}
\newcommand{\R}{\mathbb{R}}
\newcommand{\ZkZ}{\Z/k\Z}
\newcommand{\bzero}{\boldsymbol{0}}
\newcommand{\balpha}{{\boldsymbol{\alpha}}}
\newcommand{\bbeta}{{\boldsymbol{\beta}}}
\newcommand{\bbe}{\bbeta}
\newcommand{\bgamma}{{\boldsymbol{\gamma}}}
\newcommand{\ba}{{\boldsymbol{a}}}
\newcommand{\bw}{{\boldsymbol{w}}}
\newcommand{\bdf}{{\boldsymbol{f}}}
\newcommand{\hw}{\hat{w}}
\newcommand{\ha}{\hat{a}}
\newcommand{\bnu}{{\boldsymbol{\nu}}}
\newcommand{\be}{{\boldsymbol{e}}}
\newcommand{\hbnu}{{\hat{\bnu}}}
\newcommand{\hnu}{{\hat{\nu}}}
\newcommand{\bM}{{\boldsymbol{M}}}
\newcommand{\hbM}{\hat{\bM}}
\newcommand{\bs}{{\boldsymbol{s}}}
\newcommand{\bmu}{{\boldsymbol{\mu}}}
\newcommand{\bpi}{{\boldsymbol{\pi}}}
\newcommand{\bsigma}{{\boldsymbol{\sigma}}}
\newcommand{\tw}{{\tilde{w}}}
\newcommand{\bp}{{\boldsymbol{p}}}
\newcommand{\supth}{{}^{\rm{th}}}
\newcommand{\lp}{\left(}
\newcommand{\rp}{\right)}
\renewcommand{\H}{H}%{\mathcal{H}}
\newcommand{\cL}{L}%\mathcal{L}
\newcommand{\cM}{\mathcal{M}}
\newcommand{\cZ}{\mathcal{Z}}
\newcommand{\hcZ}{\hat{\cZ}}
\newcommand{\cZp}{\cZ^{p}}
\newcommand{\hcZp}{\hcZ^{p}}
\newcommand{\cZo}{\cZ^{\text{odd}}}
\newcommand{\hcZo}{\hcZ^{\text{odd}}}
\newcommand{\Zp}{\Z^{p}}
\newcommand{\cS}{\mathcal{S}}
\newcommand{\Xik}{\Xi_{k}}
\newcommand{\hM}{\hat{M}}
\newcommand{\ddz}{\frac{{\rm d}}{{\rm d}z}}
\newcommand{\EAWeyln}{\tilde{A}^{(1)}_{2n}}
\renewcommand{\th}{\widetilde{H}}
\newcommand{\h}{H}
\newcommand{\Pfour}{{\rm P_{IV}}}
\newcommand{\Pfive}{{\rm P_{V}}}
\newcommand{\Psix}{{\rm P_{VI}}}
\newcommand{\PI}{$\mathrm{P}_{\mathrm{I}}$}
\renewcommand{\PII}{$\mathrm{P}_{\mathrm{II}}$}
\newcommand{\PIII}{$\mathrm{P}_{\mathrm{III}}$}
\newcommand{\PIV}{$\mathrm{P}_{\mathrm{IV}}$}
\renewcommand{\boxdot}{{\ \clap{\raise0.25ex\hbox{$\bullet$}}\clap{$\square$}\ }}
\newcommand{\emptybox}{\hbox{$\square$}}
\newcommand{\p}{Painlev\'{e}}
\newcommand{\bk}{B\"acklund}
\renewcommand{\d}{{\rm d}}
\renewcommand{\a}{\alpha}
\renewcommand{\b}{\beta}
\newcommand{\ex}[1]{\exp(#1)}
\newcommand{\cc}[1]{c_{#1}}
\newcommand{\odes}{ordinary differential equations}
\newcommand{\rd}[1]{{\color{red}#1}}
\newcommand{\bl}[1]{{\color{blue}#1}}
\begin{document}

\title[Rational solutions of $A_{2n}$-Painlev\'{e} systems]{Complete classification of  rational solutions of $A_{2n}$-Painlev\'{e} systems.}

\author{David G\'omez-Ullate}
\address{Escuela Superior de Ingenier\'ia, Universidad de C\'adiz, 11519 Puerto Real, Spain.}
\address{Departamento de F\'isica Te\'orica, Universidad Complutense de
  Madrid, 28040 Madrid, Spain.}

\author{Yves Grandati}
\address{ LCP A2MC, Universit\'{e} de Lorraine, 1 Bd Arago, 57078 Metz, Cedex 3,
  France.}
\author{Robert Milson}
\address{Department of Mathematics and Statistics, Dalhousie University,
  Halifax, NS, B3H 3J5, Canada.}
\email{david.gomezullate@uca.es, grandati@univ-metz.fr,   rmilson@dal.ca}
\begin{abstract}
We provide a complete classification and an explicit representation of rational solutions to the fourth Painlev\'e equation $\Pfour$ and its higher order generalizations known as the $A_{2n}$-\p\ or  Noumi-Yamada systems. The construction of solutions makes use of the theory of cyclic dressing chains of Schr\"odinger operators. Studying the local expansions of the solutions around their singularities we find that some coefficients in their Laurent expansion must vanish, which express precisely the conditions of  trivial monodromy of the associated potentials. The characterization of trivial monodromy potentials with quadratic growth implies that all rational solutions can be expressed as Wronskian determinants of suitably chosen sequences of Hermite polynomials. The main classification result states that every rational solution to the $A_{2n}$-\p\ system corresponds to a cycle of Maya diagrams, which can be indexed by an oddly coloured integer sequence.  Finally, we establish the link with the standard approach to building rational solutions, based on applying B\"acklund transformations  on seed solutions, by providing a representation for the symmetry group action on coloured sequences and Maya cycles.

\bigskip
\noindent \textbf{Keywords.} Painlev\'e equations, Noumi-Yamada
systems, rational solutions, Darboux dressing chains, Maya diagrams,
Wronskian determinants, Hermite polynomials.

\end{abstract}

\maketitle
\setcounter{tocdepth}{2}
\tableofcontents
%%%%%%%%%%%%%%%%%%%%%%%%%%%%
\section{Introduction}\label{sec:intro}
%%%%%%%%%%%%%%%%%%%%%%%%%%%%

The solutions of \p\ equations are  considered to be the nonlinear analogues of special functions,\cite{clarkson2003painleve,refFIKNbook,gromak2008painleve}. 
In general, they are transcendental functions, but for special values of the parameters, \p\ equations (except the first one) possess solutions that can be expressed via rational or special functions. For a review of rational solutions to \p\ equations, see the recent book by Van Assche, \cite{refWVAbook}.

In this paper we focus on the rational solutions of Painlev\'e's fourth equation (\PIV) 
 \eqref{eq:P4} 
\begin{equation}\label{eq:P4}
 %\deriv[2]{u}{z}=\frac{1}{2u}\left(\deriv{u}{z}\right)^2
{\rm P_{IV}}:\qquad u''=\frac{(u')^2}{2u} 
+\frac32u^3+4zu^2+2(z^2-\a)u+\frac{\beta}{u},\qquad \alpha,\beta\in\mathbb C,
\end{equation}
and its higher order generalizations, known as  the $A_{2n}$-\p\ or  Noumi-Yamada systems. 

Lukasevich \cite{lukashevich1967theory} found by direct inspection the first few rational solutions of $\Pfour$. Okamoto \cite{okamoto1987studies3} developed the theory of symmetry transformations of this equation, finding a Hamiltonian structure, birational canonical transformations, parameters for which rational solutions exist and some special solutions that now bear his name.

The following system appeared in Bureau's survey of systems with
  fixed critical points \cite{bureau92} as an example that could be
  reduced to the scalar 2nd order equation $\Pfour$:
\begin{align}\label{eq:P4system}
&f_0' + f_0(f_1-f_2) = \a_0, \nonumber\\
  {\rm sP_{IV}}:\qquad
  &         f_1' + f_1(f_2-f_0) = \a_1,\\
&f_2' + f_2(f_0-f_1) = \a_2, \nonumber 
\end{align}
with $'\equiv \d/\d z$ and $\a_j$, $j=0,1,2$ constants,
subject to the normalization conditions
\begin{equation}\label{eq:P4normalization}
f_0+f_1+f_2=z,\qquad \a_0+\a_1+\a_2=1.
\end{equation}
% \oldcomment{Wrote the BTs using the NY style.  The old formulas had a
%   sign error.}
This system, nowadays commonly referred to as the symmetric form of
\PIV, also admits a symmetry group of \bk\ transformations
\cite{noumi1999symmetries} acting on the tuple of solutions and
parameters $(f_0,f_1,f_2 |\a_0,\a_1,\a_2)$. The symmetry group is
generated by the operators
$\{ \boldsymbol{\pi},\textbf{s}_0, \textbf{s}_1, \textbf{s}_2\}$ with
the following action on the tuple $(f_0,f_1,f_2 |\a_0,\a_1,\a_2)$:
\begin{equation}
  \label{eq:BT}
\begin{aligned}
  \bs_0(f_0) &= f_0,\quad \bs_0(f_1) = f_1 -
  \frac{\alpha_0}{f_0},\quad
  \bs_0(f_2) = f_2 +\frac{\alpha_0}{f_0}\\
  \bs_0(\alpha_0) &= -\alpha_0,\quad \bs_0(\alpha_1) =
  \alpha_1+\alpha_0,\quad \bs_0(\alpha_2) = \alpha_2+\alpha_0\\
  \bpi(f_0) &= f_1,\quad \bpi(f_1) = f_2,\quad \bpi(f_2) = f_0\\
  \bpi(\a_0) &= \a_1,\quad \bpi(\a_1) = \a_2,\quad \bpi(\a_2) = \a_0
    % \textbf{s}_k (f_j) &=
    % f_j-\frac{\a_k\delta_{k+1,j}}{f_k} + \frac{\a_k\delta_{k-1,j}}{f_k},
    % \nonumber\\   
  % &\textbf{s}_k(\a_j) = \a_j-2\a_j\delta_{k,j} +
  %   \a_k(\delta_{k+1,j}+\delta_{k-1,j}),\\ 
  %   &\boldsymbol{\pi}(f_j) = f_{j+1},\qquad
  %   \boldsymbol{\pi}(\a_j)=\a_{j+1} \nonumber 
\end{aligned}
\end{equation}
with the action of $\bs_1, \bs_2$ obtained by cyclically permuting the
indices in the action of $\bs_0$.  Formally, the symmetry group is
isomorphic to the extended\footnote{The corresponding affine Weyl
  group is the subgroup that omits the $\pi$ transformation
  \cite{saito}.} affine Weyl group $A_2^{(1)}$, because the above
generators obey the defining relations
\begin{equation}\label{eq:group}
  \bs_i ^2\equiv 1,\quad (\bs_i \bs_{i+1})^{3}\equiv 1,\quad \bpi \bs_i
  \equiv\bs_{i+1}\bpi,\quad \bpi^{3}\equiv 1,\quad i\equiv 0,1,2\mod 3.
\end{equation}

Noumi and Yamada soon realized that the structure of \eqref{eq:P4system} can be generalized to any number of equations \cite{noumi1998higher}, leading to the $A_{N}$-\p\ or the Noumi-Yamada system. Systems with an even or odd number of equations have a rather different behaviour, and we restrict in this paper to the analysis of the $A_{2n}$-\p\ system, whose equations are given by
%\oldcomment{fixed sign error}
\begin{equation}\label{eq:Ansystem}
  \mbox{$A_{2n}$-Painlev\'e:}\qquad   f_i'+f_i \left( \sum_{j=1}^n
    f_{i+2j-1} - \sum_{j=1}^n f_{i+2j}  
  \right)=\a_i,\qquad i=0,\dots,2n \mod (2n+1)  
\end{equation}
subject to the normalization conditions
\begin{equation}
  \label{eq:alpha112}
f_0+\dots+f_{2n}= z,\qquad \a_0+\cdots + \a_{2n}=1.
%\sum_{j=1}^n f_{j}= z,\qquad \sum_{j=1}^n \a_{j}=1.
\end{equation}

An equivalent system of Painlev\'e type equations was constructed earlier by Veselov
and Shabat \cite{veselov1993dressing} for which Adler \cite{adler93} explicitly described the affine Weyl group symmetry and gave its geometric interpretation. This system can be considered the natural higher order generalization
of $\,\rm{s}\Pfour$ (which corresponds to $n=1$), since its symmetry
group is the extended affine Weyl group $\EAWeyln$, acting by \bk\
transformations as in \eqref{eq:BT}. The system passes the
Painlev\'e-Kowalevskaya test, \cite{veselov1993dressing}.

The standard technique to construct rational solutions of \eqref{eq:Ansystem} is to start from a number of very simple rational \textit{seed solutions}, and successively apply the \bk\ transformations \eqref{eq:BT} to generate new solutions, which are rational by construction. However, this method does not produce \textit{per se} explicit representations of the solutions. For this reason, other more explicit representations have been investigated, most notably via recursion relations \cite{okamoto1987studies3,fukutani2000special}, determinantal representations \cite{kajiwara1998determinant,noumi1999symmetries} or Schur functions, exploiting suitable reductions of the KP hierarchy in  Sato's theory of integrable systems, \cite{tsuda2012kp,tsuda2004universal}. Perhaps the simplest representation of the rational solutions of \PIV\ is via Wronskian determinants of certain sequences of Hermite polynomials:
\begin{eqnarray}
H_{m,n}(z)&=&\Wr (H_m,H_{m+1},\dots,H_{m+n-1}), \label{eq:GenH}\\
Q_{m,n}(z)&=&\Wr (H_1,H_4,\dots,H_{1+3(m-1)},H_2,H_5,\dots,H_{2+3(n-1)}), \label{eq:Oka}
\end{eqnarray}
which are known as \textit{generalized Hermite} and \textit{generalized Okamoto} polynomials, respectively.

Regarding higher order systems, rational solutions of $A_{4}$-\p\ have been investigated in \cite{filipuk2008symmetric,matsuda2012rational} and classified recently in \cite{clarkson2020cyclic}, which lays the ground for the construction of solutions in this paper. For systems of arbitrary order $N$, Tsuda \cite{tsuda2005universal} has described one special family of solutions in terms of Schur functions associated to $N$-reduced partitions, which can be regarded as a generalization of \eqref{eq:Oka}.
Indeed, the families \eqref{eq:GenH} and \eqref{eq:Oka} can be generalized to the higher order system \eqref{eq:Ansystem}, but they represent only a small part of all the solutions, those corresponding to the minimal and maximal shifts.

The special polynomials associated with rational solutions of \p\ equations have attracted much interest for various reasons.
First, they appear in a number of applications, in connection with  random matrix theory \cite{refFW01,chen2006painleve}, supersymmetric quantum mechanics \cite{bermudez2012complexb,marquette2016,Novokshenov18,bermudez2011supersymmetric}, vortex dynamics with quadrupole background flow \cite{Clarkson2009vortices},  recurrence relations for orthogonal polynomials \cite{clarkson2014relationship,refWVAbook}, exceptional orthogonal polynomials \cite{garcia2016bochner} or rational-oscillatory solutions of the defocusing nonlinear Schr{\"o}dinger equation \cite{clarkson2006special}.

Second, the complex zeros of these special polynomials form remarkably regular patterns in the complex plane, as it has been mostly studied by Clarkson, \cite{clarkson2003fourth}. The zeros of generalized Hermite polynomials \eqref{eq:GenH} form rectangular patterns, and for large $m,n$ with $m/n$ fixed  they fill densely a curvilinear rectangle whose boundary is described by Buckingham \cite{BuckPIV} using the steepest descent method for a Riemann-Hilbert problem. The distribution of these zeros is also studied recently by Masoero and Raffolsen in other asymptotic regimes, \cite{refMR18,masoero2019roots}. 
The roots of generalized Okamoto polynomials form patterns that combine rectangular and triangular filled regions, and recently Buckingham and Miller \cite{buckingham2020large} have extended their analysis to provide a rigorous description of the boundaries. Remarkably, the zeros and poles of rational solutions to higher order $A_{2n}$-\p\ systems show much richer structures, which so far have only been investigated numerically, \cite{clarkson2020cyclic}.

Construction methods are able to prove existence of rational solutions and equivalence of different representations, but the question of establishing that \textit{all of the rational solutions} are obtained is much harder, and has only been addressed in very few papers.  Parameters for which rational solutions exist have been identified by Murata \cite{murata1985i} for \PIV\ and by Kitaev, Law and McLeod \cite{kitaev1994rational}  for $\Pfive$. These results are obtained by direct computation on local expansions, and they do not scale well to higher order systems due to increasing complexity and branching. By contrast, Veselov was able to establish that rational solutions of $A_{2n}$-\p\ are in one-to-one correspondence with Schr\"odinger operators whose potentials have quadratic growth at infinity and trivial monodromy. His paper \cite{veselov2001stieltjes}, which has received comparatively less attention, is the basis for the characterization of rational solutions performed in this work, together with \cite{refDG86,oblomkov1999monodromy}.

The factorization method in quantum mechanics, first introduced by
  Schr\"{o}dinger \cite{schrodinger} and presented in its general form
  by Infeld and Hull \cite{infeld}, rests implicitily on an operator
  factorization transformation introduced by Darboux in his study of
  surfaces\cite{darboux}.  In \cite{veselov1993dressing} Veselov and
  Shabat introduced the notion of a \textit{dressing chain}, which is a sequence
  of a Darboux transformations that after a finite number of steps
  returns to the initial potential shifted by an additive constant.
  The connection to \PIV\ was already noted in this paper, and indeed
  the dressing chain is formally equivalent to the Noumi-Yamada
  system.  Adler considered various generalization of the dressing
  chain that allow the construction of other Painlev\'e equations
  \cite{adler1994nonlinear}.  Adler, in a somewhat
  earlier paper\cite{adler93}, also discovered the connection to
  symmetry groups and B\"acklund transformations.

As mentioned in \cite{WilloxHietarinta}, our aim in this paper is to combine the strength of the $\tau$-function and geometric approach of the japanese school \cite{noumi1999symmetries, okamoto1987studies3, tsuda2005universal,tsuda2012kp, fukutani2000special,noumi2004painleve} with that of the dressing chains and trivial monodromy approach of the russian school \cite{adler1994nonlinear,veselov1993dressing, veselov2001stieltjes, oblomkov1999monodromy}, to attain our goal of giving a complete classification of the rational solutions to higher order \p\ systems. Given the breadth of both points of view, beyond this goal there is much to be learnt from their common interplay.

The paper is organized as follows.  In Section~\ref{sec:preliminaries}
we recall the equivalence between the $A_{2n}$-\p\ system and cyclic
dressing chains of Darboux transformations obtained as factorizations
of Schr\"{o}dinger operators, \cite{veselov1993dressing,
  adler1994nonlinear}.  Section~\ref{sec:Maya} introduces the class of
rational extensions of the harmonic oscillator and identifies them as
the only potentials with quadratic growth at infinity and trivial
monodromy \cite{oblomkov1999monodromy}. It also introduces Hermite
pseudo-Wronskians indexed by Maya diagrams, and recalls some of their
basic properties \cite{gomez2016durfee}. The main result in this
section is Proposition~\ref{prop:seedfunc} that provides all
quasi-rational eigenfunctions of Schr\"odinger operators belonging to
that class of potentials. In Section~\ref{sec:veselov} we follow the
work of Veselov\cite{veselov2001stieltjes} on rational solutions to
odd-cyclic dressing chains. Studying the Laurent expansions of these
solutions, the constraints imposed on the coefficients of the
expansion are identified precisely as the conditions that express
trivial monodromy of the associated potentials of the chain. The main
result in this section is Theorem~\ref{thm:characterization}, which
establishes that all rational solutions of an odd-cyclic dressing
chain must necessarily be expressible as Wronskian determinants of
Hermite polynomials. Section~\ref{sec:cycles} studies cycles of Maya
diagrams and introduces all the necessary concepts (genus,
interlacing, block coordinates) to achieve a complete classification,
which is described in
Proposition~\ref{thm:Mp}. Section~\ref{sec:classification} uses all
the previously derived results to state and prove the main
Theorem~\ref{thm:main} on the classification of rational solutions of
odd-cyclic dressing chains. The result not only provides a full
classification, but also allows for an explicit representation of all
solutions in terms of oddly coloured sequences. To illustrate this,
some explicit examples are given in \S \ref{sec:examples}. Finally, in
Section~\ref{sec:sym} we study the action of the symmetry group
\eqref{eq:BT} on the representation of the solutions given by
Theorem~\ref{thm:main}, thus providing a connection with the standard
approach \cite{noumi2004painleve}. Possible extensions of this work
are discussed in Section~\ref{sec:summary}, where we formulate a
conjecture on the equivalent result for even cyclic chains.

%%%%%%%%%%%%%%%%%%%%%%%%%%%%
\section{Higher order Painlev\'e equations and dressing chains}\label{sec:preliminaries}
%%%%%%%%%%%%%%%%%%%%%%%%%%%%

The $A_{2n}$-\p\ system is the following set of $2n+1$ nonlinear differential equations for the functions $f_i=f_i(z)$ and parameters $\alpha_i\in\mathbb{C}$
\begin{equation}\label{eq:A2nsystem}
  f_i'+f_i \left( \sum_{j=1}^n f_{i+2j-1} - \sum_{j=1}^n f_{i+2j}
  \right)=\a_i,\qquad i=0,\dots,2n \mod (2n+1)
\end{equation}
subject to the normalization conditions
\begin{equation}
  \label{eq:alpha1}
f_0+\dots+f_{2n}= z,\qquad \a_0+\cdots + \a_{2n}=1.
%\sum_{j=1}^n f_{j}= z,\qquad \sum_{j=1}^n \a_{j}=1.
\end{equation}

\begin{definition}
A \textit{rational solution} of the $A_{2n}$-\p\ system \eqref{eq:A2nsystem} is a tuple of functions and parameters $(f_0,\dots,f_{2n}|\a_0,\dots,\a_{2n})$ where $f_i=f_i(z)$ are rational functions of $z$.
\end{definition}

\begin{remark}
As we shall see later, for every solution of an $A_{2m}$-\p\ system, one can build an infinite number of degenerate solutions of a higher order  $A_{2n}$-\p\ system, with $n>m$. One of them corresponds to trivially setting some of the $f_i$ and $\alpha_i$ to zero, but many other non-trivial embeddings also exist.
\end{remark}

It will be convenient throughout the paper to work with a different set of functions and parameters, namely the set of functions that satisfy a Darboux dressing chain, which we define next.

\begin{definition}\label{def:wchain}
  A $(2n+1)$-cyclic \textit{dressing chain}  with
  shift $\Delta$ is a sequence of $2n+1$ functions $w_0,\ldots, w_{2n}$
  and complex numbers $a_0,\ldots,a_{2n}$ that satisfy
  the following coupled system of $2n+1$ Riccati-like \odes\
\begin{equation}
  \label{eq:wfchain}
  (w_i + w_{i+1})' + w_{i+1}^2 - w_i^2 = a_i ,\qquad
  i=0,1,\ldots, 2n \mod(2n+1)  
\end{equation}
subject to the condition
\begin{equation}
  \label{eq:Deltasumalpha}
 a_0 + \cdots + a_{2n}=-\Delta. 
\end{equation}

Note that by adding the $2n+1$ equations \eqref{eq:wfchain} we immediately obtain a
first integral of the system
\begin{equation}\label{eq:wsum}
\sum_{j=0}^{2n} w_j= \tfrac12z\sum_{j=0}^{2n} a_j=
-\tfrac12{\Delta}z.
\end{equation}

The system \eqref{eq:wfchain} has a a group of symmetries that will be discussed in Section~\ref{sec:sym}. For now, we will just observe that it is invariant under two obvious transformations of functions and parameters:
\begin{enumerate}
\item[i)] reversal symmetry
\begin{equation}
  \label{eq:reversal2}
  w_i \mapsto -w_{-i},\quad a_i \mapsto -a_{-i},\quad
  \Delta\mapsto -\Delta
\end{equation}
\item[ii)] cyclic symmetry
\begin{equation}
 w_i \mapsto w_{i + 1},\quad a_i \mapsto a_{i+1},\quad
\Delta \mapsto \Delta 
\end{equation}
for $i=0,\ldots 2n \mod(2n+1)$. 
\end{enumerate}
The equivalence between the $A_{2n}$-\p\ system \eqref{eq:A2nsystem} and the $(2n+1)$-cyclic dressing chain \eqref{eq:wfchain} is given by the following proposition.

\begin{prop}\label{prop:wtof}
  The tuple of functions and complex numbers
  $(w_0,\dots,w_{2n}|a_0,\dots,a_{2n})$ satisfy \eqref{eq:wfchain}
  \eqref{eq:wsum}, the relations of a $(2n+1)$-cyclic Darboux dressing
  chain with shift $\Delta$, if and only if the tuple
  $\left(f_0,\dots,f_{2n}\,\big|\,\a_0,\dots, \a_{2n}\right)$
  % \[ \left(f_0,\dots,f_{2n}\,\Big|\,\a_0,\dots, \a_{2n}\right),\qquad \a_i=c^2 a_i,\qquad c^2=-\frac{1}{\Delta},\]
  defined by 
  \begin{equation}\label{eq:wtof2}
    \begin{aligned}
      f_i(z)&= c \,(w_i + w_{i+1})\left(cz\right),\qquad
      i=0,\dots,2n\mod(2n+1),\\
      \a_i&=c^2 a_i,\\
      c^2&=-\frac{1}{\Delta}
    \end{aligned}
  \end{equation}
  satisfies the $A_{2n}$-\p\ system \eqref{eq:A2nsystem} subject to
  the normalization \eqref{eq:alpha1}.
\end{prop}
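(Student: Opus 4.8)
The plan is to prove the equivalence by direct substitution, verifying that the change of variables \eqref{eq:wtof2} transforms the dressing chain relations \eqref{eq:wfchain} into the $A_{2n}$-\p\ system \eqref{eq:A2nsystem}, and conversely. First I would introduce the rescaled variable $\zeta = cz$ and set $w_i = w_i(\zeta)$, so that by the chain rule $f_i(z) = c\,(w_i+w_{i+1})(\zeta)$ satisfies $f_i'(z) = c^2 (w_i+w_{i+1})'(\zeta)$. The factor $c^2 = -1/\Delta$ is precisely what is needed to reconcile the additive shift $\Delta$ in \eqref{eq:Deltasumalpha} with the normalization $\a_0+\cdots+\a_{2n}=1$ in \eqref{eq:alpha1}: indeed, from \eqref{eq:wtof2} one has $\sum_i \a_i = c^2 \sum_i a_i = (-1/\Delta)(-\Delta) = 1$, and the first-integral relation \eqref{eq:wsum} rescales to give $\sum_i f_i = c\sum_i(w_i+w_{i+1})(\zeta) = 2c\sum_i w_i(\zeta) = 2c\cdot(-\tfrac12\Delta\zeta) = -c\Delta\cdot cz = z$, using $c^2\Delta = -1$. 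So the normalization conditions follow immediately from the chain's first integral and the constraint on $\Delta$.

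The heart of the computation is matching the nonlinear terms. The key algebraic identity is that the dressing chain equation $(w_i+w_{i+1})' + w_{i+1}^2 - w_i^2 = a_i$ can be rewritten, using the telescoping structure of the differences $w_{i+1}^2 - w_i^2 = (w_{i+1}-w_i)(w_{i+1}+w_i)$, in terms of the combinations $f_i = c(w_i+w_{i+1})(\zeta)$. The plan is to show that the coupling term $f_i\left(\sum_{j=1}^n f_{i+2j-1} - \sum_{j=1}^n f_{i+2j}\right)$ in \eqref{eq:A2nsystem} reproduces exactly $c^2(w_{i+1}^2 - w_i^2)$ after substitution. Expanding the alternating sum of the $f_{i+k}$ and summing the telescoping pairs $(w_{i+k}+w_{i+k+1})$ with alternating signs, most consecutive terms cancel, leaving only the boundary terms $w_{i+1}$ and $w_i$ (the cyclic index structure mod $(2n+1)$ guarantees that the telescoping closes up consistently). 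Multiplying by $f_i = c(w_i+w_{i+1})$ then yields $c^2(w_{i+1}+w_i)(w_{i+1}-w_i) = c^2(w_{i+1}^2 - w_i^2)$, matching the Riccati term after the $c^2$ rescaling of $a_i$ into $\a_i$.

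I expect the main obstacle to be the bookkeeping in the telescoping sum: one must verify that the alternating sum $\sum_{j=1}^n f_{i+2j-1} - \sum_{j=1}^n f_{i+2j}$ of the $2n$ functions (indexed cyclically, skipping $f_i$ itself) collapses correctly to a single difference of $w$'s. The subtlety is that each $f_{i+k}$ already involves a \emph{pair} $w_{i+k}+w_{i+k+1}$, so the alternating sign pattern must be tracked carefully across the odd/even split of the indices $i+2j-1$ and $i+2j$; this is where the oddness of the cycle length $2n+1$ is essential, ensuring the pattern does not self-cancel into zero or double-count a term. Once this combinatorial identity is pinned down—most cleanly by writing out the partial sums and observing the cancellation pattern explicitly—the rest of the proof is a routine verification. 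The converse direction (that a solution of \eqref{eq:A2nsystem} yields a dressing chain) follows by reversing each of these steps, since the change of variables \eqref{eq:wtof2} is invertible: given the $f_i$ one recovers the $w_i$ from the linear system $w_i + w_{i+1} = c^{-1} f_i(c^{-1}\zeta)$, which is solvable precisely because $2n+1$ is odd and hence the associated cyclic incidence matrix is nonsingular.
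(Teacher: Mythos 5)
Your proposal is correct and follows essentially the same route as the paper: both rest on the linear change of variables $f_i = w_i + w_{i+1}$ (invertible precisely because the cycle length $2n+1$ is odd), the identity that the alternating cyclic sum of the $f$'s collapses to $w_{i+1}-w_i$ (your telescoping computation is exactly the paper's relation \eqref{eq:wdif}, which the paper obtains from the explicit inversion \eqref{eq:ftow}), and the scaling $c^2 = -1/\Delta$ to reconcile \eqref{eq:Deltasumalpha} with the normalization \eqref{eq:alpha1}. Your version just spells out the telescoping and the first-integral check in more detail than the paper's terse proof does.
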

\begin{proof}
It suffices to invert the linear transformation
\begin{equation}\label{eq:wtof}
f_i=w_i+w_{i+1}, \qquad i=0,\dots,2n\mod(2n+1)
\end{equation}
to obtain
\begin{equation}\label{eq:ftow}
  w_i= \tfrac{1}{2} \sum_{j=0}^{2n} (-1)^j f_{i+j}, \qquad
  i=0,\dots,2n\mod(2n+1),
\end{equation}
which imply the relations
\begin{equation}\label{eq:wdif}
  w_{i+1}-w_i = \sum_{j=0}^{2n-1} (-1)^j f_{i+j+1}, \qquad
  i=0,\dots,2n\mod(2n+1). 
\end{equation}
Inserting \eqref{eq:wtof} and \eqref{eq:wdif} into the equations of the cyclic dressing chain \eqref{eq:wfchain} leads to the $A_{2n}$-\p\ system \eqref{eq:A2nsystem}. For any constant $c\in\mathbb C$, the scaling transformation
\[f_i\mapsto c f_i,\quad z\mapsto cz,\quad \a_i\mapsto c^2 \a_i \]
preserves the form of the equations \eqref{eq:A2nsystem}. The choice $c^2=-\frac{1}{\Delta}$ ensures that the normalization \eqref{eq:alpha1} always holds, for dressing chains with different shifts $\Delta$.
\end{proof}
\end{definition}

\subsection{Factorization chains of Schr\"odinger operators}\label{sec:schr}

We next recall the relation between dressing chains and sequences of
Schr\"odinger operators related by Darboux transformations.

Consider the following sequence of Schr\"{o}dinger operators
\begin{equation}\label{eq:Lseq}
  \cL_i = -D_z^2 + U_i ,\qquad D_z= \frac{{\rm d}}{{\rm d}z},\quad U_i=U_i(z),\quad
  i\in \Z
\end{equation}
where each operator is related to the next by a Darboux transformation,  i.e. by the following factorization
\begin{equation}
  \label{eq:Dxform}
  \begin{aligned}
    \cL_i &= (D_z + w_i)(-D_z + w_i)+\lambda_i, \quad w_i = w_i(z),\\
    \cL_{i+1} &= (-D_z + w_i)(D_z + w_i)+\lambda_i.
  \end{aligned}
\end{equation}
Eliminating the derivative terms we see that \eqref{eq:Dxform} is
equivalent to
\begin{equation}\label{eq:Riccati}
  w_i' + w_i^2 = U_i - \lambda_i,\quad -w_i' +w_i^2 = U_{i+1}- \lambda_i.
\end{equation}
Equivalently, we can characterize $w_i$ as the log-derivative of
$\psi_i$, the seed function of the Darboux transformation that maps
$\cL_i$ to $\cL_{i+1}$
\begin{equation}
  \label{eq:Lipsii}
  \cL_i\psi_i = \lambda_i\psi_i,\qquad\text{where } w_i = \frac{\psi_i'}{\psi_i}.
\end{equation}
Using \eqref{eq:Riccati}  the potentials of the
dressing chain are then related by
\begin{align}
  \label{Ui-Ui+1}
  U_{i}-U_{i+1} &= 2 w'_i, \\
  \label{Ui+Ui+1}
  U_i + U_{i+1} &= 2 w_i^2+2\lambda_i .
\end{align}
It follows that if \eqref{eq:Dxform} holds with non-constant $U_i$,
then the corresponding $w_i, \lambda_i$ are determined uniquely by the
potentials $U_i,\; i\in \Z$.

\begin{definition}
  We say that a sequence of Schrodinger operators $L_i, i\in \Z$ forms
  a $(2n+1)$-cyclic factorization chain with shift $\Delta\in \C$ if
  in addition to \eqref{eq:Dxform} we also have
  \begin{equation}\label{eq:shift} 
    U_{i+2n+1} = U_i+\Delta,\quad i\in \Z.
  \end{equation}
\end{definition}
\begin{prop}\label{prop:dchainfchain}
  Suppose that  the Schr\"odinger operators $L_i,\; i\in \Z$ form a
  factorization chain with shift $\Delta$. Then, the
  corresponding $w_0,\ldots, w_{2n}$ and
  \begin{equation}
    \label{eq:alphaidef}
    a_i =  \lambda_{i} - \lambda_{i+1},\quad i=0,\ldots, 2n
  \end{equation}
  form a $(2n+1)$-cyclic Darboux dressing chain with shift $\Delta$.
\end{prop}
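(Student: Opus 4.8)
The plan is to show that the cyclic factorization chain relations directly imply the defining Riccati-like equations of the dressing chain, and that the shift condition translates into the sum condition on the $a_i$. The key is that equations \eqref{Ui-Ui+1} and \eqref{Ui+Ui+1} already express the potentials in terms of the $w_i$ and $\lambda_i$, so the strategy is to eliminate the potentials entirely and recover \eqref{eq:wfchain}.

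First I would write down the two consecutive instances of \eqref{Ui+Ui+1}, namely $U_i + U_{i+1} = 2w_i^2 + 2\lambda_i$ and $U_{i+1}+U_{i+2} = 2w_{i+1}^2 + 2\lambda_{i+1}$, and subtract them to obtain $U_i - U_{i+2} = 2w_i^2 - 2w_{i+1}^2 + 2(\lambda_i - \lambda_{i+1})$. On the other hand, adding the two consecutive instances of \eqref{Ui-Ui+1}, that is $U_i - U_{i+1} = 2w_i'$ and $U_{i+1}-U_{i+2}=2w_{i+1}'$, gives $U_i - U_{i+2} = 2(w_i + w_{i+1})'$. Equating the two expressions for $U_i - U_{i+2}$ and dividing by $2$ yields
\begin{equation*}
  (w_i + w_{i+1})' = w_i^2 - w_{i+1}^2 + (\lambda_i - \lambda_{i+1}),
\end{equation*}
which, upon substituting the definition $a_i = \lambda_i - \lambda_{i+1}$ from \eqref{eq:alphaidef} and rearranging, is exactly the dressing chain equation \eqref{eq:wfchain}.

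Next I would verify the cyclicity and the shift condition. The indexing $i = 0,\dots,2n \bmod (2n+1)$ together with the shift relation \eqref{eq:shift} ensures that the functions $w_i$ and parameters $\lambda_i$ are themselves periodic modulo $2n+1$ up to the effect of $\Delta$; concretely, from \eqref{eq:shift} and the uniqueness remark following \eqref{Ui+Ui+1} one checks that $w_{i+2n+1} = w_i$ while $\lambda_{i+2n+1} = \lambda_i + \Delta$. To get the sum condition \eqref{eq:Deltasumalpha}, I would telescope: summing $a_i = \lambda_i - \lambda_{i+1}$ over $i = 0,\dots,2n$ gives $\sum_{i=0}^{2n} a_i = \lambda_0 - \lambda_{2n+1} = \lambda_0 - (\lambda_0 + \Delta) = -\Delta$, as required.

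The main obstacle, though modest, is the bookkeeping around how the shift $\Delta$ propagates into the $\lambda_i$: one must confirm that the additive periodicity $\lambda_{i+2n+1}=\lambda_i+\Delta$ follows rigorously from the potential shift \eqref{eq:shift}, rather than merely assuming it. This uses the fact, noted just before the definition, that for non-constant potentials the pair $(w_i,\lambda_i)$ is uniquely determined by the $U_i$; applying the factorization \eqref{eq:Dxform} to the shifted operator $\cL_{i+2n+1} = \cL_i + \Delta$ and matching against \eqref{eq:Riccati} forces $w_{i+2n+1}=w_i$ and $\lambda_{i+2n+1} = \lambda_i + \Delta$, which is precisely the ingredient that makes the telescoping sum close up correctly. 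Once this is in place, the remainder is routine algebraic elimination.
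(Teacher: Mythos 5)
Your proof is correct and follows essentially the same route as the paper's: eliminating the potentials from the Riccati relations \eqref{eq:Riccati} (equivalently, from \eqref{Ui-Ui+1} and \eqref{Ui+Ui+1}) to recover \eqref{eq:wfchain}, then deducing $w_{i+2n+1}=w_i$ and $\lambda_{i+2n+1}=\lambda_i+\Delta$ from the shift relation \eqref{eq:shift} and telescoping to obtain \eqref{eq:Deltasumalpha}. If anything, you spell out more carefully than the paper does why the additive periodicity of the $\lambda_i$ follows from \eqref{eq:shift}, invoking the uniqueness of $(w_i,\lambda_i)$ for non-constant potentials, which is a welcome addition rather than a deviation.
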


\begin{proof}
  Eliminating the potentials in \eqref{eq:Riccati} and using
  \eqref{eq:alphaidef} to define $a_i$, we obtain the system of coupled equations
  \[
    (w_i + w_{i+1})' + w_{i+1}^2 - w_i^2 = a_i ,\quad i\in \N
  \]
  whose form coincides with \eqref{eq:wfchain}.  Relation
  \eqref{eq:shift} implies that $w_{i+2n+1} = w_i$ and that
  $\lambda_{i+2n+1} = \lambda_i + \Delta$.  The latter implies that
  $a_{i+2n+1} = a_i$ also. Hence the infinite chain of equations
  relating $w_i, w_{i+1}, a_i$ closes onto the finite system
  \eqref{eq:wfchain}. Since $\lambda_{2n+1}=\lambda_0+\Delta$, from
  \eqref{eq:alphaidef} it follows that \eqref{eq:Deltasumalpha} holds.
\end{proof}

%%%%%%%%%%%%%%%%%%%%%%%%%%%%
\section{Maya diagrams and trivial monodromy potentials}\label{sec:Maya}
%%%%%%%%%%%%%%%%%%%%%%%%%%%%

In this Section we introduce the main elements and results needed for
the classification of rational solutions of odd-cyclic dressing
chains. In Section~\ref{sec:veselov} we will prove the main
characterization result, namely that all rational solutions of an
odd-cyclic dressing chain can be expressed as log-derivatives of
Wronskian determinants whose entries are Hermite polynomials. The
basis for this proof lies in the theory of Schr\"odinger operators
with trivial monodromy, for which we refer to the celebrated papers of
Duistermaat and Gr\"unbaum \cite{refDG86} and Oblomkov
\cite{oblomkov1999monodromy}. However, before we can state the main
theorem we need to recall some basic definitions on Maya diagrams and
Hermite pseudo-Wronskians, which will be the building blocks of all
solutions.

\subsection{Maya diagrams}
Following Noumi \cite{noumi2004painleve}, we define a Maya diagram in the following manner.
\begin{definition}
  A Maya diagram is a set of integers $M\subset\Z$ that contains a
  finite number of positive integers, and excludes a finite number of
  negative integers. 
\end{definition}

\begin{definition}\label{def:index}
  Let $M\subset\Z$ be a Maya diagram.  Let $m_j,\; j\in \N$ be the
    $j\supth$ largest element of $M$ so that $m_1>m_2>\cdots$ is a
    decreasing enumeration of $M$.  Since only finitely many negative
    numbers are missing from $M$, we must have $m_{j+1} = m_j+1$ for
    $j$ sufficiently large.  Hence, there exists an $s_M\in \Z$ such
    that $m_j = -j+s_M$ for all $j$ sufficiently large. We call $s_M$
    the index of $M$.  Alternatively, the index of $M$ can also be defined in terms of the corresponding partition (see  \cite[Section 5.1]{noumi2004painleve}). 
\end{definition}
 
A Maya diagram can be visually represented as a sequence of $\boxdot$
and $\emptybox$ symbols with the filled symbol $\boxdot$ in position
$i$ indicating membership $i\in M$. A Maya diagram thus begins with an
infinite sequence of filled $\boxdot$ and terminates with an infinite
sequence of empty $\emptybox$.

We next describe the various forms to label Maya diagrams.

\begin{definition}

Let $M$ be a Maya diagram, and 
\[ M_-= \{ -m-1 \colon m\notin M, m<0\},\qquad M_+ = \{ m\colon m\in
M\,, m\geq 0 \}. \]
Let $s_1>s_2>\cdots > s_p$ and $t_1> t_2>\dots> t_q$ be the
elements of $M_-$ and $M_+$ arranged in descending order. 
The  \textit{Frobenius symbol} of $M$ is defined as the double list $(s_1,\ldots, s_p \mid t_q,\ldots, t_1)$, 
\end{definition}

If a Maya diagram $M$ has the Frobenius symbol $(s_1,\ldots, s_p \mid t_q,\ldots, t_1)$ , its index is given by $s_M=q-p$. 
The classical Frobenius symbol
\cite{andrews2004integer,olsson1994combinatorics,andrews1998theory} corresponds to the zero index case where $q=p$.

A natural operation in Maya diagrams is the following translation by an integer $k$
\begin{equation}\label{eq:trans}
M+k = \{ m+k \colon m\in M \},\qquad k\in \Z.
\end{equation}
The behaviour of the index $s_M$ under translation of $k$ is given by
\begin{equation}\label{eq:indexshift}
M'=M+k\quad \Rightarrow \quad s_{M'}=s_M+k.
\end{equation}
A Maya diagram $M\subset \Z$ is said to be in standard form if $p=0$
and $t_q>0$.  We visually recognize a Maya diagram in standard form when all the boxes to the left of the origin are filled $\boxdot$ and the first box to the right of the origin is empty
$\emptybox$.

Through this paper, we will adopt the convention to use Hermite polynomials $H_n(z),\; n=0,1,\ldots$ as univariate
polynomials defined by
\begin{equation}
  \label{eq:HnRodrigues}
  H_n(z)=(-1)^ne^{z^2}\frac{d^n}{dz^n}e^{-z^2},\qquad n=0,1,2,\dots.
\end{equation}
The $H_n(z)$ are known as classical orthogonal polynomials because they
satisfy a second-order eigenvalue equation
\begin{equation}
  \label{eq:hermde}
  y''-2zy' = -2n y,\quad y= H_n(z),
\end{equation}
and a 3-term recurrence relation
\begin{equation}
  \label{eq:h3term}
  H_{n+1}(z) =  2x H_n(z) - 2n H_{n-1}(z).
\end{equation}
The above properties imply the following orthogonality relation:
\begin{equation}
  \label{eq:hortho}
  \int_{\R} H_m(z) H_n(z) e^{-z^2} dz = \sqrt{\pi}\, 2^n
  n! \delta_{n,m},
\end{equation}
and generating function:
\begin{align}
  \label{eq:hermgf}
    e^{zt - \tfrac14 t^2}
    &= \sum_{n=0}^\infty H_n(z) \frac{t^n}{2^nn!}.
\end{align}
We will also make use of the {\it conjugate Hermite polynomials} defined by
\begin{equation}
    \label{eq:thndef}
    \th_n(z)={\rm i}^{-n} \h_{n}({\rm i}z).
  \end{equation}
These conjugate polynomials can also be defined by the relation
\begin{equation}
    \th_n(z)=e^{-z^2}\frac{d^n}{dz^n}e^{z^2},\qquad n=0,1,2,\dots.
\end{equation}
and they satisfy the differential equation
\begin{equation}
    y''-2zy' = 2(n+1) y,\quad y= e^{z^2}\th_n(z),\; n=0,1,2,\ldots.
\end{equation}

 Following \cite{gomez2016durfee}, to every Maya diagram we
 associate a polynomial called a Hermite pseudo-Wronskian.
\begin{definition}
  Let $M$ be a Maya diagram and $(s_1,\dots,s_r|t_q,\dots,t_1)$ its
  corresponding Frobenius symbol. We define the polynomial
  \begin{equation}\label{eq:pWdef1} \H_M(z) = \ex{-rz^2}\Wr[ \ex{z^2}
    \th_{s_1},\ldots, \ex{z^2} \th_{s_r}, \h_{t_q},\ldots \h_{t_1} ],
  \end{equation} where $\Wr$ denotes the Wronskian determinant of the indicated functions.
  
The polynomial nature of $\H_M(z)$ becomes evident in the following determinantal representation (see Proposition 7 in \cite{gomez2016durfee})
\begin{equation}\label{eq:pWdef2} \H_M(z) =
    \begin{vmatrix} \th_{s_1} & \th_{s_1+1} & \ldots &
\th_{s_1+r+q-1}\\ \vdots & \vdots & \ddots & \vdots\\ \th_{s_r} &
\th_{s_r+1} & \ldots & \th_{s_r+r+q-1}\\ \h_{t_q} & D_{z} \h_{t_q} &
\ldots & D_{z}^{r+q-1}\h_{t_q}\\ \vdots & \vdots & \ddots & \vdots\\
\h_{t_1} & D_{z} \h_{t_1} & \ldots & D_{z}^{r+q-1}\h_{t_1}
    \end{vmatrix}
  \end{equation}
\end{definition}

\begin{prop}[Theorem 1 in \cite{gomez2016durfee}]\label{prop:Mshift}
For any $k\in\Z$, the Hermite pseudo-Wronskians $H_M$ and $H_{M+k}$ coincide up to a multiplicative constant.
\end{prop}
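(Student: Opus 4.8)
The plan is to reduce the whole statement to the single case $k=1$. Since the unit translations $M\mapsto M\pm1$ generate every translation of $\Z$, since $M\mapsto M+1$ is a bijection of Maya diagrams, and since ``equal up to a nonzero multiplicative constant'' is a transitive relation, it suffices to exhibit, for every Maya diagram $M$, a nonzero constant $c$ with $H_{M+1}=c\,H_M$; the general case then follows by composing (or inverting) finitely many such relations.

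The crux of the argument is the observation that the conjugate Hermite entries in \eqref{eq:pWdef1} are literally derivatives of a Gaussian. From the Rodrigues-type formula $\th_n(z)=e^{-z^2}\tfrac{\d^n}{\d z^n}e^{z^2}$ one reads off $e^{z^2}\th_s=\tfrac{\d^s}{\d z^s}e^{z^2}$, and hence $(e^{z^2}\th_{s-1})'=e^{z^2}\th_s$. On the Hermite side one has $H_0=1$ together with $H_{t+1}'=2(t+1)H_t$. These three relations are exactly what make the defining Wronskian telescope under a unit shift.

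Next I would record the combinatorial effect of $M\mapsto M+1$ on the Frobenius symbol $(s_1,\dots,s_r\mid t_q,\dots,t_1)$, which splits according to whether $-1\in M$. If $-1\in M$, the new symbol is $(s_1-1,\dots,s_r-1\mid 0,t_q+1,\dots,t_1+1)$: a column $H_0=1$ is created, every $t_j$ is raised and every $s_i$ lowered by one. If $-1\notin M$, then necessarily $s_r=0$ and the new symbol is $(s_1-1,\dots,s_{r-1}-1\mid t_q+1,\dots,t_1+1)$: the distinguished entry $e^{z^2}\th_0=e^{z^2}$ disappears. (Both are consistent with the index shift \eqref{eq:indexshift}.) In the first case, after a column reordering that introduces only a sign, I apply the identity $\Wr[1,g_1,\dots,g_m]=\Wr[g_1',\dots,g_m']$ to the constant column $H_0$; differentiating, each $(e^{z^2}\th_{s_i-1})'$ becomes $e^{z^2}\th_{s_i}$ and each $H_{t_j+1}'$ becomes $2(t_j+1)H_{t_j}$, and factoring the scalars $2(t_j+1)$ out of the columns recovers precisely the Wronskian defining $H_M$. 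Thus $H_{M+1}=c\,H_M$ with $c=\pm\prod_j 2(t_j+1)\neq0$. In the second case I would peel off the nonvanishing column $\phi=e^{z^2}$ using the reduction-of-order identity $\Wr[\phi,f_1,\dots,f_m]=\phi^{\,m+1}\Wr[(f_1/\phi)',\dots,(f_m/\phi)']$ and again differentiate, recovering the Wronskian of $H_M$ up to a nonzero scalar; alternatively, this case follows from the first by the conjugation $z\mapsto\mathrm{i}z$, which interchanges $H_n$ and $\th_n$ up to constants and reflects the Maya diagram $M\mapsto\{-1-m:m\notin M\}$, thereby turning a $+1$ shift into a $-1$ shift and swapping the two cases.

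The main obstacle is not analytic but bookkeeping: correctly reading off the transformation of the Frobenius symbol in each of the two cases, and carefully tracking the signs and scalar factors produced by the column reorderings and by multilinearity of the Wronskian, while checking that the resulting constant is genuinely nonzero (which holds because $t_j+1\geq1$ in the first case and $s_i\geq1$ for $i<r$ in the second). Once the identity $e^{z^2}\th_s=\tfrac{\d^s}{\d z^s}e^{z^2}$ is in place, each case collapses to a one-line Wronskian manipulation, so there is no deeper difficulty beyond organizing the case split cleanly.
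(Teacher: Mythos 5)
Your proposal is correct, but note that the paper itself never proves this proposition: it is imported verbatim as Theorem 1 of \cite{gomez2016durfee}, and the only trace of the argument in the present paper is the normalized constant $c_M$ in \eqref{eq:hHdef}. What you have written is essentially a self-contained reconstruction of the cited paper's proof, and it checks out: the reduction to $k=1$ via transitivity and invertibility of nonzero proportionality is sound; the two case splits of the Frobenius symbol under $M\mapsto M+1$ (with $-1\in M$ giving $(s_1-1,\dots,s_r-1\mid 0,t_q+1,\dots,t_1+1)$, and $-1\notin M$ forcing $s_r=0$ and giving $(s_1-1,\dots,s_{r-1}-1\mid t_q+1,\dots,t_1+1)$) are exactly right and consistent with \eqref{eq:indexshift}; and the identities $(e^{z^2}\th_{s-1})'=e^{z^2}\th_s$, $H_{t+1}'=2(t+1)H_t$, $\Wr[1,g_1,\dots,g_m]=\Wr[g_1',\dots,g_m']$ and the reduction-of-order identity do make both cases telescope, yielding $H_{M+1}=(-1)^r\prod_{j=1}^q 2(t_j+1)\,H_M$ in the first case and $H_{M+1}=(-1)^{r-1+q}\bigl(\prod_{i=1}^{r-1}2s_i\bigr)^{-1}H_M$ in the second (here one also uses $\th_n'=2n\th_{n-1}$ and $(e^{-z^2}H_t)'=-e^{-z^2}H_{t+1}$, and the exponential prefactors $e^{-rz^2}$ versus $e^{-(r-1)z^2}$ cancel against the $\phi^{m+1}$ factor, as you anticipated). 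These constants are precisely what the normalization \eqref{eq:hHdef} is designed to absorb. The only blemish is a wording slip in your second case: peeling the column $e^{z^2}$ off the Wronskian of $H_M$ recovers the Wronskian defining $H_{M+1}$, not ``the Wronskian of $H_M$''; the computation you describe is nonetheless the right one, and your alternative route via the conjugation $z\mapsto \mathrm{i}z$ is also valid.
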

In fact, with the following suitable rescaling of $H_M$
 \begin{equation}
    \label{eq:hHdef}
    \widehat{\H}_M(z) = c_M \H_M(z),\qquad c_M=\frac{(-1)^{rq}}{\prod_{1\leq i<j\leq r} (2s_j-2s_i)\prod_{1\leq
        i<j\leq q}
      (2 t_i-2t_j)},
  \end{equation}
we have 
  \begin{equation} \label{eq:HMequiv}
       \widehat{\H}_M(z) =  \widehat{\H}_{M+k}(z),\qquad \forall k \in \Z.
  \end{equation}

This property becomes useful if we observe that in every equivalence class of Maya diagrams related by translations, there is a unique representative where $M$ is in standard form, whose associated $H_M$ is a pure Wronskian determinant of Hermite polynomials. Although we could restrict the analysis without loss of generality to Maya diagrams in standard form and Wronskians of Hermite polynomials,  we will employ the general notation as it brings conceptual clarity to the description of Maya cycles in Section~\ref{sec:cycles} .

To every Maya diagram $M$ we can associate a potential with quadratic growth at infinity in the following manner
\begin{definition}
We define a \textit{rational extension of the harmonic oscillator} as the Schr\"odinger operator 
\begin{eqnarray}
  L_M&=&-D_{zz}+U_M(z) ,\label{eq:LUM}\\
%  \cL&=&-D_z^2+ U_M(z),\\
  U_M(z)& =& z^2 - 2 D_z^2 \log \H_M(z) + 2s_M,  \label{eq:UMdef}
\end{eqnarray}
where $\H_M(z)$ is the associated pseudo-Wronskian \eqref{eq:pWdef1}--\eqref{eq:pWdef2}, and $s_M\in \Z$ is the index of $M$.
\end{definition}

The functions $U_M$ are called rational extensions of the harmonic oscillator because they contain the harmonic term plus a rational term that vanishes for large $z$.  Krein
  derived a general condition \cite{krein57} for regularity of
  potentials constructed via multi-step Darboux transformations, although
  Krein's brief announcement was limited to potentials on  the half-line
  and it did not consider exactly solvable examples.  Adler presented a
  full proof of this regularity result (no poles on the real line) using
  disconjugacy techniques  \cite{adler1994modification}, and  considered the case of the
  harmonic oscillator as a particular example.  The proof of the
  result for potentials defined on all of $\R$ follows from a more
  general theorem that counts the number of real poles of such
  potentials \cite{garcia2015oscillation}.

% Conditions on $M$ for $U_M$ to be regular on the real line,
% \cite{adler1994modification} or to count the number of real poles of
% the potential, \cite{garcia2015oscillation}.

Let us also note that the above pseudo-Wronskians are related to the
$\tau$-functions of the KP hierarchy and that the factorization chain
is equivalent to a chain of Hirota bilinear relations.  For more
details, see the chapter by the present authors in \cite{crc2019}.

\subsection{Trivial monodromy}

\begin{definition}\label{def:ratext}
  A Schr\"odinger operator $L=-D_{zz}+U(z)$ has trivial monodromy at $\xi\in\C$ if the general solution of the equation
  \[L[\psi]=-\psi''+U\psi = \lambda\psi\]
  is meromorphic in a neighbourhood of $\xi$ for all values of $\lambda\in\C$.
  If $L$ has trivial monodromy at every point $\xi\in\C$ we say that $L$ is monodromy-free.
\end{definition}

Duistermaat and Gr\"unbaum proved that the condition that $L$ has trivial monodromy at $\xi\in\C$ is equivalent to certain restrictions on the coefficients of the Laurent series expansion of the potential.

\begin{prop}[Proposition 3.3 in \cite{refDG86}]
  \label{prop:dg}
  Let $U(z)$ be meromorphic in a neighbourhood of $z=\xi$ with Laurent
  expansion
  \[ U(z) = \sum_{j\geq -2} c_j (z-\xi)^j,\quad c_{-2}\neq 0.\] 
Then the Schr\"odinger operator $L=-D_{zz} + U(z)$ has trivial monodromy
  at $z=\xi$ if and only if there exists an integer $\nu \geq 1$ such
  that
  \begin{equation}
    \label{eq:ccond}
    c_{-2} = \nu (\nu+1),\qquad c_{2j-1} = 0,\quad 0\leq j \leq
    \nu. 
  \end{equation}

\end{prop}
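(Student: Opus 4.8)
The plan is to prove Proposition~\ref{prop:dg} by analyzing the local behaviour of solutions to $-\psi'' + U\psi = \lambda\psi$ near the pole $z=\xi$ via the Frobenius method, and to show that the meromorphicity of \emph{all} solutions, for \emph{all} $\lambda$, is equivalent to the vanishing of the odd negative-index coefficients together with the resonance condition $c_{-2}=\nu(\nu+1)$. Without loss of generality I would translate so that $\xi=0$. The indicial equation associated with the regular singular point is $\rho(\rho-1)-c_{-2}=0$; writing $c_{-2}=\nu(\nu+1)$ forces the roots to be $\rho = \nu+1$ and $\rho = -\nu$, differing by the integer $2\nu+1$. Since the exponents differ by an integer, one Frobenius solution behaves like $z^{\nu+1}(1+\cdots)$ and is automatically meromorphic (indeed holomorphic after clearing the pole), while the second solution generically contains a logarithmic term $\log z$ unless a resonance obstruction vanishes. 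The entire content of the proposition is then that trivial monodromy is exactly the absence of this logarithm, and that this absence, imposed for all $\lambda$, pins down the coefficients as in \eqref{eq:ccond}.

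The key steps I would carry out are as follows. First, establish that $c_{-2}$ must have the form $\nu(\nu+1)$ for some integer $\nu\geq 1$: if the two indicial roots do not differ by an integer, the second solution $z^{-\nu}(1+\cdots)$ has a non-integer or irrational exponent and is branched, destroying meromorphicity; requiring both exponents to be integers forces $c_{-2}=\nu(\nu+1)$. Second, normalize the holomorphic-type solution and substitute a formal series $\psi = z^{-\nu}\sum_{k\geq 0} b_k z^k$ (the would-be second solution) into the ODE, collecting powers of $z$. This yields a two-term recurrence of the shape $\big((k-\nu)(k-\nu-1)-c_{-2}\big) b_k = \sum_{j} c_{j-2}\, b_{k-j} + \lambda b_{k-2}$, where the coefficient of $b_k$ on the left vanishes precisely at the resonance $k=2\nu+1$. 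Third, analyze this resonance: meromorphicity requires that the right-hand side also vanish at $k=2\nu+1$, which is the single obstruction to a logarithm-free second solution. Fourth, show that demanding this obstruction vanish \emph{for every} $\lambda\in\C$, rather than for one special value, forces the individual coefficients entering the obstruction to vanish; unwinding the recurrence term by term translates this into the statement $c_{2j-1}=0$ for $0\leq j\leq \nu$.

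The main obstacle, and the technical heart of the argument, is the fourth step: showing that the single scalar resonance condition at $k=2\nu+1$, once required to hold identically in $\lambda$, unpacks into the full list of vanishing conditions $c_{-1}=c_1=c_3=\cdots=c_{2\nu-1}=0$. The resonance coefficient is a polynomial in $\lambda$ whose coefficients are themselves polynomial expressions in $c_{-1},c_0,c_1,\dots,c_{2\nu-1}$, built up through the recurrence; one must verify that setting this polynomial identically to zero is equivalent to the independent vanishing of precisely the odd-indexed $c_j$ in the stated range. I would handle this by an induction on $\nu$ (or on the resonance level), using the structure of the recurrence to isolate, at each order in $\lambda$, a single new odd coefficient as the leading contributor to the obstruction, so that the $\lambda$-dependence decouples the conditions one at a time. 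A clean way to organize the bookkeeping is to note that shifting $\lambda$ by a constant merely reshuffles the $c_{2j}$ even coefficients (the potential admits such a gauge via the energy normalization), so only the odd coefficients can carry genuine monodromy information; this symmetry observation motivates why exactly the odd coefficients appear in \eqref{eq:ccond} and makes the decoupling transparent. One then concludes that the conditions in \eqref{eq:ccond} are both necessary and sufficient, completing the proof.
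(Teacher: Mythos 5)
The first thing to note is that the paper does not prove this statement at all: Proposition~\ref{prop:dg} is imported verbatim as Proposition 3.3 of Duistermaat and Gr\"unbaum \cite{refDG86} and is used later (in the proof of Theorem~\ref{thm:characterization}) as a black box. So your proposal can only be compared with the classical argument of \cite{refDG86}, and in outline your strategy is exactly that argument: regular singular point, indicial roots $\nu+1$ and $-\nu$ (whence the integrality requirement $c_{-2}=\nu(\nu+1)$), a single resonance at level $2\nu+1$, and the requirement that the resonance obstruction vanish identically in $\lambda$.

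There is, however, a genuine gap at precisely the step you call the technical heart. Write the obstruction as $R(\lambda)=\sum_{j=1}^{2\nu+1}\tilde c_{j-2}\,b_{2\nu+1-j}(\lambda)$ with $\tilde c_0=c_0-\lambda$, and suppose $c_{-1}=c_1=\cdots=c_{2r-3}=0$ but $c_{2r-1}\neq 0$. Your plan is to read off $c_{2r-1}$ from the coefficient of $\lambda^{\nu-r}$ in $R$, but that coefficient receives \emph{two} competing contributions: a direct one, $c_{2r-1}\beta_{2(\nu-r)}$, where $\beta_{2s}$ is the leading $\lambda$-coefficient of the even-index $b_{2s}$, and an indirect one propagated through the odd-index chain $b_{2r+1},b_{2r+3},\ldots,b_{2\nu-1}$, whose leading $\lambda$-coefficients are all proportional to $c_{2r-1}$. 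The claim that $c_{2r-1}$ is ``the leading contributor'' is therefore an assertion until one proves these two contributions cannot cancel. This can be done --- for instance, all the inverted factors $k(k-2\nu-1)$ with $0<k<2\nu+1$ are negative, which forces the even-chain constants $\beta_{2s}$ to be strictly positive and the odd-chain constants strictly negative, so their difference is nonzero --- but some such non-cancellation argument must appear, and your proposal contains none. Moreover, the heuristic you offer to make the decoupling ``transparent'' is incorrect as stated: shifting $\lambda$ changes only $c_0$, not the even coefficients generally, and in any case the even coefficients $c_0,c_2,\ldots$ genuinely do enter $R(\lambda)$; that they drop out of the system $R\equiv 0$ is exactly what has to be proven, not a symmetry one can invoke. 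Finally, the sufficiency direction deserves its own (easy) argument, which your outline omits: if all odd $c_j$ with $-1\le j\le 2\nu-1$ vanish, a parity induction on the recurrence gives $b_k=0$ for every odd $k\le 2\nu$, for every $\lambda$, and then $R(\lambda)$ vanishes term by term, so no logarithm appears and all solutions are meromorphic.
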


Oblomkov classified monodromy-free potentials with quadratic growth at
infinity, finding that they can all be obtained by a finite sequence
of rational Darboux transformations applied on the harmonic
oscillator.

\begin{prop}[Theorem 3 in \cite{oblomkov1999monodromy}]\label{prop:Obl}
  The rational extensions of the harmonic oscillator  $L_M$  \eqref{eq:LUM} \eqref{eq:UMdef}
  have trivial monodromy. Conversely, if a  Schr\"odinger operator $L=-D_{zz}+U(z)$ has trivial
  monodromy and the potential has quadratic growth at infinity, then, up to an additive constant, $U=U_M$ for some Maya
  diagram $M$.
\end{prop}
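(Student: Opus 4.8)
The plan is to prove the two implications separately: that every $L_M$ is monodromy-free, and the converse that trivial monodromy together with quadratic growth forces $U=U_M+\mathrm{const}$. \textbf{Forward direction.} The bare oscillator potential $z^2$ is entire, so the equation $-\psi''+z^2\psi=\lambda\psi$ has no finite singular points and all its solutions are entire for every $\lambda\in\C$; hence $-D_{zz}+z^2$ is monodromy-free. By construction via the Crum--Wronskian formula \eqref{eq:UMdef}, each $U_M$ is obtained from $z^2$ by a finite sequence of rational Darboux transformations whose seeds are quasi-rational eigenfunctions (cf. Proposition~\ref{prop:seedfunc}), so it suffices to check that a single Darboux step preserves trivial monodromy. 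This follows from intertwining: if $w_i=\psi_i'/\psi_i$ is the log-derivative of a meromorphic seed eigenfunction, then $-D_z+w_i$ maps solutions of $L_i\psi=\lambda\psi$ to solutions of $L_{i+1}\tilde\psi=\lambda\tilde\psi$, and since this operator has meromorphic coefficients it sends meromorphic solutions to meromorphic ones. For $\lambda\neq\lambda_i$ the map is an isomorphism of the two-dimensional solution spaces; the exceptional value $\lambda=\lambda_i$ is handled by noting that the solutions of $L_{i+1}$ at $\lambda_i$ are built from $1/\psi_i$ and its reduction-of-order partner, which are again meromorphic. Thus trivial monodromy is inherited at each step and $L_M$ is monodromy-free.

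\textbf{Converse, rationality.} Suppose $L=-D_{zz}+U$ is monodromy-free with quadratic growth at infinity. Trivial monodromy at every $\xi\in\C$ forces $U$ to be meromorphic on all of $\C$: for generic $\lambda$ the two local solutions are meromorphic, and $U=\lambda+\psi''/\psi$ is then meromorphic near $\xi$. Quadratic growth means $U$ is also meromorphic at $z=\infty$, so $U$ extends to a meromorphic function on the Riemann sphere and is therefore rational. Using $U\sim z^2$ we may write $U=z^2+bz+r(z)$ with $b\in\C$ and $r$ a proper rational function whose poles are the finite poles of $U$; that the linear term must in fact vanish will emerge from the constraints below.

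\textbf{Converse, local conditions and descent.} At each finite pole $\xi$, Proposition~\ref{prop:dg} applies: the principal part is a double pole with $c_{-2}=\nu_\xi(\nu_\xi+1)$ for an integer $\nu_\xi\geq 1$, and the prescribed odd Laurent coefficients vanish. These integer local characteristics are exactly the data recording how poles are created under rational dressing. The core of the proof is then a descent argument. Using the $e^{\pm z^2/2}$ asymptotics forced by $U\sim z^2$ together with the local pole structure, one produces a quasi-rational eigenfunction $\psi=e^{\pm z^2/2}\times(\text{rational})$ whose log-derivative $w$ satisfies $w\sim\pm z$ at infinity. The Darboux transformation with seed $\psi$ yields $\tilde U=U-2w'$, which is again monodromy-free (by the forward argument), still has quadratic growth (since $w'\sim\pm1$ plus a decaying rational term), and has strictly smaller complexity, e.g. the number of Darboux steps, equivalently the order of the defining Wronskian. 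Iterating, the descent terminates at $z^2+\mathrm{const}$; reversing the chain and recording which eigenfunctions were used as insertions and deletions of boxes in a Maya diagram $M$, one recovers $U=U_M+\mathrm{const}$, with \eqref{eq:UMdef} reproducing the potential at each stage.

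\textbf{Main obstacle.} The genuinely hard part is the descent step: showing that a monodromy-free potential of quadratic growth which is not the bare oscillator always admits an \emph{admissible} lowering Darboux transformation, and that the process is well-founded. This requires controlling the full space of quasi-rational eigenfunctions and their asymptotics simultaneously at infinity (to preserve the $z^2$ leading term and, in particular, to force the linear coefficient $b$ to vanish) and at the finite poles (to preserve the integer characteristics of Proposition~\ref{prop:dg} while strictly reducing complexity). This is precisely the content of Oblomkov's theorem, and it is where essentially all of the work lies.
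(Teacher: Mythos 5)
Your proposal is not a complete proof, and the decisive gap is in the converse direction. Note first that the paper itself offers no proof of this proposition: it is quoted as Theorem 3 of \cite{oblomkov1999monodromy} and used as an external input throughout Section~\ref{sec:Maya}, so the only standard your blind attempt can be held to is self-containedness --- and there it fails. Everything before your descent paragraph (meromorphy of $U$ on $\C$, rationality from quadratic growth, the local conditions of Proposition~\ref{prop:dg} at each pole) is routine preparation. The actual content of the theorem --- that any monodromy-free potential of quadratic growth other than the pure oscillator admits a quasi-rational eigenfunction whose Darboux transform remains in the class and strictly decreases a well-founded complexity measure, that the linear term $bz$ is thereby excluded, and that unwinding the terminated chain yields exactly a potential of the form \eqref{eq:UMdef} indexed by a Maya diagram --- is never argued; you defer it, in your own words, to ``precisely the content of Oblomkov's theorem.'' Since that theorem is the statement being proved, the argument is circular: you have organized the claim into a plausible strategy (which does roughly mirror how such proofs go), but the step where, as you say, ``essentially all of the work lies'' is exactly the step you skip.

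There is also a smaller, fixable gap in the forward direction. The intertwining argument is fine for $\lambda\neq\lambda_i$, but at the exceptional value $\lambda=\lambda_i$ your claim that the reduction-of-order partner $\psi_i^{-1}\int\psi_i^2$ is meromorphic is not automatic: it holds if and only if $\Res_\zeta \psi_i^2=0$ at every pole $\zeta$ of $\psi_i$, and squares of meromorphic functions do not in general have vanishing residues --- this residue condition is itself one of the trivial-monodromy constraints you are trying to establish for $L_{i+1}$, so invoking it here begs the question. A clean repair is available: the poles of $U_{i+1}$ are of second order, hence regular singular points, and the monodromy of $L_{i+1}\psi=\lambda\psi$ around any such pole depends analytically on $\lambda$; triviality for all $\lambda\neq\lambda_i$ therefore forces triviality at $\lambda=\lambda_i$ by continuity. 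Alternatively, one can verify the conditions \eqref{eq:ccond} of Proposition~\ref{prop:dg} for $U_{i+1}$ directly, which is in the spirit of how the paper handles the analogous local analysis for dressing chains in Section~\ref{sec:veselov}.
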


We see thus that the class of monodromy-free potentials with quadratic growth at infinity coincides with the class of rational extensions of the harmonic oscillator given in Definition~\ref{def:ratext}. To every Maya diagram $M$ there corresponds a monodromy-free Schr\"odinger operator whose potential is a rational extension of the harmonic oscillator. The set of rational Darboux transformations preserves this class of operators. More specifically, a single step Darboux transformation \eqref{eq:Dxform} on a Schr\"odinger operator of the form \eqref{eq:LUM}-\eqref{eq:UMdef} leads to another rational extension whose Maya diagram differs from the previous one by a single flip.

\begin{definition}
Given a Maya diagram $M$, we define the flip at position $m\in \Z$ to be the involution
\begin{equation}\label{eq:flipdef}
 \phi_m : M \mapsto
\begin{cases}
   M \cup \{ m \}, & \text{if}\quad m\notin M, \\
   M \setminus \{ m \},\quad & \text{if}\quad m\in M.
\end{cases}
\end{equation}
\end{definition}
\noindent
In the first case, we say that $\phi_m$ acts on $M$ by a
state-deleting transformation ($\emptybox\to \boxdot$).  In the second
case, we say that $\phi_m$ acts by a state-adding transformation
($\boxdot\to\emptybox$).

\begin{prop}[Proposition 3.11 \cite{clarkson2020cyclic}]
Two Maya diagrams $M, M'$ are related by a flip \eqref{eq:flipdef}
  if and only if their associated rational extensions $U_M,U_{M'}$, see \eqref{eq:UMdef}, are
  connected by a Darboux transformation \eqref{Ui-Ui+1}.
\end{prop}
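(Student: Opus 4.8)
The plan is to read the potential $U_M$ in \eqref{eq:UMdef} as the output of a multi-step Darboux transformation applied to the harmonic oscillator, and to observe that a single flip corresponds to inserting or deleting exactly one seed function from the governing Wronskian. The central tool is the Crum--Darboux factorization identity: if $\psi$ is an eigenfunction of $\cL_M=-D_{zz}+U_M$ at eigenvalue $\lambda$ and $w=(\log\psi)'$, then $\cL_M$ factors as in \eqref{eq:Dxform} and its Darboux partner has potential $U_M-2w'$, which is precisely relation \eqref{Ui-Ui+1}. Thus connecting $U_M$ to $U_{M'}$ by a Darboux transformation is the same as exhibiting such a seed eigenfunction, and the whole proposition reduces to identifying \emph{which} seed eigenfunctions of $\cL_M$ produce another rational extension. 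These are classified by Proposition~\ref{prop:seedfunc}, which lists all quasi-rational eigenfunctions $\psi_m$ of $\cL_M$ and assigns to each an integer $m\in\Z$.

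For the forward implication I would take $M'=\phi_m(M)$ and compare the pseudo-Wronskians $H_M$ and $H_{M'}$ directly through \eqref{eq:pWdef1}. Adjoining the integer $m$ to $M$ (a state-deleting flip, $\emptybox\to\boxdot$) augments the Wronskian by the single seed $\psi_m$, while removing $m$ (a state-adding flip, $\boxdot\to\emptybox$) deletes the corresponding column. By the Crum ratio formula the quotient $\psi=H_{M'}/H_M$, carried through the appropriate gauge factor $e^{\mp z^2}$, is an eigenfunction of $\cL_M$, and a direct computation with \eqref{eq:UMdef} gives $U_M-U_{M'}=-2D_z^2\log(H_M/H_{M'})=2w'$ with $w=(\log\psi)'$, which is exactly \eqref{Ui-Ui+1}. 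The index shift $s_{M'}=s_M\pm1$ produced by the flip contributes only an additive constant to the potentials, and this constant is absorbed into the eigenvalue $\lambda$ through the Riccati relation \eqref{eq:Riccati}; hence the two operators are genuinely Darboux-connected.

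For the converse I would start from a Darboux transformation linking $U_M$ and $U_{M'}$, with seed $\psi$ and $w=(\log\psi)'$ satisfying $U_M-U_{M'}=2w'$. Since $U_M$ and $U_{M'}$ are both rational extensions of the harmonic oscillator (Proposition~\ref{prop:Obl}), their difference is a rational function decaying at infinity, so $w'$ and hence $w$ are rational; consequently $\psi$ is quasi-rational. Proposition~\ref{prop:seedfunc} then forces $\psi$ to be, up to a scalar, one of the eigenfunctions $\psi_m$, and performing the Darboux transformation with this particular $\psi_m$ returns the potential $U_{\phi_m(M)}$ by the computation of the forward direction. Matching potentials and using that $H_M$ and $H_{M+k}$ agree up to a constant (Proposition~\ref{prop:Mshift}) yields $M'=\phi_m(M)$ within the appropriate translation class, which completes the equivalence.

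The main obstacle is the converse, and specifically the step that pins down the seed: one must argue that the only seed eigenfunctions whose Darboux transformation remains inside the class of rational extensions are the quasi-rational $\psi_m$, since a generic eigenfunction of $\cL_M$ would introduce an essential singularity or spoil the quadratic growth at infinity. This is exactly the content of Proposition~\ref{prop:seedfunc}, so the remaining work is bookkeeping: tracking how state-adding versus state-deleting flips alter the Frobenius data $(s_1,\dots,s_r\mid t_q,\dots,t_1)$ and the index $s_M$, and confirming that each case reproduces the correct sign in \eqref{Ui-Ui+1} together with the correct additive constant.
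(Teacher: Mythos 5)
This paper never proves the proposition itself: it is quoted, with citation, from Proposition 3.11 of \cite{clarkson2020cyclic}, so the only in-paper material to compare against is Proposition~\ref{prop:seedfunc} and the factorization-chain computation \eqref{eq:wiproof}--\eqref{eq:liproof} inside its proof. Your strategy is exactly that machinery, and it is the right route: the forward direction is the statement that $\psi_{M,m}=\exp(\tfrac12\sigma z^2)\,\H_{\phi_m(M)}/\H_M$ (note the gauge factor is $e^{\sigma z^2/2}$, not $e^{\mp z^2}$) is an eigenfunction of $L_M$, combined with the index bookkeeping $s_{\phi_m(M)}=s_M-\sigma$, which makes $U_M-U_{\phi_m(M)}=2w'$ hold exactly, constants included; the converse reduces to showing that the seed of the Darboux transformation is quasi-rational and then invoking the classification of Proposition~\ref{prop:seedfunc}.

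Two steps in your converse need repair. First, ``$w'$ rational, hence $w$ rational'' is a non sequitur as written: an antiderivative of a rational function need not be rational (e.g. $\log z$). The fix is the Riccati relation you already have available: $w^2=U_M-\lambda-w'$ is rational, hence $(w^2)'=2ww'$ is rational, and since $w'\not\equiv 0$ (otherwise $w$ would be constant and $U_M=\lambda+w^2$ constant, contradicting quadratic growth), $w=(w^2)'/(2w')$ is rational. Second, your last step produces $U_{M'}=U_{\phi_m(M)}$ and then asserts $M'=\phi_m(M)$ ``within the appropriate translation class'' by appeal to Proposition~\ref{prop:Mshift}; but that proposition goes the wrong way (translation-equivalent diagrams give proportional pseudo-Wronskians), whereas you need injectivity of $M\mapsto U_M$. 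This can also be patched with the tools at hand: equality of potentials forces $D_z^2\log\bigl(\H_{M'}/\H_{\phi_m(M)}\bigr)$ to be constant, and polynomiality of pseudo-Wronskians then forces the constant to vanish, so $\H_{M'}\propto \H_{\phi_m(M)}$ and $s_{M'}=s_{\phi_m(M)}$; moreover, since the two Maya diagrams define the same operator, the uniqueness clause of Proposition~\ref{prop:seedfunc} (for each eigenvalue $2m+1$ there is a unique quasi-rational eigenfunction up to scale) forces the gauge signs $\sigma$ to agree for every $m\in\Z$, i.e. $m\in M'\iff m\in\phi_m(M)$ for all $m$, whence $M'=\phi_m(M)$ exactly. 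With these two repairs your argument is complete and is essentially the intended proof.
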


Exceptional orthogonal poynomials are intimately related with Darboux transformations of Schr\"odinger operators, \cite{gomez2013conjecture,garcia2016bochner,garcia2021sigma}. In fact, the bound states of  operators \eqref{eq:LUM}-\eqref{eq:UMdef} essentially define exceptional Hermite polynomials, \cite{gomez2013rational,gomez2019ladder,gomez2019corrigendum}.
More generally, it will be useful to characterize the class of quasi-rational eigenfunctions of \eqref{eq:LUM}-\eqref{eq:UMdef}. We recall that $f(z)$ is \textit{quasi-rational }if $(\log f)'$ is a rational function of $z$.

\begin{prop}\label{prop:seedfunc}
  Up to a scalar multiple, every quasi-rational eigenfunction of the
  Schr\"odinger operator $\cL_M=-D_{zz}+U_M(z)$ with $U_M$ as in
  \eqref{eq:UMdef} has the form
  \begin{equation}
    \label{eq:seedfunc}
    \psi_{M,m} = \exp(\tfrac12\sigma z^2)\frac{\H_{\phi_m(M)}(z)}{\H_M(z)}, \qquad m\in \Z,
  \end{equation}
  with 
  \[ \sigma = \begin{cases}
      -1, & \text{if}\quad m\notin M, \\
      +1,\quad & \text{if}\quad m\in M,
    \end{cases} 
  \]

  Explicitly, we have
  \begin{equation}
    \label{eq:Mneigenfunc}
    \cL_M \psi_{M,m}  = (2m+1) \psi_{M,m} ,\quad m\in \Z.    
  \end{equation}
\end{prop}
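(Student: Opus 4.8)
The plan is to realize $\cL_M$ as a Darboux--Crum transform of the harmonic oscillator $\cL_{M_0}=-D_{zz}+z^2$ and to transport the easily classified quasi-rational eigenfunctions of $\cL_{M_0}$ through the transformation. Two things must be established: that each $\psi_{M,m}$ in \eqref{eq:seedfunc} is a quasi-rational eigenfunction with $\cL_M\psi_{M,m}=(2m+1)\psi_{M,m}$, and that there are no others.

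First I would dispose of the base case $M_0=\Z_{<0}$, for which $\H_{M_0}=1$ and $U_{M_0}=z^2$ by \eqref{eq:UMdef}. Writing a quasi-rational eigenfunction as $\psi=q(z)e^{g(z)}$ with $(\log\psi)'$ rational, and using that the entire coefficient $z^2$ forces $\psi$, hence $g$ and $q$, to be polynomial, a degree analysis of the Riccati equation $v'+v^2=z^2-\lambda$, $v=(\log\psi)'$, pins down $g=\pm\tfrac12 z^2$ (up to a constant) and identifies $q$ with a Hermite or conjugate Hermite polynomial via \eqref{eq:hermde} and \eqref{eq:thndef}. One finds exactly the two families $\chi_m:=H_m e^{-z^2/2}$ ($m\ge 0$, eigenvalue $2m+1$) and $\th_k e^{z^2/2}$ ($k\ge 0$, eigenvalue $-(2k+1)$). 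Re-indexing the second family by $m=-k-1<0$, the quasi-rational eigenfunctions of $\cL_{M_0}$ are precisely $\{\chi_m:m\in\Z\}$ with $\cL_{M_0}\chi_m=(2m+1)\chi_m$, which is \eqref{eq:seedfunc}--\eqref{eq:Mneigenfunc} for $M_0$.

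Next I would identify $\cL_M$ as the Crum transform of $\cL_{M_0}$ on an explicit mixed set of seeds. With $(s_1,\dots,s_p\mid t_q,\dots,t_1)$ the Frobenius symbol of $M$, set
\[ W:=\Wr\big[\th_{s_1}e^{z^2/2},\dots,\th_{s_p}e^{z^2/2},\;H_{t_q}e^{-z^2/2},\dots,H_{t_1}e^{-z^2/2}\big]. \]
Factoring the common $e^{-z^2/2}$ out of all $p+q$ columns and comparing with \eqref{eq:pWdef1} yields the key identity $W=e^{-\frac12 s_M z^2}\H_M$, so that $U_M=z^2-2D_z^2\log W$ by \eqref{eq:UMdef}; thus $\cL_M$ is the Darboux--Crum dressing of $\cL_{M_0}$ on the seeds in $W$, whose eigenvalues are $2t_j+1$ and $-(2s_i+1)$. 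Crum's formula then produces every eigenfunction of $\cL_M$ as a ratio $W'/W$: adjoining a new seed $\chi_m$ (for $m\notin M$, $m\ge 0$, or for $m\in M$, $m<0$ via the conjugate family) or deleting an existing seed (the two remaining cases) corresponds in each case to the single flip $\phi_m$, giving $W'=e^{-\frac12 s_{\phi_m(M)}z^2}\H_{\phi_m(M)}$. Since a flip changes the index by $s_{\phi_m(M)}=s_M+1$ when $m\notin M$ and $s_M-1$ when $m\in M$, the exponential prefactors combine to $e^{\frac12\sigma z^2}$ with exactly the sign in the statement, so $W'/W=\psi_{M,m}$ up to the harmless sign from reordering columns, while the eigenvalue (that of the adjoined or deleted seed) equals $2m+1$ in all four cases. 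This proves \eqref{eq:seedfunc} and \eqref{eq:Mneigenfunc}.

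Finally, for completeness I would argue that the Darboux--Crum correspondence restricts to a bijection on quasi-rational eigenfunctions. A single step \eqref{eq:Dxform} has rational $w_i=\psi_i'/\psi_i$ whenever its seed $\psi_i$ is quasi-rational; the intertwiner $-D_z+w_i$ then sends a quasi-rational eigenfunction $\psi$ at an eigenvalue $\ne\lambda_i$ to $\psi\cdot(w_i-(\log\psi)')$, again quasi-rational, while at the seed eigenvalue the new eigenfunction is $1/\psi_i$, also quasi-rational; the inverse step preserves quasi-rationality for the same reason. Iterating along a factorization of $\cL_M$ into single flips, the quasi-rational eigenfunctions of $\cL_M$ correspond bijectively to those of $\cL_{M_0}$, namely the $\chi_m$. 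As the odd integers $2m+1$ are distinct and each carries a one-dimensional space of quasi-rational eigenfunctions for $\cL_{M_0}$ and none at non-odd values, the same holds for $\cL_M$, so $\{\psi_{M,m}:m\in\Z\}$ exhausts them. I expect this completeness step to be the main obstacle: one must rigorously classify the quasi-rational eigenfunctions of the harmonic oscillator and, above all, handle the bijection at the seed eigenvalues with care --- where the intertwiner annihilates $\psi_i$ while the matching eigenfunction reappears as $1/\psi_i$ --- to guarantee that no spurious quasi-rational eigenfunction is created or lost.
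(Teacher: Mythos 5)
Your strategy coincides with the paper's: realize $\cL_M$ as the endpoint of a chain of single-flip Darboux transformations based at the harmonic oscillator, classify the quasi-rational eigenfunctions at the bottom, and transport along the chain. The forward direction you organize as one Crum transformation on the mixed seed set read off the Frobenius symbol, via the identity $W=e^{-\frac12 s_M z^2}\H_M$, where the paper instead runs an induction on single flips (equations \eqref{eq:wiproof}--\eqref{eq:liproof}); these are equivalent, your four-case bookkeeping of flips, signs and eigenvalues is correct, and your re-indexing $m=-k-1$ even avoids a small slip in the paper, whose formula $\th_{-m}e^{z^2/2}$ for $m<0$ should read $\th_{-m-1}e^{z^2/2}$. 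Your converse is also the paper's converse in different clothing: the deflation \eqref{eq:hwn-1} is precisely the log-derivative of $(D_z+w_{n-1})\hat\psi$, i.e.\ your adjoint-intertwiner transport written in Riccati variables.

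The genuine issue is the one you flag yourself, and you should know that it cannot be waved through: the claim that the chain induces a bijection on quasi-rational eigenfunctions fails, as stated, at the seed eigenvalues. If $\lambda=\lambda_i$, the operator $D_z+w_i$ has rank one on the two-dimensional $\lambda_i$-eigenspace of $L_{i+1}$: it kills $1/\psi_i$ and sends every other solution $a/\psi_i+b\,\eta_{i+1}$, where $\eta_{i+1}=\psi_i^{-1}\!\int\psi_i^2$, to the multiple $b\,\psi_i$, which is quasi-rational. So downward transport cannot distinguish a hypothetical spurious quasi-rational eigenfunction (one with $b\neq0$) from a legitimate one; to exclude it you would have to prove separately that $\int\psi_i^2$ is not quasi-rational modulo additive constants, equivalently that $T'+2\sigma zT=\bigl(\H_{M_{i+1}}/\H_{M_i}\bigr)^2$ has no rational solution $T$, which is not at all formal. (The paper's own proof is silently thin at exactly the same spot: when $\hat\lambda=\lambda_{n-1}$ the constant $a_{n-1}$ vanishes, \eqref{eq:hwn-1} collapses to $\hw_{n-1}=w_{n-1}$, or is undefined when $w_{n-1}+w_n\equiv0$, and ``applying the inverse'' is then meaningless.) There is, however, a clean way to close the gap without that analytic statement: first use your asymptotic analysis at infinity to show that any quasi-rational eigenvalue of $\cL_M$ must be an odd integer $2m+1$; then re-base the chain at the translated trivial diagram $\Z_{<0}+j$, whose potential $z^2+2j$ is still an oscillator, choosing $j>m$ if $m\in M$ and $j\le m$ if $m\notin M$, so that $m\notin M\ominus(\Z_{<0}+j)$. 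Since a flip at position $\mu$ always carries eigenvalue $2\mu+1$, the value $2m+1$ then differs from every seed eigenvalue of the chain, each transport step is an honest linear isomorphism of eigenspaces preserving quasi-rationality in both directions, and completeness follows with no degenerate cases. With that modification your argument is complete; without it, the ``main obstacle'' you name is a real hole, not merely a technicality.
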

\begin{proof}
  We first prove that \eqref{eq:seedfunc} implies
  \eqref{eq:Mneigenfunc}. Consider a sequence of Maya diagrams
  $M_0,\ldots, M_n, M_{n+1}$ where $M_0 = \Z_-$ is the trivial Maya
  diagram,
  \[  M_n=M,\qquad  M_{n+1} = \phi_m(M),\]
   and each Maya diagram
  \[ M_{i+1}=\phi_{\mu_i}(M_i),\qquad  i=0,\ldots, n-1, \] differs from the
  preceeding one $M_i$ by a single flip at position $\mu_i\in \Z$.
  Let
  \[ \sigma_i=
    \begin{cases}
      -1,& \text{ if }\mu_i\notin M_i\\
      +1, & \text{ otherwise}.
    \end{cases}\]
  The unique quasi-rational eigenfunctions  \cite{erdelyiv1} of the
  classical harmonic oscillator
  operator $L_0 = -D_{zz} + z^2$ are
  \[ L_0 \psi_m = (2m+1) \psi_m,\]
  where
  \[
    \psi_m(z) =
    \begin{cases}
      H_m(z) e^{-z^2/2} & \text{ if } m\ge 0,\\
      \th_{-m}(z) e^{z^2/2} & \text{ if } m<0.
    \end{cases}
  \]
  A straightforward induction shows that
  $L_i = L_{M_i},\; i=0,\ldots, n+1$ is a factorization chain with the
  corresponding
  \begin{align}\label{eq:wiproof}
    w_i(z)
    &= \sigma_{i} z+ \frac{\H_{M_{i+1}}'(z)}{\H_{M_{i+1}}(z)}-
      \frac{\H_{M_{i}}'(z)}{\H_{M_{i}}(z)},\\
    \lambda_i &= 2\mu_i+1.\label{eq:liproof}
  \end{align}
Since $w_{n}$ is the log-derivative of $\psi_{M,m}$, the eigenvalue
  relation \eqref{eq:Mneigenfunc} follows immediately.

  Conversely, suppose that
  \[ L_M \hat\psi = \hat\lambda \hat\psi,\] and that
  $\hw(z)=\log(\hat\psi(z))'$ is a rational function.  As above, let
  $M_0,\ldots, M_n=M$ be a sequence of Maya diagrams such that $M_0$
  is trivial and 
  $M_{i+1} = \phi_{\mu_i}(M_i),\; i=0,\ldots, n-1$.  Let
  $L_0,\ldots, L_n = L_M$ be the corresponding factorization chain of Schr\"odinger operators and $(w_i|a_i)_{i=0}^{n-1}$ as in \eqref{eq:wiproof}-\eqref{eq:liproof} with $a_i=\lambda_{i+1}-\lambda_i$ be
  the corresponding dressing chain.  We can extend the chain by setting
  $w_n=\hw$ and $a_n= \hat\lambda-\lambda_n$.  Let
  \begin{equation}
    \label{eq:hwn-1}
     \hw_{n-1} = w_{n-1} + \frac{a_{n-1}}{w_{n-1}+ w_n}, \quad
    \ha_{n-2} = a_{n-2}+ a_{n-1},
  \end{equation}
  and observe that
  \begin{align*}
    w_{n-2}'&+\hw_{n-1}' + \hw_{n-1}^2- w_{n-2}^2\\
    &=
      w_{n-2}'-\frac{a_{n-1}}{w_{n-1}+w_n}\frac{(w_{n-1}+w_n)'}{w_{n-1}+w_n}
       + \lp w_{n-1} + \frac{a_{n-1}}{w_{n-1}+ w_n}\rp^2- w_{n-2}^2\\
    &=
      w_{n-2}'-\frac{a_{n-1}}{w_{n-1}+w_n}\lp w_{n-1}- w_n +
      \frac{a_{n-1}}{w_{n-1}+w_n}\rp 
       + \lp w_{n-1} + \frac{a_{n-1}}{w_{n-1}+ w_n}\rp^2- w_{n-2}^2\\
    &=a_{n-1}+ a_{n-2} = \ha_{n-2}.
  \end{align*}
  In this way we obtain a shorter dressing chain
  $w_0,\ldots,w_{n-2},\hw_{n-1}$ where all of the components are
  rational functions.  Continuing this argument inductively we arrive at a
  rational function $\tw(z)$ that satisfies the Ricatti equation
  \[ \tw' + \tw^2 = z^2- \tilde\lambda,\]
  and is related to $\hw$ by a sequence of rational
  transformations \eqref{eq:hwn-1}.  We conclude that
  $\tw(z)$ is the log-derivative of a quasi-rational eigenfunction of
  the classical harmonic oscillator $L_0$, and hence either
  \[ \tw = -z + \frac{H_{m}'}{H_{m}},\quad m\ge 0 \]
  or
  \[ \tw = z  + \frac{\th_{-m}'}{\th_{-m}},\quad m< 0. \]
  Successively applying the inverse of the rational transformation
  \eqref{eq:hwn-1} we conclude that
  \[ \hw = \pm z + \frac{H_{\phi_m(M)}'}{H_{\phi_m(M)}} -
    \frac{H_M'}{H_M}.\]
  Therefore, up to a non-zero scalar multiple the corresponding
  $\hat\psi$ must have the form \eqref{eq:seedfunc}.  
\end{proof}

Note that this characterization covers all quasi-rational
eigenfunctions, not just the square integrable ones. For our purpose
of classifying rational solutions to $(2n+1)$-cyclic dressing chains
this is the relevant class, and square integrability of the
eigenfunctions plays no role. Therefore, we employ the term
\textit{eigenfunction} in this formal sense, as solutions to the
eigenvalue problem.

%%%%%%%%%%%%%%%%%%%%%%%%%%%%
\section{Characterization of rational solutions to odd-cyclic dressing chains }\label{sec:veselov}
%%%%%%%%%%%%%%%%%%%%%%%%%%%%

In this section we state and prove the main result that allows the
classification of rational solutions to the $A_{2n}$-\p\ system,
namely that all of them belong to the class of rational extensions of
the harmonic oscillator. Most contents of this Section follow closely
the results obtained by Veselov in \cite{veselov2001stieltjes},
adapting the notation to our needs and providing further proofs for
intermediate results where we found it necessary.

We start by proving that the only possible poles of $w_i$ are simple, and growth at infinity is at most linear.

\begin{prop}\label{prop:wform}

 If $(w_0,\dots,w_{2n}|a_0,\dots,a_{2n})$  is a rational solution of  a $(2n+1)$-cyclic  dressing chain, then each function $w_i$  necessarily has the form
  \begin{equation}
    \label{eq:wiform}
    w_i = \pm A z + B_i + \sum_{j=1}^N \frac{A_{ij}}{z-\zeta_j},\quad
    A,B_i,A_{ij},\zeta_j\in \C,\; A\neq 0 \qquad i=0,1,\ldots,2n.  
  \end{equation}
\end{prop}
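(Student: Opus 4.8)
The plan is to use only two structural facts: that each $w_i$ satisfies the dressing-chain relation \eqref{eq:wfchain} with a \emph{constant} right-hand side $a_i$, and that the sum of the $w_i$ is the linear function \eqref{eq:wsum}. Since each $w_i$ is rational it has a partial-fraction decomposition $w_i = P_i(z) + \sum_j(\text{principal part at }\zeta_j)$, so the statement reduces to showing (a) every finite pole is simple, and (b) the polynomial part is affine, $P_i = \pm A z + B_i$, with a common leading coefficient $A\neq 0$ up to sign. I would prove both from the same mechanism: a leading-order balance in \eqref{eq:wfchain} forces the dominant coefficients of the $w_i$ to agree up to sign, after which the first integral \eqref{eq:wsum} together with the \emph{oddness} of $2n+1$ rules out any higher-order pole or super-linear growth.

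For the poles, I would argue by contradiction. Suppose $P:=\max_i\operatorname{ord}_\zeta w_i\ge 2$ at a finite $\zeta$, and write $c_i$ for the coefficient of $(z-\zeta)^{-P}$ in $w_i$ (with $c_i=0$ when $\operatorname{ord}_\zeta w_i<P$). In \eqref{eq:wfchain} the term $(w_i+w_{i+1})'$ has a pole of order at most $P+1$, while $w_{i+1}^2-w_i^2$ has a pole of order at most $2P$; since $P\ge 2$ gives $2P>P+1$, the coefficient of $(z-\zeta)^{-2P}$ comes solely from $w_{i+1}^2-w_i^2$ and must vanish, so $c_{i+1}^2=c_i^2$ for every $i$. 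Propagating this cyclically shows all $|c_i|$ are equal, and they are nonzero because the maximum $P$ is attained; hence every $w_i$ has a pole of order exactly $P$ with $c_i=\eta_i c_0$, $\eta_i\in\{\pm1\}$. But \eqref{eq:wsum} exhibits $\sum_i w_i$ as the pole-free linear function $-\tfrac12\Delta z$, so the principal parts cancel, and comparing the coefficient of $(z-\zeta)^{-P}$ gives $c_0\sum_{i=0}^{2n}\eta_i=0$, i.e. $\sum_i\eta_i=0$. This is impossible, since a sum of an odd number $2n+1$ of terms $\pm1$ is odd. Hence $P\le 1$ and all poles are simple.

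The behaviour at infinity is handled identically, with pole order replaced by degree. Put $d=\max_i\deg w_i$ and suppose $d\ge 1$; writing $\gamma_i$ for the coefficient of $z^d$ in $w_i$, the same subdominance of the derivative term ($2d>d-1$) forces the degree-$2d$ part of $w_{i+1}^2-w_i^2$ to vanish, so $\gamma_{i+1}^2=\gamma_i^2$, and cyclically $\gamma_i=\epsilon_i A$ with $A>0$, $\epsilon_i\in\{\pm1\}$. Then $\sum_i\gamma_i=A\sum_i\epsilon_i\neq 0$ (again by oddness), so $\sum_i w_i$ has degree exactly $d$; but \eqref{eq:wsum} forces this degree to be $1$, whence $d=1$. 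This yields $w_i=\pm A z+B_i+\sum_j A_{ij}/(z-\zeta_j)$. Finally $A\neq 0$: the correspondence of Proposition~\ref{prop:wtof} requires $\Delta\neq 0$, so \eqref{eq:wsum} genuinely has degree $1$, forcing $d\ge 1$ and hence $A\neq 0$; the degenerate case $\Delta=0$, in which all $w_i$ are bounded, does not arise for the $A_{2n}$-\p\ system.

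Once set up the computation is elementary, so there is no deep obstacle; the points that need care are (i) verifying that $(w_i+w_{i+1})'$ is \emph{strictly} subdominant, which is exactly what isolates the thresholds $P\ge 2$ and $d\ge 1$ and explains why simple poles and linear growth are \emph{not} excluded, and (ii) the cyclic propagation $c_{i+1}^2=c_i^2$ guaranteeing that every index attains the maximal order with a nonzero leading coefficient, so that the parity argument via \eqref{eq:wsum} applies. The crux is precisely the use of the oddness of $2n+1$: a signed sum of an odd number of unit-magnitude terms cannot vanish, and this is what forbids higher-order poles and super-linear growth.
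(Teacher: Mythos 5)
Your proof is correct, and its endgame --- propagating the leading coefficients around the cycle and then using the oddness of $2n+1$ together with the first integral \eqref{eq:wsum} to force a contradiction --- is exactly the mechanism the paper uses. Where you genuinely differ is in how the propagation $c_{i+1}=\pm c_i$ is obtained. The paper first passes to the variables $f_i$ (a rescaled $w_i+w_{i+1}$) and uses the identities \eqref{eq:wi1}--\eqref{eq:wi2}, which express $2cw_i(cz)$ and $2cw_{i+1}(cz)$ through $f_i$, $f_i'/f_i$ and $\alpha_i/f_i$; it then case-splits on whether $f_i$ or $\alpha_i/f_i$ carries the dominant behaviour, concluding $cw_{i+1}(cz)=\pm A_i z^{k_i}+O(z^{k_i-1})$ at infinity, and similarly $\pm A_{ij}(z-\zeta_j)^{-\ell_i}$ at a pole. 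You instead extract the top coefficient, at order $(z-\zeta)^{-2P}$ or $z^{2d}$, directly from \eqref{eq:wfchain}, observing that the derivative term is strictly subdominant precisely when $P\ge 2$ (resp.\ $d\ge 1$), which yields $c_{i+1}^2=c_i^2$ with no change of variables and no case analysis. Your route is more elementary and has a small robustness advantage: the paper's formulas require dividing by $f_i$, hence implicitly assume $f_i\not\equiv 0$, i.e.\ they silently exclude the degenerate possibility $w_{i+1}\equiv -w_i$, which your argument handles automatically. Both proofs share the same implicit hypothesis $\Delta\neq 0$: the paper buries it in $c^2=-1/\Delta$, while you state it explicitly and correctly note that it is what forces $A\neq 0$ (for $\Delta=0$ there really are rational chains, e.g.\ constant $w_i$, that violate \eqref{eq:wiform}), so making it explicit is if anything an improvement on the paper. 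One cosmetic slip: the coefficients here are complex, so in the statement ``$\gamma_i=\epsilon_i A$ with $A>0$'' you should write $A\neq 0$ rather than $A>0$; nothing else in the argument is affected.
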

% That is to say that all poles of $w_i$ are simple its growth as
% $z\to \infty$ is at most linear.
\begin{proof}
  We can rewrite the dressing chain equations \eqref{eq:wfchain} as
  \[ f_i' + f_i d_i = \alpha_i,\qquad f_i' \neq 0,\]
  where 
  \[ f_i(z) = c(w_i+w_{i+1})(cz),\quad d_i(z) = c(w_i-w_{i+1})(cz),\quad c^2=-1/\Delta.\]
  Hence, we can write
  \begin{align}
    \label{eq:wi1}
      2cw_i(cz) &= f_i(z)  -\frac{f_i'(z)}{f_i(z)} + \frac{\alpha_i}{f_i(z)}\\
    % \label{eq:wi2}
    %   &= f_{i-1}   +\frac{f_{i-1}'}{f_{i-1}} -
    %   \frac{\alpha_{i-1}}{f_{i-1}}\\
      \label{eq:wi2}
      2cw_{i+1}(cz) &= f_i(z)  +\frac{f_i'(z)}{f_i(z)} - \frac{\alpha_i}{f_i(z)}.
  \end{align}
  Suppose that each $w_i$ in the chain has the following behaviour for
  large $z$
  \[ c w_i(cz) = A_i z^{k_i} + O(z^{k_i-1}),\quad z\to\infty,\qquad A_i\neq
    0,\quad i=0,1,\ldots, n.\]

  Our first claim is that $k_i \geq 0$ for all $i$.  This follows by
  inspection of \eqref{eq:wi1}.  Our second claim is that that
  $k_{i+1} = k_i$ for all $i$.  If $k_i=0$ for every $i$, then the
  claim follows trivially.  Suppose then that $k_i >0$ for at least
  one $i$.  By \eqref{eq:wi1}, it is clear that as $ z\to\infty $
  either
  \begin{align*}
    f_i &= 2 A_i z^{k_i} + O(z^{k_i-1}),\\   \intertext{or}
   \frac{\alpha_i}{f_i}&= 2 A_i z^{k_i} + O(z^{k_i-1}),\quad 
                          \alpha_i\neq 0.
  \end{align*}
  In any of the two cases we have
  \begin{equation}
    \label{eq:w+1pmai}
     c w_{i+1}(cz) = \pm A_i z^{k_i} + O(z^{k_i-1}), \quad z\to\infty,
  \end{equation}
  thereby proving the claim.  Our third claim is that $k_i=1$ for all
  $i$. This follows because the sum in \eqref{eq:wsum} involves an odd
  number of terms.  Finally, we conclude that $A_{i+1} = \pm A_i$ by
  \eqref{eq:w+1pmai}.

  Let $\zeta_j,\; j=1,\ldots N$ be the poles of $w_1,\ldots, w_n$. Fix
  a $j$ and write the Laurent expansion
  \[c w_i(cz) = A_{ij} (z-\zeta_j)^{-\ell_i} +
    O((z-\zeta_j)^{-\ell_i+1}),\quad z\to \zeta_j,\qquad A_{ij}\neq
    0,\quad i=0,1,\ldots,n,\] We claim that $\ell_i = 1$ for all
  $i$.  Suppose not and that $\ell_i \geq 2$ for some $i$.  By
  \eqref{eq:wi1}, as $ z\to\zeta_j $ either
  \begin{align*}
    f_i &= 2 A_{ij} (z-\zeta_j)^{-\ell_i} +
          O((z-\zeta_j)^{-\ell_i+1}),\\   \intertext{or} 
    \frac{\alpha_i}{f_i}&= 2 A_{ij} (z-\zeta_j)^{-\ell_i} +
                          O((z-\zeta_j)^{-\ell_i+1}),\quad   
                          \alpha_i\neq 0.
  \end{align*}
  In both cases,
  \[ c w_{i+1}(cz) = \pm A_{ij} (z-\zeta_j)^{-\ell_i} +
    O((z-\zeta_j)^{-\ell_i+1}), \quad  
    z\to\zeta_j.\] Thus, $\ell_{i+1}= \ell_i$ and $A_{i+1,j} = \pm A_{i,j}$
  for every $i$.  Since the sum in \eqref{eq:wsum} involves an odd
  number of terms, this leads to a contradiction.
\end{proof}

\begin{prop}
  \label{prop:res2}
  Let $(w_0,\dots,w_{2n}|a_0,\dots,a_{2n})$  be a rational solution of a $(2n+1)$-cyclic dressing chain
  and let  $\zeta\in \C$ be a pole of some function $w_i$ in the chain. Then we have
  \begin{equation}
  \Res_\zeta w_i^2 =0,\qquad \Res_\zeta w_i\in\Z, \qquad |\Res_\zeta w_i|\leq n.
  \end{equation}
\end{prop}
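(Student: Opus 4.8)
The plan is to read the residues of the $w_i$ at a fixed pole as a cyclic sequence governed by a simple recursion, and then to extract the three assertions from the combinatorics of that recursion, the crucial leverage being that the cycle length $2n+1$ is odd.

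First I would fix a pole $\zeta$. By Proposition~\ref{prop:wform} all poles are simple, so I can write $w_i = r_i/(z-\zeta) + s_i + O(z-\zeta)$ with $r_i = \Res_\zeta w_i$ (allowing $r_i=0$ when $\zeta$ is not a pole of $w_i$). Matching the $(z-\zeta)^{-2}$ coefficients in the chain equation \eqref{eq:wfchain} would give, for every $i$,
\[ r_{i+1}^2 - r_i^2 = r_i + r_{i+1}, \qquad\text{equivalently}\qquad (r_i+r_{i+1})(r_{i+1}-r_i-1)=0,\]
so each consecutive pair satisfies $r_{i+1}=-r_i$ or $r_{i+1}=r_i+1$. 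Since the residue of a derivative or of a constant is zero, taking residues directly in \eqref{eq:wfchain} shows that $\Res_\zeta w_i^2$ (which equals $2r_is_i$) is \emph{independent of} $i$; call this common value $\rho$. The remaining work is to prove $\rho=0$ and to pin down the $r_i$.

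For integrality, I would read the closed cycle $r_0,\dots,r_{2n},r_{2n+1}=r_0$ as a composition of the affine maps $r\mapsto -r$ and $r\mapsto r+1$. The full loop is $r\mapsto(-1)^k r + D$, where $k$ counts the sign-flips and $D\in\Z$ has the same parity as the number $2n+1-k$ of increments. Closure forces $(-1)^kr_0+D=r_0$: if $k$ were even this would demand $D=0$ while $D$ is odd, which is impossible; hence $k$ is odd, $r_0=D/2$ with $D$ even, so $r_0\in\Z$, and then every $r_i=\pm r_0+(\text{integer})\in\Z$. For the bound and the vanishing of $\rho$, I would group the indices into the $k$ maximal runs of increments, so the cycle decomposes into $k$ blocks of consecutive integers $[b_l,e_l]$ linked by $b_{l+1}=-e_l$. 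Summing the lengths $L_l=e_{l-1}+e_l+1$ gives $\sum_l e_l=\tfrac12(2n+1-k)$, while $\max_i r_i = \max_l e_l =: R$ and $\min_i r_i = -R$. Writing $R=e_{l^*}$ and deleting $l^*$ leaves a path on the even number $k-1$ of block-ends whose consecutive sums $e_{l-1}+e_l\ge0$ are nonnegative; pairing them gives $\sum_{l\ne l^*}e_l\ge0$, hence $R\le\sum_l e_l=\tfrac12(2n+1-k)\le n$ and $|r_i|\le n$. Finally, were no block to contain $0$, each block would be entirely positive or entirely negative and the relations $b_{l+1}=-e_l$ would make the blocks alternate in sign around a cycle of \emph{odd} length $k$, which is impossible; so some $r_{i_0}=0$, and then $w_{i_0}$ is regular at $\zeta$, giving $\rho=\Res_\zeta w_{i_0}^2=0$.

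The Laurent bookkeeping is routine; the real content — and the step I expect to be the main obstacle — is the combinatorics of the residue recursion. It is precisely the oddness of $2n+1$ that does the work three times over: it forces $k$ odd (hence integrality), it makes $k-1$ even so that the block-ends pair up to give the bound $R\le n$, and it obstructs a $2$-coloring of the cyclic block pattern, forcing a vanishing residue and therefore $\rho=0$.
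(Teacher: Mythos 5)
Your proof is correct, and its skeleton matches the paper's: both rest on Proposition~\ref{prop:wform}, extract the dichotomy $r_{i+1}=-r_i$ or $r_{i+1}=r_i+1$ from the $(z-\zeta)^{-2}$ coefficient (the paper's \eqref{eq:ai+1ai}), use the constancy of $\Res_\zeta w_i^2$ along the chain (your residue-of-a-derivative remark is a slicker derivation of the paper's relation \eqref{eq:aibi}, $A_{i+1}B_{i+1}=A_iB_i$), and finish by exhibiting one vanishing residue and propagating it. Where you genuinely diverge is the combinatorial middle. The paper runs three chained lemmas built on the explicit bookkeeping $A_i=(-1)^{k_i}A_0+k_i-i$ with $k_i\in\{1,\dots,2i\}$ (Lemma~\ref{lem:a1}); at closure this yields integrality \emph{and} the bound $|A_0|\le n$ in one stroke (Lemma~\ref{lem:a2}), and a separate parity count of the steps that change $|A_i|$ gives the vanishing residue (Lemma~\ref{lem:a3}). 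You instead get integrality by composing affine maps (forcing the number $k$ of flips to be odd), and then introduce a block decomposition---maximal increment runs $[b_l,e_l]$ glued by $b_{l+1}=-e_l$---which the paper never uses: it gives the bound via the identity $\sum_l e_l=\tfrac12(2n+1-k)$ plus a pairing argument, and the vanishing residue via the non-2-colorability of an odd cycle. The trade-off is that the paper's bookkeeping is shorter, since the bound falls out of the same formula as integrality, whereas your decomposition is more structural and makes the three distinct uses of oddness explicit; note also that once you know $k$ is odd, the vanishing residue follows even faster, because if no $r_i$ vanished then every flip would be a sign change, and the number of sign changes around a closed cycle must be even. The only (cosmetic) gap is the unproved claim that $D$ has the parity of the number of increments; it deserves the one-line remark that each increment contributes $\pm 1$ to $D$ according to the parity of the number of flips following it.
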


  We need to show that if $\zeta \in \C$ is a pole of $w_i$, then
  \begin{equation}
    \label{eq:wimizeta}
    w_i = m_i (z-\zeta)^{-1} + O(z-\zeta),\quad z\to \zeta,\quad m_i
    \in \Z.
  \end{equation}
  By Proposition~\ref{prop:wform}, $\zeta$ is a simple pole of $w_i$ so the local behaviour of $w_i$ near $\zeta$ is
  \begin{equation}
    \label{eq:wiaiz-1}
    w_i = A_i (z-\zeta) ^{-1} + B_i + O(z-\zeta),\quad z\to \zeta,\quad
    i=0,1,\ldots,2n.
  \end{equation}
Inserting the above expansion for $w_i$ in the equations of the dressing chain \eqref{eq:wfchain} and collecting the leading order terms at  $(z-\zeta)^{-2}$ and $(z-\zeta)^{-1}$ we obtain the relations
  \begin{eqnarray}
    \label{eq:aiai1}
     -(A_i +A_{i+1}) + A_{i+1}^2 - A_i^2 &= &0 ,\\
     \label{eq:aibi}
     A_{i+1} B_{i+1}  - A_i B_i &=& 0,\qquad i=0,\dots,2n \mod(2n+1).
  \end{eqnarray}  
 These equations, together with the constraints on $\{A_i,B_i\}, \quad i=0,\dots,2n$ derived from the closure condition \eqref{eq:wsum} are enough to prove the desired claim, which proceeds by deriving three chained lemmas.
  
\begin{lem}
  \label{lem:a1}
   Let $\zeta$ be a simple pole of a rational function  in a $(2n+1)$-cyclic dressing chain as per \eqref{eq:wiaiz-1}, and let $\{A_i\}_{i=0}^{2n}$ be the sequence of residues of $w_i$ at $\zeta$ as per \eqref{eq:wiaiz-1}  . For each $i=1,2,\ldots,2n+1$,  there exists a $k_i \in \{ 1,\ldots, 2i \}$
  such that
  \begin{equation}
    \label{eq:aia0ki}
     A_i = (-1)^{k_i} A_0 + k_i-i.
  \end{equation}
\end{lem}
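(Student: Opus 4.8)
The plan is to extract from \eqref{eq:aiai1} a simple recursion for the residues and then propagate the claimed closed form \eqref{eq:aia0ki} by induction on $i$. First I would rewrite \eqref{eq:aiai1} as
\[ A_{i+1}^2 - A_i^2 = A_{i+1} + A_i, \]
and factor it as $(A_{i+1}+A_i)(A_{i+1}-A_i-1)=0$. This yields the fundamental dichotomy: at each index either $A_{i+1} = -A_i$ (a \emph{reflection}) or $A_{i+1} = A_i + 1$ (a \emph{shift}). Thus the whole sequence $A_0, A_1,\ldots$ is generated from $A_0$ by a word in these two elementary moves, and every $A_i$ has the form $A_i = \epsilon_i A_0 + c_i$ with $\epsilon_i \in \{+1,-1\}$ and $c_i \in \Z$. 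Note that only \eqref{eq:aiai1} is needed here; the companion relation \eqref{eq:aibi} constrains the subleading coefficients $B_i$ rather than the residues.

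The bookkeeping for the pair $(\epsilon_i, c_i)$ is immediate: a reflection sends $(\epsilon_i, c_i) \mapsto (-\epsilon_i, -c_i)$ and a shift sends it to $(\epsilon_i, c_i+1)$, starting from $(\epsilon_0, c_0)=(1,0)$. To match \eqref{eq:aia0ki} I would set $k_i := c_i + i$, so that $c_i = k_i - i$ automatically, and it then remains to verify the two assertions $\epsilon_i = (-1)^{k_i}$ and $1 \le k_i \le 2i$. The parity identity $\epsilon_i = (-1)^{c_i + i}$ is the heart of the argument; it is proved by induction, checking separately that it is preserved under a reflection and under a shift. The only arithmetic input is that $c_i$ is an integer, so that $(-1)^{-c_i} = (-1)^{c_i}$; this is exactly what lets the sign $\epsilon_i$ be recovered consistently from the parity of the single integer $k_i$.

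For the range of $k_i$, I would first establish the crude bounds $-i \le c_i \le i$ by a one-line induction (a shift changes $c_i$ by $+1$, a reflection negates it), which already gives $0 \le k_i \le 2i$. To sharpen the lower bound to $k_i \ge 1$, equivalently $c_i \ge 1-i$, I would rule out the extreme value $c_i = -i$ for $i \ge 1$: if the $i$-th move were a reflection then $c_{i-1} = i$, contradicting $c_{i-1} \le i-1$; if it were a shift then $c_{i-1} = -i-1$, contradicting $c_{i-1} \ge -(i-1)$. Hence $c_i \ge 1 - i$, so $k_i \in \{1,\ldots,2i\}$, which completes the induction.

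The genuinely delicate point is the parity coupling: a priori the sign $\epsilon_i$ and the shift $c_i$ are independent data, and the content of \eqref{eq:aia0ki} is precisely that they are synchronized, so that a single integer $k_i$ simultaneously records the sign through $(-1)^{k_i}$ and the translation through $k_i - i$. Everything else — the dichotomy and the bounds — is routine once this synchronization is in place.
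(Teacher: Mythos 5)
Your proof is correct and follows essentially the same route as the paper: both extract the reflection/shift dichotomy $A_{i+1}=-A_i$ or $A_{i+1}=A_i+1$ from \eqref{eq:aiai1} and run an induction on $i$. The paper simply tracks the single integer $k_i$ directly through the update rules $k_{i+1}=-k_i+2i+1$ (reflection) and $k_{i+1}=k_i+2$ (shift), which gives the range $k_{i+1}\in\{1,\ldots,2(i+1)\}$ in one step; your decomposition into $(\epsilon_i,c_i)$, with the separate parity-synchronization induction and the crude-plus-sharpened bound argument, is equivalent bookkeeping for the same induction.
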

\begin{proof}
  The proof is by induction on $i$.  By \eqref{eq:aiai1} we have
  \begin{equation}
    \label{eq:ai+1ai}
     A_{i+1} = -A_i,\quad \text{or} \quad A_{i+1} = A_i +1 .   
  \end{equation}
  For $i=0$, the first case corresponds to $k_1 = 1$, and second case
  to $k_1=2$.  Suppose \eqref{eq:aia0ki} holds for a given $i$.
  Hence,
  \[ A_{i+1} = (-1)^{k_i+1} A_0 -k_i + i,\quad \text{or} \quad
  A_{i+1} = (-1)^{k_i} A_0 + k_i-i+1.\]
  The first possibility corresponds to
  \[ k_{i+1} =  - k_i +2 i+1, \]
  and the second possibility corresponds to 
  \[ k_{i+1} = k_i+2,\]
but in both cases $k_{i+1}\in\{1,\dots,2(i+1)\}$, thus establishing the claim.
\end{proof}
\begin{lem}
  \label{lem:a2}
  Let $\zeta$ be a simple pole of a rational function $w_i$ that solves a $(2n+1)$-cyclic dressing chain. Then the residue $A_i$ at $\zeta$ must be an integer $A_i\in\{-n,\dots,n\}$.
\end{lem}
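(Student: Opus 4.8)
The plan is to exploit the single structural fact that has not yet entered the argument, namely that the chain closes up after an \emph{odd} number of steps. Periodicity of the residues, $A_{2n+1}=A_0$, combined with the formula \eqref{eq:aia0ki} of Lemma~\ref{lem:a1} evaluated at $i=2n+1$, should pin down $A_0$ completely and force it to be an integer of controlled size.

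First I would set $i=2n+1$ in \eqref{eq:aia0ki} and substitute $A_{2n+1}=A_0$ to obtain
\[ A_0 = (-1)^{k_{2n+1}} A_0 + k_{2n+1}-(2n+1), \qquad k_{2n+1}\in\{1,\dots,2(2n+1)\}, \]
and then split according to the parity of $k_{2n+1}$. If $k_{2n+1}$ is even the relation forces $k_{2n+1}=2n+1$, which is odd, a contradiction; this is exactly where the oddness of the cycle length is indispensable, since it eliminates the parity-preserving branch. Hence $k_{2n+1}$ must be odd, the relation collapses to $2A_0 = k_{2n+1}-(2n+1)$, and $A_0 = \tfrac12\bigl(k_{2n+1}-(2n+1)\bigr)$. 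Because $k_{2n+1}$ and $2n+1$ are both odd, their difference is even, so $A_0\in\Z$.

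Next I would extract the bound. As $k_{2n+1}$ runs over the odd integers in $\{1,\dots,4n+2\}$, that is over $\{1,3,\dots,4n+1\}$, the quantity $\tfrac12\bigl(k_{2n+1}-(2n+1)\bigr)$ sweeps out precisely $\{-n,\dots,n\}$, the extreme values $-n$ and $n$ being attained at $k_{2n+1}=1$ and $k_{2n+1}=4n+1$ respectively. This yields $A_0\in\{-n,\dots,n\}$. Finally, to promote this from $A_0$ to an arbitrary $A_i$ I would invoke the cyclic symmetry $w_i\mapsto w_{i+1},\ a_i\mapsto a_{i+1}$ recorded in Definition~\ref{def:wchain}: applying it $i$ times turns the given chain into another $(2n+1)$-cyclic dressing chain in which $w_i$ occupies the zeroth slot, so the conclusion transfers verbatim and $A_i\in\{-n,\dots,n\}$.

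The only genuinely delicate point, and the crux of the whole statement, is the parity step: for an odd period the branch $A_{i+1}=-A_i$ cannot iterate consistently all the way around the cycle, which is what both forces integrality and furnishes the bound. For an even cycle the identical computation would instead admit half-integer residues, so the argument is sharply tied to the hypothesis $2n+1$ odd; everything else is routine bookkeeping on top of Lemma~\ref{lem:a1}.
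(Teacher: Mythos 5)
Your proposal is correct and follows essentially the same route as the paper: evaluate Lemma~\ref{lem:a1} at $i=2n+1$, use the closure condition $A_{2n+1}=A_0$ together with the parity of $k_{2n+1}$ to rule out the sign-preserving branch (the paper phrases this by counting occurrences of the two possibilities in \eqref{eq:ai+1ai}, you phrase it as the direct contradiction $k_{2n+1}=2n+1$ being odd — the same argument), deduce $2A_0=k_{2n+1}-(2n+1)$ giving integrality and the bound $A_0\in\{-n,\dots,n\}$, and extend to all $A_i$ by cyclic symmetry. No gaps; your closing remark about half-integer residues for even cycles is a correct aside that matches why the result is specific to odd period.
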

\begin{proof}
The results follows trivially from the previous lemma and the closure condition. Indeed, Lemma~\ref{lem:a2} for $i=2n+1$ reads
 \[ A_{2n+1} =A_0= (-1)^{k_{2n+1}} A_0 + k_{2n+1} -(2n+1) .\]
 The closure condition $A_{2n+1}=A_0$ implies that the second possibility in \eqref{eq:ai+1ai} occurs an even number of times and the first possibility an odd number of times. Since $k_0=0$, we see that  $k_{2n+1}$ must be an odd number, and therefore we can write $k_{2n+1}=2j+1$ with $j\in\{0,\dots,2n\}$.  Hence,
  \[ 2 A_0 = k_{2n+1}-(2n+1) = 2 (j-n),\] which proves the claim for the residue $A_0$.
The result extends from $A_0$ to any $A_i$ in the chain by cyclicity.
\end{proof}
\begin{lem}
  \label{lem:a3}
  Let $\zeta$ be a simple pole of a rational function  in a $(2n+1)$-cyclic dressing chain as per \eqref{eq:wiaiz-1}, and let $\{A_i\}_{i=0}^{2n}$ be the sequence of residues of $w_i$ at $\zeta$ as per \eqref{eq:wiaiz-1} . Then  $A_i=0$ for some $i=0,1,\ldots, 2n$.
\end{lem}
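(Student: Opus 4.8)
The plan is to exploit the two-case recurrence \eqref{eq:ai+1ai} together with the fact, already extracted in the proof of Lemma~\ref{lem:a2}, that as one traverses the cycle the negation step $A_{i+1}=-A_i$ (the first possibility) occurs an \emph{odd} number of times. I would argue by contradiction, assuming $A_i\neq 0$ for every $i$, and reduce the claim to a parity obstruction on an odd cycle.

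First I would set up a block decomposition of the cyclic sequence $A_0,\dots,A_{2n},A_{2n+1}=A_0$. Mark the indices $i$ at which the transition is of the first type $A_{i+1}=-A_i$; by the proof of Lemma~\ref{lem:a2} there is an odd number $m\geq 1$ of them. These negations cut the cycle into $m$ consecutive arcs (blocks), and within each block every transition is of the second type $A_{i+1}=A_i+1$. Consequently the residues in the $k$-th block form a run of consecutive integers $v_k, v_k+1,\dots, M_k$, and the negation terminating the block links it to the next one via $v_{k+1}=-M_k$.

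The core step is a sign alternation. Assuming no residue vanishes, the integer interval $[v_k,M_k]$ misses $0$, so each block is either entirely positive ($v_k\geq 1$) or entirely negative ($M_k\leq -1$). If block $k$ is positive then $M_k\geq 1$, hence $v_{k+1}=-M_k\leq -1$, and block $k+1$, being an increasing run of consecutive integers starting below $0$ and avoiding $0$, must terminate at $M_{k+1}\leq -1$, i.e. it is entirely negative; symmetrically a negative block is always followed by a positive one. Thus the signs of the blocks must alternate around the cycle, which is impossible for an odd number $m$ of blocks. This contradiction forces some $A_i=0$.

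The main obstacle is purely the bookkeeping: making the block decomposition precise, in particular handling blocks of length one (where $v_k=M_k$ and a negation immediately follows a negation), and correctly importing the parity statement --- that the number of negations is odd --- from the proof of Lemma~\ref{lem:a2}, since it is exactly this parity, inherited from $2n+1$ being odd, that clashes with the forced alternation. It is worth noting that the residue-sum relation \eqref{eq:wsum} is not actually needed for this lemma; the obstruction is entirely a parity phenomenon of the odd cycle, and the sum condition only enters later when one passes to $\Res_\zeta w_i^2$.
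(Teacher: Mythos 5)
Your proof is correct --- including the edge cases of length-one blocks and the wrap-around case $m=1$ --- but it reaches the contradiction by a different mechanism than the paper. The paper's proof never forms the block decomposition and never imports the odd-negation count from the proof of Lemma~\ref{lem:a2}; it argues directly on the sign sequence: since $\sgn A_0,\dots,\sgn A_{2n+1}=\sgn A_0$ closes up, the number of sign-reversing steps is \emph{even}, hence (the total $2n+1$ being odd) the number of increment steps $A_{i+1}=A_i+1$ is \emph{odd}; each increment changes $|A_i|$ by $\pm 1$ while each negation preserves $|A_i|$, so $|A_{2n+1}|-|A_0|$ would be odd, contradicting $A_{2n+1}=A_0$. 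Your argument establishes the same evenness (alternation of block signs around a closed cycle is exactly the evenness of the number of sign changes, and under the no-zero assumption sign changes occur precisely at negations), but then contradicts it with the oddness of the negation count inherited from Lemma~\ref{lem:a2}'s proof, rather than with the $|A_i|$-drift. What your route buys is a transparent structural picture (runs of consecutive integers joined by negations) and the correct observation that \eqref{eq:wsum} plays no role here --- only the cyclic closure $A_{2n+1}=A_0$ enters; what it costs is a dependence on a fact established only inside another proof (via the $k_i$ bookkeeping of Lemma~\ref{lem:a1}), which the paper's self-contained count avoids. One point you should make explicit: both your argument and the paper's silently use integrality of the residues, i.e.\ the \emph{statement} of Lemma~\ref{lem:a2} --- integrality is what makes the dichotomy in \eqref{eq:ai+1ai} unambiguous (ruling out $A_i=-\tfrac12$, where both alternatives coincide) and guarantees that an increasing run of nonzero values, or a single increment step, cannot jump from negative to positive without hitting $0$.
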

\begin{proof}
  We argue by contradiction and suppose that  $\sgn A_i \in \{-1,1\}$ for all $i=0,1,\ldots,
  2n$.
  Hence, 
  \[ \sgn A_{i+1} = \pm \sgn A_i.\]
  From the cyclic condition  $A_{2n+1}=A_0$, it follows that in the set $i\in\{0,\dots,2n\}$ there must be an even number of indices such that
  $\sgn A_{i+1} = -\sgn A_i,$ and therefore an odd number of indices
  such that
  \[ |A_{i+1}| = |A_i|\pm 1 .\]
  It follows that $|A_{2n+1}|-|A_0|$ is an odd integer, which is a contradiction.
\end{proof}

The previous lemma implies that it is impossible that all the rational functions in the dressing chain have a common pole. We are now ready to conclude the proof of Proposition~\ref{prop:res2}.
\begin{proof}[Proof of Proposition \ref{prop:res2}]

  From Lemma \ref{lem:a3} it follows that $A_i B_i = 0$ for at least one
  $i=0,1,\ldots, 2n$.  Hence, by \eqref{eq:aibi}, $A_i B_i = 0$ for all
  $i=0,1,\ldots, 2n$; i.e. either $A_i=0$ or $B_i=0$ for every $i$.
  Relative to form \eqref{eq:wiaiz-1}, this is equivalent to the
  condition that $\Res_\zeta w_i^2=0$.  The integrality and bounds on the possible values of the residues $A_i$ follow from  Lemma \ref{lem:a2}.
\end{proof}

By  Proposition~\ref{prop:res2}, the expansion \eqref{eq:wimizeta} holds at every
pole $z=\zeta$ of a $(2n+1)$-cyclic factorization chain.  In particular
the residue $m_i=\Res_\zeta w_i$ is an integer and $|m_i|\leq n$.  The conclusions of
that proposition can be strengthened in the following manner.
\begin{prop}
  \label{prop:res2j}
  Let $(w_0,\dots,w_{2n}|a_0,\dots,a_{2n})$  be a rational solution of a $(2n+1)$-cyclic dressing chain.
  Let  $\zeta\in \C$ be a pole of a function $w_i$ in the chain and $m_i=\Res_\zeta w_i$. Then we have
  \begin{equation}
    \label{eq:res2j}
    \Res_\zeta w_i^{2j} = 0,\quad j=1,\ldots, |m_i|.
  \end{equation}
\end{prop}
In a similar manner, we structure the proof of this result in three simple lemmas. Consider the local expansion of $w_i$ around the pole at $z=\zeta$, which according to Proposition~\ref{prop:res2} has the form
\begin{equation}
\label{eq:wimibij}
 w_i = m_i (z-\zeta)^{-1} + \sum_{j=0}^\infty  B_{ij} (z-\zeta)^j,
\end{equation}
where $m_i\in \Z$ is an integer.

\begin{lem}
  \label{lem:a4}
Let  $S$ be the set of residues of all functions of the chain at $z=\zeta$:
\[ S= \{ m_i \colon i=0,1,\ldots, 2n \}.\] 
 Then, we have $-S=S$.
\end{lem}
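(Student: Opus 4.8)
The plan is to show that the set of residues $S=\{m_i : i=0,1,\ldots,2n\}$ is invariant under negation. The natural tool is the \emph{reversal symmetry} \eqref{eq:reversal2}, which sends $w_i\mapsto -w_{-i}$ and preserves the structure of the dressing chain equations \eqref{eq:wfchain}. First I would observe that if $(w_0,\dots,w_{2n}|a_0,\dots,a_{2n})$ is a rational solution of a $(2n+1)$-cyclic dressing chain with a pole at $z=\zeta$, then applying the reversal symmetry produces another rational solution of the same chain (with negated parameters and shift) which also has a pole at $\zeta$. Under this symmetry the function $w_i$ is replaced by $-w_{-i}$, so the residue of the new function indexed by $i$ at $\zeta$ is $-\Res_\zeta w_{-i} = -m_{-i}$.

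The key point is that the \emph{set} of residues at a fixed pole $\zeta$ is an intrinsic invariant of the pole, but we must be careful: the reversal symmetry genuinely maps one solution to a possibly different one, so a priori it relates the residue set $S$ of the original chain to the residue set $S'$ of the reversed chain. To conclude $-S=S$ rather than merely $-S=S'$, I would argue that the residue set at $\zeta$ is determined purely by the leading-order pole data, which is constrained by the algebraic relations \eqref{eq:aiai1}--\eqref{eq:aibi} governing the coefficients $A_i=m_i$. Concretely, reindexing $i\mapsto -i\bmod(2n+1)$ is a cyclic relabelling, and by the cyclic symmetry (ii) the residue set is unchanged under cyclic shifts. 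Thus $\{-m_{-i}\}_{i=0}^{2n} = \{-m_i\}_{i=0}^{2n} = -S$ is realized as the residue set of a chain obtained from the original by a symmetry that fixes $\zeta$, and since the residues at $\zeta$ satisfy the same recursion \eqref{eq:ai+1ai} in both the original and the reversed-and-reindexed chain, both sets coincide as subsets of $\Z$.

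More directly, I expect the cleanest argument avoids invoking a second solution altogether: the recursion \eqref{eq:ai+1ai}, namely $A_{i+1}=-A_i$ or $A_{i+1}=A_i+1$, together with the closure $A_{2n+1}=A_0$, is itself symmetric under the substitution $A_i\mapsto -A_{2n+1-i}$ (equivalently reading the cyclic chain backwards and negating). Indeed, reversing the order of the steps turns each forward relation into its mirror, and one checks that $A_{i+1}=A_i+1$ reversed-and-negated becomes $-A_i = (-A_{i+1})+1$, i.e. the same type of relation on the negated sequence. Hence the sequence $(-A_{-i})_{i=0}^{2n}$ (indices mod $2n+1$) is again a valid solution sequence of the residue recursion with the same closure, and by uniqueness of the residue multiset determined by these local relations, it produces the same set $S$. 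Therefore $-S=S$.

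The main obstacle I anticipate is the bookkeeping to confirm that the negated-and-reversed residue sequence is not merely \emph{a} valid solution of the abstract recursion but is genuinely the residue sequence of an actual dressing-chain solution with a pole at $\zeta$, so that its values lie in $S$. This is precisely where the reversal symmetry \eqref{eq:reversal2} does the essential work: it guarantees that $-w_{-i}$ are honest functions solving the chain, and hence their residues at $\zeta$ are honest elements of a residue set equal (by cyclicity) to $S$. Once this identification is in place, the equality $-S=S$ is immediate; the rest is routine verification that the symmetry fixes the pole location $\zeta$.
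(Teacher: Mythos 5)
Your approach via the reversal symmetry \eqref{eq:reversal2} has a genuine gap that the two patches you offer do not close. The symmetry $w_i \mapsto -w_{-i}$ maps the given solution to a generally \emph{different} solution (of a chain with shift $-\Delta$ and parameters $-a_{-i}$); its residue set at $\zeta$ is $-S$, so all you obtain is that $-S$ is the residue set of \emph{some} solution, not that $-S=S$. Your first patch, that reindexing $i\mapsto -i$ is a ``cyclic relabelling'' covered by the cyclic symmetry, is false: $i\mapsto -i$ is a reflection of $\Z/(2n+1)\Z$, not a rotation, and no cyclic shift turns $(-w_{-i})_i$ back into $(w_i)_i$. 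Your second patch, ``uniqueness of the residue multiset determined by these local relations,'' is also false: the recursion \eqref{eq:ai+1ai} together with the closure $A_{2n+1}=A_0$ does not determine the value set. For instance, for $2n+1=3$ both $(0,0,0)$ and $(0,1,-1)$ are admissible closed residue sequences, and they have different value sets. Consequently your ``more direct'' argument is circular: checking that $B_i=-A_{-i}$ satisfies the same recursion and closure (which is correct) only exhibits $-S$ as the value set of \emph{another} abstract solution of the recursion; the assertion that this forces $\{B_i\}=\{A_i\}$ \emph{is} the lemma, not a proof of it.

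What is actually needed --- and what the paper does --- is a combinatorial argument internal to the single residue sequence $(m_i)$. Take $m_i>0$ in $S$ (the case $m_i=0$ is trivial, and $m_i<0$ is handled analogously). If $m_{i+1}=-m_i$ you are done; otherwise $m_{i+1}=m_i+1$, and one lets $k\geq 1$ be the largest integer with $|m_{i+1}|,\ldots,|m_{i+k}|>m_i$, which is finite because cyclicity gives $m_{i+2n+1}=m_i$. The recursion \eqref{eq:ai+1ai} forces $\bigl|\,|m_{j+1}|-|m_j|\,\bigr|\leq 1$, so maximality of $k$ pins down $|m_{i+k+1}|=m_i$; and the two branches of \eqref{eq:ai+1ai} rule out $m_{i+k+1}=+m_i$, since they would give $m_{i+k}=-m_i$ or $m_{i+k}=m_i-1$, contradicting $|m_{i+k}|>m_i$. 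Hence $m_{i+k+1}=-m_i\in S$. This ``excursion and first return'' step is the essential idea missing from your proposal, and it cannot be replaced by a symmetry of the chain.
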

\begin{proof}
  Consider an arbitrary $m_i\in S$. If $m_i=0$, then evidently
  $-m_i\in S$.  Suppose that $m_i>0$.  If $m_{i+1}=-m_i$, we are done.
  Otherwise let $k>0$ be the largest integer such that
  \[ |m_{i+1}| , \ldots, |m_{i+k}| > m_i.\]
  Such a $k$ must exist because of cyclicity.
  By \eqref{eq:ai+1ai},
  \[ |m_{i+k+1}| - |m_{i+k}| \in \{ -1,0,1\}.\]
  Since $k$ is as large as possible, $|m_{i+k+1}| = m_i$.  Suppose
  that $m_{i+k+1} = m_i$. That would mean from \eqref{eq:ai+1ai} that either $m_{i+k}=m_i-1$ or $m_{i+k}=-m_i$, both of which contradict the hypothesis that $|m_{i+k}| > m_i$.
  Therefore $m_{i+k+1} = -m_i$, and $-m_i\in S$ also.  The case $m_i<0$ is
  proved analogously.
\end{proof}

\begin{lem}
  \label{lem:a5}
  If $k\in S$ is positive, then $k-1\in S$ also.
\end{lem}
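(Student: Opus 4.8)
The plan is to reduce the statement to a discrete intermediate value argument on the sequence of \emph{absolute values} of the residues, combined with the symmetry $-S = S$ already established in Lemma~\ref{lem:a4}. Writing $m_i = \Res_\zeta w_i$, recall from \eqref{eq:ai+1ai} that consecutive residues obey $m_{i+1} = -m_i$ or $m_{i+1} = m_i + 1$. The first observation I would record is that under either transition the absolute value changes by at most one: in the negating case $|m_{i+1}| = |m_i|$, and in the additive case $\bigl||m_i+1| - |m_i|\bigr| \le 1$. Hence the cyclic sequence $|m_0|, |m_1|, \dots, |m_{2n}|$ is a closed walk on $\Nz$ with steps in $\{-1,0,1\}$.

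Now suppose $k \in S$ is positive, so that $m_p = k$ for some index $p$. By Lemma~\ref{lem:a3} there is an index $q$ with $m_q = 0$, hence $|m_q| = 0$. Travelling cyclically from $p$ to $q$ we obtain a finite run of absolute values starting at $k$ and ending at $0$, with consecutive differences in $\{-1,0,1\}$. I would then invoke the discrete intermediate value property: letting $j$ be the first index along this run with $|m_j| \le k-1$ (which exists since the run terminates at $0$ and $k \ge 1$), minimality forces its predecessor to have absolute value $\ge k$, so a single step of size at most one pins $|m_j| = k-1$ exactly.

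It then remains to upgrade this statement about absolute values to one about signed residues. Since $|m_j| = k-1$ we have $m_j = k-1$ or $m_j = -(k-1)$. In the first case $k-1 \in S$ immediately; in the second case $-(k-1) \in S$, and Lemma~\ref{lem:a4} (namely $-S = S$) yields $k-1 \in S$ as well. This completes the argument. The only genuinely delicate point is the intermediate value step: one cannot argue purely locally at an occurrence of $k$, because $k$ may always be flanked by negating transitions (predecessor or successor equal to $-k$); the global walk-based argument, anchored at a zero residue supplied by Lemma~\ref{lem:a3} and symmetrized by Lemma~\ref{lem:a4}, is what circumvents this obstruction.
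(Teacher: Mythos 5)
Your proof is correct and is essentially the same argument as the paper's: both rest on the observation that $|m_{i+1}|-|m_i|\in\{-1,0,1\}$ follows from \eqref{eq:ai+1ai}, both anchor the argument at a zero residue supplied by Lemma~\ref{lem:a3}, and both use the symmetry $-S=S$ of Lemma~\ref{lem:a4} to pass from $|m_j|=k-1$ to $k-1\in S$. The only difference is presentational — you run the discrete intermediate value argument directly along the walk from an occurrence of $k$ to an occurrence of $0$, whereas the paper phrases the same crossing argument as a contradiction (if $k-1\notin S$ then $|m_i|\ge k$ for all $i$, contradicting Lemma~\ref{lem:a3}).
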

\begin{proof}
 We argue by contradiction and suppose that there exists a $k>0$ such that $k\in S$ but  $k-1\notin S$. By Lemma
  \ref{lem:a4}, $1-k\notin S$ also.  From \eqref{eq:ai+1ai} we have
  $|m_{i+1}| - |m_i| \in \{-1,0,1\}$ so it would follow that $|m_i| \geq k$ for all
  $i=0,\dots,2n,$ which contradicts Lemma \ref{lem:a3}.
\end{proof}

\begin{lem}
  \label{lem:a6}
  Let $m=\max\,\{m_i\}_{i=0}^{2n}$.  Then,
  \[ S = \{ -m, -m+1,\ldots, m-1,m \}.\]
\end{lem}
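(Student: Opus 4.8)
The plan is to deduce this directly from the two preceding lemmas, since together they already force $S$ to be a full symmetric interval of integers centred at the origin, leaving only a short assembly. First I would pin down the sign of $m$: by Lemma~\ref{lem:a3} some residue vanishes, so $0\in S$, and hence $m=\max S\geq 0$. If $m=0$, then every $m_i\leq 0$, while the reflection symmetry $-S=S$ of Lemma~\ref{lem:a4} forces every $m_i\geq 0$ as well; thus $S=\{0\}$, which is exactly $\{-m,\ldots,m\}$ in this degenerate case.

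Assuming $m>0$, the key step is a downward induction driven by Lemma~\ref{lem:a5}. Since $m\in S$ and $m>0$, that lemma gives $m-1\in S$; if $m-1>0$ we apply it again to obtain $m-2\in S$, and we continue until we reach $0$, yielding $\{0,1,\ldots,m\}\subseteq S$. Invoking $-S=S$ from Lemma~\ref{lem:a4} then immediately produces $\{-m,\ldots,-1\}\subseteq S$ as well, so $\{-m,\ldots,m\}\subseteq S$.

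For the reverse inclusion, $m=\max S$ means every element of $S$ is at most $m$, and by the negation symmetry $-S=S$ every element is also at least $-m$; since all residues are integers by Lemma~\ref{lem:a2}, we conclude $S\subseteq\{-m,\ldots,m\}$. Combining the two inclusions gives $S=\{-m,\ldots,m\}$, as claimed.

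I expect the main obstacle not to lie in this final step, which is essentially bookkeeping, but to have been absorbed into the earlier lemmas — in particular the sign/parity argument of Lemma~\ref{lem:a4} that yields the negation symmetry of $S$, and the ``no gaps'' property of Lemma~\ref{lem:a5}. Once those two structural facts are in hand, the interval description of $S$ follows with no further case analysis, so the proof of Lemma~\ref{lem:a6} reduces to stating the induction and the reflection explicitly.
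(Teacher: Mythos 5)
Your proposal is correct and follows exactly the route of the paper, whose proof of Lemma~\ref{lem:a6} is simply the remark that it ``follows directly from Lemmas~\ref{lem:a4} and \ref{lem:a5}''; you have merely spelled out the downward induction via Lemma~\ref{lem:a5}, the reflection $-S=S$ from Lemma~\ref{lem:a4}, and the trivial reverse inclusion from $m=\max S$. The bookkeeping (including the $m=0$ case and the use of Lemma~\ref{lem:a3} to get $m\geq 0$) is all sound, so nothing is missing.
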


\begin{proof}
  This follows directly from     Lemmas \ref{lem:a4} and \ref{lem:a5}.
\end{proof}
We see that the set of residues at a given pole $z=\zeta$ along the chain contains all integer values between $-m$ and $m$, with $m\leq n$. We are now ready to prove Proposition~\ref{prop:res2j}.

\begin{proof}[Proof of Proposition \ref{prop:res2j}.]
  Given the expansion in $\eqref{eq:wimibij}$, the claim \eqref{eq:res2j} is equivalent to showing that  
  \[B_{i,2j-2} = 0,\text{ for all } j=1,\ldots, |m_i|.\]
   The argument proceeds  by induction
  on $j$.  Proposition \ref{prop:res2} established \eqref{eq:res2j}
  for $j=1$.  Suppose that \eqref{eq:res2j} holds for all $j\leq k$
  for a given $k\in \N$.  Suppose that $|m_i| > k$.  We will show that
  $B_{i,2k}=0$.

  Let $p< i$ be the largest integer such that $m_{p} = k$ and
  $q> i$ be the smallest integer such that $m_q = -k$.  If $k<|m_i|$, such $p,q$ are guaranteed to exist
  by Lemma \ref{lem:a6}.     Thus, by construction
  \begin{equation}\label{eq:mlcond}
    |m_\ell| \geq k+1,\quad \ell = p+1,\ldots ,i,\ldots, q-1.
    \end{equation}
  By the inductive assumption
  \[ B_{i,2j-2} = 0,\quad i=p,\ldots,q,\quad j=1,\ldots k.\]
  Hence, the vanishing of the coefficient of $z^{2k-1}$ in
  \eqref{eq:wfchain}, implies that
  \begin{equation} \label{eq:bl2k}
  (k+m_{\ell+1}) B_{\ell+1,2k} +(k-m_\ell) B_{\ell,2k} = 0,\quad   \ell=p,\ldots, q-1.
  \end{equation}
  Since $k+m_q=0$ and $k-m_p=0$, from \eqref{eq:mlcond} and \eqref{eq:bl2k} it follows that
  \[ B_{q-1,2k} = B_{p+1,2k} = 0.\]
 which in turn imply by cascade that 
  \[B_{\ell,2k} = 0 \text{ for all }\ell = p+1,\ldots, i  \ldots q-1,\]
  and in particular $B_{i,2k}=0$ as was to be shown.
\end{proof}

%\comment{I moved the Maya cyclde definition up here.  The
%  characterization result talks about them implicitly.  Might as well
%  make the point explicitly.}

\begin{definition}\label{def:symdiff}
Given two sets $A$ and $B$ we define its \textit{symmetric difference} as the union of the set of elements of $A$ that are not in $B$ with the set of elements of $B$ that are not in $A$ 
  \begin{equation}
    \label{eq:ominusdef}
    A\ominus B = (A \setminus B) \cup (B\setminus A).
  \end{equation}
\end{definition}

The characterization of rational solutions of the $A_{2n}$ system can
be done in terms of cyclic Maya diagrams and Maya cycles --- concepts
that we introduce below.

\begin{definition}
  \label{def:multiset}
  Let $\hcZp,\; p\in \Nz$ denote set of integer sets of cardinality
  $p$, and let $\cZp$ denote the set of integer multisets of
  cardinality $p$.  We identify a set $\bbeta\in \hcZp$ with the
  strictly increasing integer sequence
  $\beta_0<\beta_1<\cdots <\beta_{p-1}$ that enumerates $\bbeta$ , and
  identify a multi-set $\bbeta\in \cZp$ with a non-decreasing integer
  sequence $\beta_0\le \beta_1 \le \cdots \le \beta_{p-1}$.  In this way, we regard $\hcZp$ as a subset of
  $\cZp$.
\end{definition}

We will use curly braces to denote both sets and multi-sets, letting the context resolve the ambiguity, and
round parentheses to denote sequences/tuples.

\begin{definition}
  \label{def:multiflip}
  For a set $\bbeta\in \hcZp$ let $\phi_{\bbeta}$ denote the
  \textit{multi-flip}
  \begin{equation}
    \label{eq:phimudef}
    \phi_{\bbeta}= \phi_{\beta_0} \circ \cdots \circ \phi_{\beta_{p-1}}.
  \end{equation}
  where the action of each single flip on a Maya diagram $M$ is given
  by \eqref{eq:flipdef}.  

\end{definition}

\noindent
We also use relation \eqref{eq:phimudef} to define $\phi_\bbe$ where
$\bbe\in \cZp$ is a multiset.  Since flips are involutions, we have
\[ \phi_\bbe = \phi_{\bbe'},\quad \bbe\in \cZ^p\] where $\bbe'$ is
the set consisting of the elements of $\bbe$ with an odd cardinality.

\begin{definition}\label{def:cyclicM}
  A Maya diagram $M$ is $p$-cyclic with shift $k$, or simply
  $(p,k)$-cyclic, if there exists a multi-flip
  $\phi_\bbe,\;\bbe \in \cZp$ such that
  \begin{equation}
    \label{eq:cyclicMdef}
    \phi_\bbe(M) = M+k.
  \end{equation}
  More generally, a Maya diagram $M$ will be said to be $p$-cyclic
  if it is $(p,k)$-cyclic for some shift $k\in \Z$.
\end{definition}

\begin{definition}
  \label{def:mayacycle}
  A $(p,k)$ \emph{Maya cycle} is a sequence of Maya diagrams
  $\bM=(M_0,M_1,\dots,M_{p})$ such that $M_i$ is
  related to $M_{i+1}$ by a single flip, and such that
  $M_p = M_0+k,\; k\in \Z$.  The sequence $\bmu \in \Z^p$ where
  \begin{equation}
    \label{eq:flipseq}
    \{\mu_i\} = M_{i+1} \ominus M_i,\quad i=0,\ldots, p-1 
  \end{equation}
  will be called the \emph{flip sequence}\footnote{Every element
      of a $(p,k)$ Maya cycle is a $(p,k)$ cyclic Maya diagram.  Thus,
      as a data structure, a Maya cycle can also be represented in
      terms of two components: (i) a cyclic Maya diagram $M_0$,
      together with (ii) a permutation of the flip sequence.  In the
      preliminary paper \cite{clarkson2020cyclic} we used  this
      representation.  However the present definition has certain advantages
      that become manifest once we consider the corresponding
      symmetries.}  of the Maya cycle, because, by construction,
  $\bmu$ is the unique sequence such that
  \[ M_{i+1} = \phi_{\mu_i}(M_i),\quad i=0,\ldots, p-1.\]
  For each $i=0,\ldots, p-1$ we also set
  \begin{equation}
    \label{eq:sign}
    \sigma_i=\begin{cases}
      -1, & \textit{if}\quad \mu_{i}\notin M_i, \\
      +1, & \textit{if}\quad \mu_{i}\in M_i,
  \end{cases}
  \end{equation}
  and refer to the sequence $\bsigma=(\sigma_0,\ldots, \sigma_{p-1})$
  as the \emph{sign sequence} of the Maya cycle.

  Observe that if
  $\bM$ is a Maya cycle, then so is
  \begin{equation}
    \label{eq:MCtrans}
    \bM+j = (M_0+j,M_1+j,\ldots, M_{p}+j),\quad j\in \Z.
  \end{equation}
  We use $\bM/\Z$ to denote the equivalence class of a Maya cycle
  $\bM$ modulo integer translations.
\end{definition}

We are now able to formulate the main theorem of this Section that characterizes rational solutions of an odd-cyclic dressing chain, and therefore  rational solutions of the $A_{2n}$ system.

\begin{thm}\label{thm:characterization}
  Let $\bM=(M_0,\ldots, M_{2n+1})$ be a $(2n+1,k)$ Maya cycle with flip sequence $\bmu=(\mu_0,\dots,\mu_{2n})\in \Z^{2n+1}$ and sign sequence $\bsigma=(\sigma_0,\dots,\sigma_{2n})\in \{-1,1\}^{2n+1}$. For
  $i=0,\ldots, 2n$, set 
  \begin{align}
     \label{eq:wsigMi}
    w_i(z)&= \sigma_{i} z+ \frac{\H_{M_{i+1}}'(z)}{\H_{M_{i+1}}(z)}-
            \frac{\H_{M_{i}}'(z)}{\H_{M_{i}}(z)},\\
    \label{eq:aimui}
    a_i&=2(\mu_{i}-\mu_{i+1}), \qquad \mu_{2n+1}=\mu_0+k,
  \end{align}
  where $ \H_{M_i}(z)$ and $ \H_{M_{i+1}}(z)$ are the corresponding Hermite
  pseudo-Wronskians \eqref{eq:pWdef2}.
  Then, $(\bw|\ba)=(w_0,\dots,w_{2n}|a_0,\dots,a_{2n})$ is a rational solution
  of a $(2n+1)$-cyclic dressing chain with shift $\Delta=2k$.  Conversely, every rational
  solution $(\bw|\ba)$ of a $(2n+1)$-cyclic dressing chain is determined in this fashion by a unique Maya cycle class $\bM/\Z$.
\end{thm}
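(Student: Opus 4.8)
The plan is to prove the theorem in two directions, corresponding to the two claims (forward construction and converse uniqueness).

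\textbf{Forward direction.} Given a $(2n+1,k)$ Maya cycle $\bM=(M_0,\ldots,M_{2n+1})$, I would build the associated factorization chain of Schr\"odinger operators and invoke the machinery already assembled in Section~\ref{sec:Maya}. Concretely, since each $M_{i+1}=\phi_{\mu_i}(M_i)$ differs from its predecessor by a single flip, Proposition~\ref{prop:seedfunc} (in the form of equations \eqref{eq:wiproof}--\eqref{eq:liproof}) shows that the operators $L_{M_i}=-D_{zz}+U_{M_i}$ form a factorization chain in which the intertwining log-derivative is exactly
\[ w_i(z)=\sigma_i z+\frac{\H_{M_{i+1}}'(z)}{\H_{M_{i+1}}(z)}-\frac{\H_{M_i}'(z)}{\H_{M_i}(z)}, \]
with factorization eigenvalue $\lambda_i=2\mu_i+1$. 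The cyclic closure condition $M_{2n+1}=M_0+k$, combined with the shift property of the potentials (from \eqref{eq:UMdef} and Proposition~\ref{prop:Mshift}, $U_{M+k}=U_M+2k$ after accounting for the index shift \eqref{eq:indexshift}), guarantees $U_{2n+1}=U_0+2k$, i.e.\ a $(2n+1)$-cyclic factorization chain with shift $\Delta=2k$. Proposition~\ref{prop:dchainfchain} then immediately converts this into a dressing chain with $a_i=\lambda_i-\lambda_{i+1}=2(\mu_i-\mu_{i+1})$, matching \eqref{eq:aimui}, and \eqref{eq:Deltasumalpha} reproduces $a_0+\cdots+a_{2n}=-2k=-\Delta$. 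Since each $\H_{M_i}$ is a polynomial, every $w_i$ is manifestly rational, so this produces a rational solution.

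\textbf{Converse direction.} This is where the hard analytic work of the section pays off. Starting from an arbitrary rational solution $(\bw|\ba)$, I would use Proposition~\ref{prop:wform} to fix the global form of each $w_i$ as a sum of simple poles plus a linear term $\pm Az+B_i$, and then use the results culminating in Proposition~\ref{prop:res2j} to control the local structure: at every pole $\zeta$ the residues $m_i=\Res_\zeta w_i$ are integers bounded by $n$ and satisfy $\Res_\zeta w_i^{2j}=0$ for $j=1,\ldots,|m_i|$. The key point is that these vanishing-residue conditions are precisely the Duistermaat--Gr\"unbaum trivial-monodromy conditions \eqref{eq:ccond} for the potentials $U_i=2w_i^2+2\lambda_i+\cdots$ reconstructed from \eqref{Ui+Ui+1}. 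I would therefore show that each $U_i$ built from the rational solution is a monodromy-free potential with quadratic growth at infinity (the quadratic growth coming from the linear term in \eqref{eq:wiform}). Oblomkov's classification, Proposition~\ref{prop:Obl}, then forces $U_i=U_{M_i}$ for some Maya diagram $M_i$, up to an additive constant. The chain structure \eqref{eq:Dxform}--\eqref{eq:shift} together with the single-flip characterization (Proposition~3.11, cited before Proposition~\ref{prop:seedfunc}) implies consecutive $M_i$ differ by a single flip and that $M_{2n+1}=M_0+k$ where $\Delta=2k$, so the $M_i$ assemble into a $(2n+1,k)$ Maya cycle. Finally, the rigidity statement after \eqref{eq:Riccati} --- that $w_i,\lambda_i$ are uniquely determined by the non-constant potentials --- yields uniqueness of the Maya cycle class $\bM/\Z$, the translation ambiguity being exactly the freedom in \eqref{eq:HMequiv}.

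\textbf{Main obstacle.} The principal difficulty is the converse: identifying the reconstructed potentials $U_i$ as genuine rational extensions of the harmonic oscillator. One must verify carefully that the trivial-monodromy conditions extracted in Propositions~\ref{prop:res2}--\ref{prop:res2j} hold \emph{at every pole and at all relevant orders} so that Proposition~\ref{prop:dg} applies at each singular point, and that the growth at infinity is genuinely quadratic rather than merely at-most-quadratic. A secondary subtlety is bookkeeping the additive constants: Oblomkov's result determines each $U_i$ only up to a constant, and one must pin these constants down using \eqref{Ui+Ui+1} and the relation $\lambda_i=2\mu_i+1$ so that the shift $\Delta=2k$ and the residue formula $a_i=2(\mu_i-\mu_{i+1})$ come out consistently. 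Once these are handled, matching $\sigma_i$ to the flip type via \eqref{eq:sign} and confirming uniqueness of $\bM/\Z$ is routine.
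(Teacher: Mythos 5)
Your proposal is correct and follows essentially the same route as the paper's own proof: the forward direction via the factorization chain $L_{M_i}$, Proposition~\ref{prop:seedfunc} and Proposition~\ref{prop:dchainfchain}, and the converse by combining Propositions~\ref{prop:wform}, \ref{prop:res2} and \ref{prop:res2j} to verify the Duistermaat--Gr\"unbaum conditions of Proposition~\ref{prop:dg}, then invoking Oblomkov's classification (Proposition~\ref{prop:Obl}) and the flip/seed-function characterization to assemble the Maya cycle, with uniqueness modulo translation coming from \eqref{eq:HMequiv} and Proposition~\ref{prop:MM+1}. The obstacles you flag (matching the vanishing residue conditions to \eqref{eq:ccond} order by order, and pinning down the additive constants) are exactly the computations the paper carries out explicitly via the Laurent expansion of $U=w'+w^2+\lambda$ and the choice $\nu=-m$ or $\nu=m-1$ according to the sign of the residue.
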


\begin{proof}
  Let $\bM$ be a $(2n+1,k)$ Maya cycle and 
  \[ L_i=L_{M_i}=-D_{zz}+U_i(z),\qquad i=0,\ldots,2n+1,\] the
  corresponding sequence of rational extensions defined by
  \eqref{eq:LUM} \eqref{eq:UMdef}. The cyclicity condition $M_{2n+1}=M_0+k$, together with  \eqref{eq:indexshift}, \eqref{eq:HMequiv}, and  \eqref{eq:UMdef}  imply that
  \[ U_{2n+1}=U_0+2k, \]
  so the sequence $L_0,\dots,L_{2n+1}$ is factorization chain with shift $\Delta=2k$.
  Using definition \eqref{eq:seedfunc}, set
  \[ \psi_i = \psi_{M_i,\mu_i},\quad i=0,\ldots, 2n, \]
  so that
  \[ L_i \psi_i = (2\mu_i + 1)\psi_i. \] Let
  $w_i,a_i,\; i=0,\ldots, 2n$ be defined by \eqref{eq:wsigMi}
  \eqref{eq:aimui}.  Then, by Proposition \ref{prop:dchainfchain}
  and by Proposition \ref{prop:seedfunc}, the tuple
  $(w_0,\dots,w_{2n}|a_0,\dots,a_{2n})$ is a rational solution of a
  $(2n+1)$-cyclic dressing chain with shift $\Delta=2k$.
  
  We now prove the converse statement. We first show that the
  conditions satisfied by each rational solution $w_i$ at a pole
  $\zeta$, as expressed by Propositions~\ref{prop:res2} and
  \ref{prop:res2j} are precisely the conditions that express local
  trivial monodromy of the corresponding potential $U_i$.  Denote by
  $w$ any of the rational functions of a tuple
  $(w_0,\dots,w_{2n}|a_0,\dots,a_{2n})$ that satisfies a
  $(2n+1)$-cyclic dressing chain and let $\zeta$ be a pole of $w$. By
  Propositions~\ref{prop:res2} and \ref{prop:res2j}, the Laurent
  expansion of $w$ at $z=\zeta$ is
\begin{equation}\label{eq:wcond2}
w= \sum_{j=-1}^\infty b_j (z-\zeta)^j,\qquad b_{-1}=m\in\Z,\qquad b_{2j}=0 \text{ for } j=0,\dots,|m|-1.
\end{equation}
Since $U=w'+w^2+\lambda$ by \eqref{eq:Riccati}, the Laurent expansion at $\zeta$ of  $U$ is:
\[ U=   \sum_{j\geq -2} c_j (z-\xi)^j \]
where 
\[c_{-2}=  m(m-1),\qquad c_{-1}= 2mb_0,\qquad c_{0}= b_0^2 + (2m+1)b_0+\lambda\]
and
\begin{align}
c_{2j-1}&= 2jb_{2j} +2\sum_{i=-1}^{j-1} b_i b_{2j-i-1}, && j \geq 1 \label{eq:codd}\\
c_{2j}&= (2j+1)b_{2j+1} +b_j^2 + 2\sum_{i=-1}^{j-1} b_i b_{2j-i-1}, &&j \geq 1
\end{align}

By Proposition~\ref{prop:dg}, $U$ has trivial monodromy at $z=\zeta$ if and only if there exists an integer $\nu\geq 1$ such that \eqref{eq:ccond} holds. Depending on the sign of $m$, we choose $\nu$ as
\[
\nu=\begin{cases}
-m & \text{ if } m<0,\\
m-1 & \text{ if } m>0,\\
\end{cases}
\]
Note that if $w$ has pole at $z=\zeta$ with residue $m=-1$ or $m=0$, the potential $U$ is regular in a neighbourhood of $\zeta$. For other integer values of $m$, the conditions \eqref{eq:wcond2} on the even coefficients of $w$ imply that the precise number of odd coefficients \eqref{eq:codd} of $U$ vanish, as required by Proposition~\ref{prop:dg}. We conclude that $U$ has trivial monodromy at $z=\zeta$, and since $\zeta$ is arbitrary, $U$ is a monodromy-free potential. From Proposition~\ref{prop:wform} and  \eqref{eq:Riccati}  it follows that $U$ is a monodromy-free potential with quadratic growth at infinity, so Proposition~\ref{prop:Obl} implies that $U$ is a rational extension of the harmonic oscillator, i.e. it has the form \eqref{eq:UMdef} for some Maya diagram $M$. Recalling that $w$ is the log-derivative of the seed function for the Darboux transformation (see \eqref{eq:Lipsii}), and that all quasi-rational seed functions of potentials  \eqref{eq:UMdef} are characterized by Proposition~\ref{prop:seedfunc}, it suffices to take the log-derivative of \eqref{eq:seedfunc} to achieve the desired result \eqref{eq:wsigMi}. This argument was applied on an arbitrary element of the dressing chain, and therefore it applies to all such elements.

The cyclicity of the dressing chain implies that the
corresponding Schr\"odinger operators in the factorization chain given
by Proposition~\ref{prop:dchainfchain} are rational extensions of the
harmonic oscillator \eqref{eq:LUM}-\eqref{eq:UMdef}, and the closure
condition \eqref{eq:shift} defines a Maya cycle
$\bM=(M_0\dots,M_{2n},M_{2n+1})$ where $ M_{2n+1}=M_0+k$. From
Proposition~\ref{prop:MM+1} we see that all Maya cycles in the
equivalence class $\bM/\Z$ lead to the same rational solution.
\end{proof}

%%%%%%%%%%%%%%%%%%%%%%%%%%%%%%%
\section{Classification of Cyclic Maya diagrams}\label{sec:cycles}
%%%%%%%%%%%%%%%%%%%%%%%%%%%%%%%

In Section~\ref{sec:veselov} we saw that a rational solution to a
$(2n+1)$-cyclic dressing chain corresponds to a factorization chain
\eqref{eq:shift} whose potentials are rational extensions of the
harmonic oscillator and, as a consequence, they are indexed by Maya
diagrams. More specifically, components of the solution have the form
\eqref{eq:wsigMi} and they are essentially determined by two Maya
diagrams connected by a flip operation (or equivalently, by two
potentials related by a Darboux transformation).  The periodicity of
the dressing chain thus translates into a cyclicity condition on the
Maya diagrams. A sequence of flip operations encodes multi-step
Darboux transformations (also called Crum transformation
\cite{crum1955sturm}) at the level of Maya diagrams.

For a fixed shift $k$, every Maya diagram $M$ is $(p,k)$-cyclic for
some value of the period $p$ (Proposition 5.3 in
\cite{clarkson2020cyclic}). However, the relevant problem we need to
address is the converse: that of enumerating and classifying all
cyclic Maya diagrams for a fixed odd period $p=2n+1$.  This
  classification will then be extended in Section \ref{sect:colseq}
  to a classification of Maya cycles.

The desired classification for cyclic Maya diagrams of a fixed period
can be achieved by employing the key concepts of \textit{genus} and
\textit{interlacing}. The genus of a Maya diagram counts esentially
the number of blocks of filled boxes $\boxdot$ in the finite part of
$M$, and the initial and ending position of each block are called the
\textit{block coordinates}. Specifying its block coordinates
determines a Maya diagram uniquely, and this will be the most
convenient representation for our purpose of classifying cyclic Maya
diagrams. Let us make all these notions more precise.

\begin{definition}
  \label{def:genus} Let $\bbeta\in \cZ^{2\ell+1}$ be an integer multiset of
  odd cardinality with non-decreasing enumeration
  $\b_0\le\b_1\le\cdots\le \b_{2\ell}$.  Let $\Xi(\bbeta)$ be the Maya
  diagram defined by
    \begin{equation}
    \label{eq:Xidef} \Xi(\bbeta)= (-\infty,\beta_0) \cup [\beta_1,\beta_2)
    \cup \ \cdots \cup [\beta_{2\ell-1},\beta_{2\ell}),
  \end{equation}
  where
  \begin{equation} \label{eq:interval} [m,n) = \{ j\in \Z \colon m\leq
    j < n\}.
  \end{equation}
We refer to the set $\bbeta$ as the \textit{block coordinates} of the Maya diagram $M=\Xi(\bbeta)$.  
  
\end{definition}
It is important to note that if $\bbeta$ has repeated elements then the same Maya diagram admits a representation in terms of a smaller number of block coordinates. We make this notion precise in the following proposition.

\begin{prop}
  \label{prop:Xibiject}
  For every Maya diagram $M$, there exists a unique $g\in \Nz$ and a
  unique \emph{set} $\bbeta\in \hcZ^{2g+1}$ such that $M=\Xi(\bbeta)$.
\end{prop}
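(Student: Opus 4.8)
The plan is to identify the block coordinates of $M$ with the positions at which membership in $M$ changes value, and to exploit the boundary behaviour of a Maya diagram to pin down both the parity of the number of coordinates and the uniqueness of the description. Throughout, I would represent $M$ by its indicator function $\chi\colon\Z\to\{0,1\}$, where $\chi(i)=1$ precisely when $i\in M$. By the definition of a Maya diagram, $\chi(i)=1$ for all $i$ sufficiently negative and $\chi(i)=0$ for all $i$ sufficiently positive.

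For existence, I would call $i\in\Z$ a \emph{change point} of $M$ if $\chi(i)\neq\chi(i-1)$, and let $\beta_0<\beta_1<\cdots<\beta_{2g}$ enumerate the change points in increasing order. Since $\chi$ is eventually constant in both directions, there are only finitely many change points; and since $\chi$ passes from the value $1$ on the far left to the value $0$ on the far right, an odd number of flips is required, so the number of change points is odd, which I write as $2g+1$ with $g\in\Nz$. Reading off the constant value of $\chi$ on the integer intervals between consecutive change points then gives $\chi=1$ on $(-\infty,\beta_0)$, $\chi=0$ on $[\beta_0,\beta_1)$, $\chi=1$ on $[\beta_1,\beta_2)$, and so on in alternation, terminating with $\chi=0$ on $[\beta_{2g},\infty)$. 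Comparing with \eqref{eq:Xidef}, this is exactly the statement $M=\Xi(\bbeta)$ for the strictly increasing set $\bbeta=\{\beta_0,\ldots,\beta_{2g}\}\in\hcZ^{2g+1}$, establishing existence.

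For uniqueness, the key observation I would record is that whenever $\bbeta'=\{\beta'_0<\cdots<\beta'_{2g'}\}$ is \emph{any} strictly increasing set with $M=\Xi(\bbeta')$, the entries of $\bbeta'$ are precisely the change points of $M$. Indeed, strict monotonicity guarantees that each half-open block $[\beta'_{2j-1},\beta'_{2j})$ appearing in \eqref{eq:Xidef} is non-empty and that consecutive blocks are separated by a non-empty gap; consequently $\chi$ genuinely changes value at each $\beta'_i$ and at no other integer. Since the set of change points is an intrinsic invariant of $M$, it follows that $\bbeta'=\{\beta_0,\ldots,\beta_{2g}\}$ and $2g'+1=2g+1$, so both $g$ and $\bbeta$ are uniquely determined.

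The only real subtlety, and the step I would treat most carefully, is the role of strict monotonicity in the uniqueness argument: it is precisely the hypothesis that $\bbeta$ is a \emph{set} rather than a multiset that rules out empty blocks (arising from a repetition $\beta_{2j-1}=\beta_{2j}$) and merged blocks (arising from a repetition $\beta_{2j}=\beta_{2j+1}$), either of which would break the bijection between coordinates and change points and would allow the same $M$ to be written with spuriously many coordinates. This is also what forces the cardinality to be odd: the number of coordinates equals the number of change points exactly when no such cancellation occurs, and the parity is then fixed by the differing left and right boundary values of $\chi$.
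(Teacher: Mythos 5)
Your proof is correct and takes essentially the same approach as the paper: your set of change points $\{i\in\Z \colon \chi(i)\neq\chi(i-1)\}$ is precisely the paper's symmetric difference $(M+1)\ominus M$, which is how the paper identifies $\bbeta$ as an intrinsic invariant of $M$ and thereby gets uniqueness. The only difference is presentational — you spell out the parity argument and the non-degeneracy role of strict monotonicity, which the paper leaves implicit in its assertion that the block decomposition is unique.
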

\begin{proof}
  Every Maya diagram $M$ has a unique description
  \[ M=(-\infty,\beta_0) \cup [\beta_1,\beta_2) \cup \ \cdots \cup
    [\beta_{2g-1},\beta_{2g}) \] where
  $\beta_0<\beta_1<\cdots< \beta_{2g}$ is an increasing integer
  sequence.  Observe that
  $\bbeta=\{ \beta_0,\beta_1,\ldots, \beta_{2g}\}$ is precisely the
  the set of integers that are in $M$ but are not in $M+1$ and viceversa.  Thus, the desired set $\bbeta \in \hcZ^{2g+1}$ can be given
  as
  \begin{equation}
    \label{eq:bbetaM+1}
     \bbeta = (M+1) \ominus M, 
  \end{equation}
  where $\ominus$ denotes the symmetric set difference defined in \eqref{eq:ominusdef}.
\end{proof}

  Given a multiset
$\bbeta'=\{\beta_0^{n_0},\dots,\beta_{p}^{n_p}\}\in\cZ^{2\ell+1}$ with
elements $\beta_i$ and multiplicities $n_i\in\N_0$, such that
$n_0+\cdots+n_p=2\ell+1$, the corresponding set of  block
  coordinates described by Proposition~\ref{prop:Xibiject} is given
by
\[ \bbeta= \{\beta_0^{m_0},\dots,\beta_{p}^{m_p}\}\in\hat\cZ^{2g+1},\quad \text{ where } m_i=n_i\!\!\!\!\mod(2),\quad i=0,\dots,p,  \]
and $m_0+\cdots+m_p=2g+1$. It is clear from \eqref{eq:Xidef} that $\Xi(\bbeta')=\Xi(\bbeta)$.

\begin{definition}
  Given a Maya diagram $M$, let $\bbeta$ be the set of cardinality
  $2g+1,\; g\in\Nz$ described in the above Proposition.  We say that $g$ is the \textit{genus} of $M$.
\end{definition}
\begin{prop}
  A genus $g$ Maya diagram is $(2g+1,1)$-cyclic.
\end{prop}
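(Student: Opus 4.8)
The plan is to show directly that a genus $g$ Maya diagram $M$ satisfies $\phi_\bbeta(M) = M+1$ for a suitable multiset $\bbeta$ of cardinality $2g+1$, which by Definition~\ref{def:cyclicM} establishes that $M$ is $(2g+1,1)$-cyclic. The natural candidate for $\bbeta$ is precisely the block-coordinate set furnished by Proposition~\ref{prop:Xibiject}, namely $\bbeta = (M+1) \ominus M$, which has cardinality $2g+1$ by the definition of genus. So the entire claim reduces to verifying the single identity $\phi_{\bbeta}(M) = M+1$ for this particular $\bbeta$.

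First I would recall that a multi-flip at a \emph{set} $\bbeta$ toggles membership at exactly the positions listed in $\bbeta$, and leaves all other positions unchanged; this follows from Definition~\ref{def:multiflip} together with the fact that the single flips $\phi_m$ commute and are involutions. Next I would observe that by the characterization \eqref{eq:bbetaM+1}, the set $\bbeta = (M+1)\ominus M$ consists of exactly those integers at which $M$ and $M+1$ differ. Therefore, for a position $j \in \bbeta$, the flip toggles the membership of $j$ in $M$; and since $j$ being in $\bbeta$ means precisely that $j\in M \Leftrightarrow j\notin (M+1)$, toggling $M$ at $j$ yields the membership status of $M+1$ at $j$. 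For a position $j\notin\bbeta$, no flip occurs, and $j\in M \Leftrightarrow j\in (M+1)$ by definition of the symmetric difference, so again the membership agrees with $M+1$. Combining both cases position-by-position gives $\phi_\bbeta(M) = M+1$, as desired.

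The key structural point supplying the cardinality is that the block-coordinate representation $M = \Xi(\bbeta)$ from \eqref{eq:Xidef} makes the symmetric difference $(M+1)\ominus M$ transparent: the boundaries of the $g$ finite blocks $[\beta_{2i-1},\beta_{2i})$ together with the single left endpoint $\beta_0$ of the semi-infinite block are exactly the $2g+1$ places where shifting by one changes membership. This is the content of the observation made in the proof of Proposition~\ref{prop:Xibiject}, which I would invoke rather than reprove. Since the matching multiset has cardinality $2g+1$, we obtain $\phi_\bbeta(M)=M+1$ with $|\bbeta| = 2g+1$ and shift $k=1$.

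I do not expect a serious obstacle here; the statement is essentially a repackaging of Proposition~\ref{prop:Xibiject}. The only point requiring mild care is the bookkeeping that $\bbeta$, being a genuine \emph{set} of size $2g+1$ (odd cardinality), qualifies as an element of $\cZ^{2g+1}$ so that $\phi_\bbeta$ is a well-defined multi-flip of the right period in the sense of Definition~\ref{def:cyclicM}; this is immediate since $\hcZ^{2g+1}\subset\cZ^{2g+1}$ by Definition~\ref{def:multiset}. Thus the proof amounts to assembling these observations into the single verification $\phi_{(M+1)\ominus M}(M) = M+1$.
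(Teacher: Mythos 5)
Your proposal is correct and follows essentially the same route as the paper: the paper's proof consists precisely of invoking \eqref{eq:bbetaM+1}, i.e.\ $\bbeta = (M+1)\ominus M$ with $|\bbeta|=2g+1$ from Proposition~\ref{prop:Xibiject}, and concluding $\phi_{\bbeta}(M)=M+1$. You have merely spelled out the position-by-position toggling argument that the paper leaves implicit, which is a harmless (indeed welcome) elaboration rather than a different approach.
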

\begin{proof}
  As a direct  consequence of \eqref{eq:bbetaM+1} we have
  \[ \phi_{\bbeta}(M) = M+1.\]
\end{proof}

Let $\cZo$ denote the set of multisets of odd cardinality, and let
$\hcZo$ denote the set of sets of odd cardinality. The mapping
\eqref{eq:Xidef} defines a bijection $\Xi:\hcZo\to \cM$, which maps
cardinality to genus.  A multi-set $\bbeta\in \cZ^{(2g+1)}$
corresponds to a non-degreasing sequence
$\beta_0\le \beta_1\le \cdots\le \beta_{2g}$ and can also be used to
define a Maya diagram using \eqref{eq:Xidef}.  However, if $\bbeta$
has some repeated elements, then some of the blocks in
\eqref{eq:Xidef} will coalesce and result in a Maya diagram whose
genus is strictly smaller than $g$.  This observation may be
encapsulated by saying that the extended mapping $\Xi:\cZo\to M$ is
onto, but not one-to-one, and that the cardinality of the multiset
dominates the genus of the corresponding Maya diagram.

The visual explanation  of the  genus concept is clear in Figure~\ref{fig:genusM}. Removing the infinite initial 
  $\boxdot$ and  trailing $\emptybox$ segments, a
  Maya diagram consists of alternating empty $\emptybox$ and filled
  $\boxdot$ segments of finite variable length.  The genus $g$ counts the
  number of such pairs.  The  block coordinates $\beta_{2i}$ indicate the starting
  positions of the empty segments, and  $\beta_{2i+1}$ signal the starting positions of the filled
  segments.  Finally, note that $M$ is in standard form if and only if
  $\beta_0=0$.

With the well known corespondence between Maya diagrams and partitions, it is worth noting that the genus of a Maya diagram coincides with the number of distinct parts of the partition, \cite{clarkson2020cyclic}. This feature has been studied earlier in \cite{alladi1999fundamental}, in connection with some identities in the theory of $q$-series.

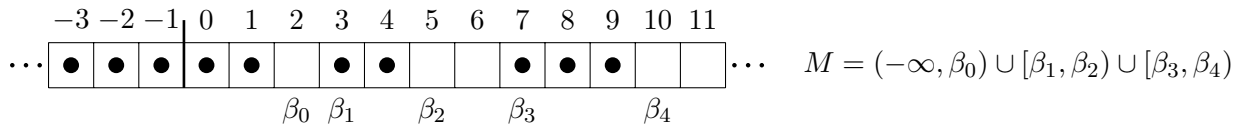
\begin{figure}[h]
\begin{tikzpicture}[scale=0.6]

\draw  (1,1) grid +(15 ,1);

\path [fill] (0.5,1.5) node {\huge ...} 
++(1,0) circle (5pt) ++(1,0) circle (5pt)  ++(1,0) circle (5pt) 
++(1,0) circle (5pt) ++(1,0) circle (5pt) 
++(2,0) circle (5pt) ++(1,0) circle (5pt) 
++ (3,0) circle (5pt)  ++(1,0) circle (5pt)   ++ (1,0) circle (5pt) 
++ (3,0) node {\huge ...} +(1,0) node[anchor=west] { $M =  (-\infty,\beta_0)\cup [ \beta_1,\beta_2) \cup [ \beta_3,\beta_4)$};

\draw[line width=1pt] (4,1) -- ++ (0,1.5);

\foreach \x  in {-3,...,11} 	\draw (\x+4.5,2.5)  node {$\x$};
\path (6.5,0.5) node {$\beta_0$} ++ (1,0) node {$\beta_1$}
++ (2,0) node {$\beta_2$}++ (2,0) node {$\beta_3$}++ (3,0) node {$\beta_4$}
;
\end{tikzpicture}

\caption{Block coordinates $\bbeta=\{2,3,5,7,10\}$
  of a genus $2$ Maya diagram. }\label{fig:genusM}
\end{figure}

\subsection{Modular decomposition and colouring.}

We have just seen how to characterize cyclic Maya diagrams $M\in\cM_g$
with shift $k=1$: one needs $2g+1$ flips given by the block
coordinates of $M$. The following dual notions of \textit{interlacing}
and \textit{modular decomposition} allow us to leverage this result to
arbitrary shifts $k$.  Note that, due to the reversal symmetry
\eqref{eq:reversal2}, we can restrict the analysis to positive shifts
$k>0$ without loss of generality.

The concept of block coordinates from Definition~\ref{def:genus} can
be extended naturally to Maya diagrams expressed as the interlacing of
$k$ Maya diagrams $M^{(0)}, \ldots, M^{(k-1)}$.  To describe $M$ using
the block coordinates of each $M^{(j)},\; j=0,\ldots, k-1$, we
introduce the notion of coloured multisets.  In the following section
we will extend this idea to the notion of coloured sequences, which
will furnish us with a combinatorial representation of rational
solutions compatible with the action of the extended affine Weyl group
$\EAWeyln$.

\begin{definition}\label{def:interlacing}
  Fix a $k\in \N$ and let $M^{(0)}, M^{(1)},\ldots M^{(k-1)}\subset \Z$ be sets
  of integers.  We define the \textit{interlacing of $k$ sets} to be the set
  \begin{equation}\label{eq:interlacing}
    \Theta_k\left(M^{(0)},
      M^{(1)},\ldots M^{(k-1)}\right) 
    = \bigcup_{i=0}^{k-1} (k M^{(i)} +i),
 \end{equation}
 where
 \[ kM +j = \{ km + j \colon m\in M \},\quad M\subset \Z.\]
Conversely, given a set of integers $M\subset \Z$ and a positive integer $k\in \N$, we can define the \textit{$k$-modular decomposition} of $M$ as the $k$-tuple of sets $\left(M^{(0)}, M^{(1)},\ldots M^{(k-1)}\right)$, where
 \[ M^{(i)} = \{ m\in \Z \colon km+i \in M\},\quad i=0,1,\ldots, k-1.\]
\end{definition}
\noindent
These operations are clearly the inverse of each other, in the sense
that $\left(M^{(0)}, M^{(1)},\ldots M^{(k-1)}\right)$ is the
$k$-modular decomposition of $M$ if and only if
$M=\Theta_k\lp M^{(0)}, M^{(1)},\ldots M^{(k-1)}\rp$.

Although interlacing and modular decomposition apply to general sets,
they have a well defined restriction to Maya diagrams. Indeed, if
$M=\Theta_k\left(M^{(0)}, M^{(1)},\ldots M^{(k-1)}\right)$ and $M$ is
a Maya diagram, then $M^{(0)}, M^{(1)},\ldots M^{(k-1)}$ are also Maya
diagrams. The converse is also true.

The notions of interlacing and modular decomposition also apply to the
setting of finite integer multisets.  Let $A \sqcup B$ denote the
disjoint union of multisets $A,B$; i.e. $\sqcup$ is the operation that
adds element multiplicities.  It is clear that if
$\bgamma^{(i)} \in \cZ^{p_i},\quad i=0,1,\ldots, k-1$, then 

\begin{align}
  \label{eq:bbeTheta}
  \bgamma^{[k]}=\Theta_k(\bgamma^{(0)},\ldots, \bgamma^{(k-1)})
  = \bigsqcup_{i=0}^{k-1} (k \bgamma^{(i)} +i),
\end{align}
is an integer multiset of cardinality
\[ p=p_0+\cdots + p_{k-1},\] and that
$(\bgamma^{(0)},\ldots, \bgamma^{(k-1)})$ serve as the $k$-modular
decomposition of $\bgamma^{[k]}$.

It should be noted that modular decompositions of Maya diagrams have
been considered previously by Noumi (Proposition 7.12 in
\cite{noumi2004painleve}), although in a slightly different context.
Noumi describes the action of of B\"acklund transformations on
  single Maya diagrams, while in our context it is convenient to
  describe the corresponding action on the full Maya cycle. We shall
  address this matter further in Section~\ref{sec:sym}. After
introducing the notions of genus and interlacing, we recall without
proof the main result to characterize cyclic Maya diagrams.

\begin{prop}[Theorem 4.8 in \cite{clarkson2020cyclic}]
  \label{thm:Mp}
  Consider an arbitrary Maya diagram $M$, let $M=\Theta_k\left(M^{(0)}, M^{(1)},\ldots M^{(k-1)}\right)$ be its $k$-modular decomposition, and $g_i$ the genus of $M^{(i)}$ for $i=0,1,\ldots, k-1$.  Then, $M$ is $(p,k)$-cyclic where
  \begin{equation}
    \label{eq:pgi}
    p = p_0+p_1+\cdots + p_{k-1},\qquad p_i = 2g_i + 1.
  \end{equation}
\end{prop}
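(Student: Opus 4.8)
The plan is to reduce the claim for arbitrary shift $k$ to the already-established genus result (a genus $g$ Maya diagram is $(2g+1,1)$-cyclic) by exploiting the compatibility of multi-flips with $k$-modular decomposition. The key structural fact I would establish first is that the flip operation and translation interact with the interlacing $\Theta_k$ in a controlled way: a flip on one of the component diagrams $M^{(i)}$ at position $m$ corresponds precisely to a flip on $M$ at position $km+i$, and translating a single component $M^{(i)} \mapsto M^{(i)}+1$ corresponds to translating the full diagram $M \mapsto M+k$ (since $k(M^{(i)}+1)+i = kM^{(i)}+i+k$). This dictionary is the heart of the argument and I expect the only mild subtlety to be checking that $\Theta_k$ genuinely commutes with these operations on the level of the underlying integer sets, which follows directly from Definition~\ref{def:interlacing}.

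With this dictionary in hand, I would proceed as follows. Let $M = \Theta_k(M^{(0)},\ldots,M^{(k-1)})$ with each $M^{(i)}$ of genus $g_i$. By the genus proposition, each component satisfies $\phi_{\bbeta^{(i)}}(M^{(i)}) = M^{(i)}+1$, where $\bbeta^{(i)}$ is the set of $2g_i+1$ block coordinates of $M^{(i)}$ given by \eqref{eq:bbetaM+1}. Translating each component by $1$ simultaneously translates $M$ by $k$, so I want to realize the single translation $M \mapsto M+k$ as one combined multi-flip on $M$. Using the dictionary, the flips $\phi_{\bbeta^{(i)}}$ applied to $M^{(i)}$ transcribe into flips on $M$ at the interlaced positions $k\beta^{(i)}_\ell + i$; collecting these across all $i=0,\ldots,k-1$ yields a multiset $\bbeta = \Theta_k(\bbeta^{(0)},\ldots,\bbeta^{(k-1)})$ of cardinality $p = \sum_{i=0}^{k-1}(2g_i+1) = \sum p_i$, exactly as in \eqref{eq:bbeTheta}.

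The final step is to verify that $\phi_{\bbeta}(M) = M+k$, which combines the componentwise flip identities. Because flips at distinct positions commute and each component evolves independently under $\Theta_k$, applying $\phi_{\bbeta}$ to $M$ is the same as applying $\phi_{\bbeta^{(i)}}$ to each $M^{(i)}$ in parallel; the result is $\Theta_k(M^{(0)}+1,\ldots,M^{(k-1)}+1) = M+k$. This exhibits $M$ as $(p,k)$-cyclic with the stated $p$, completing the proof. The main obstacle, as noted, is purely bookkeeping: one must confirm that the positions $k\beta^{(i)}_\ell+i$ are genuinely distinct across different $i$ and $\ell$ (guaranteed since they occupy distinct residues mod $k$ across different components, and the $\beta^{(i)}_\ell$ are distinct within each component), so that $\bbeta$ behaves as an honest multi-flip and no unintended cancellations occur.
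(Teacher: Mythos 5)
Your proposal is correct and takes essentially the same route as the paper, which states this result without a detailed proof (deferring to the cited reference) but whose one-sentence sketch --- that a shift of $M$ by $k$ amounts to shifting each $M^{(i)}$ by one, requiring precisely $p$ flips at the interlaced locations \eqref{eq:bbeTheta} --- is exactly your reduction to the genus result via the dictionary between flips on $M^{(i)}$ at $m$ and flips on $M$ at $km+i$. The same commutation of $\Theta_k$ with multi-flips, including your bookkeeping that positions from different components lie in distinct residue classes mod $k$, is also what the paper uses in its proof of the subsequent proposition on $k\supth$ order block coordinates.
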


The proof of this Proposition essentially states that a shift of $M$
by $k$ can only be done if each of the $M^{(i)}$ is shifted by one,
for which precisely $p$ flip operations at locations
\eqref{eq:bbeTheta} are needed \cite{clarkson2020cyclic}.
Proposition~\ref{thm:Mp} establishes a link between the shift $k$, the
period $p$ and the genera of the Maya diagrams that form the
$k$-modular decomposition of $M$. Applying this Proposition for a
fixed period $p=2n+1$, one can enumerate all possible cyclic Maya
diagrams with that period, so it is a key element towards the full
classification of rational solutions to the dressing chain. From \eqref{eq:reversal2} and \eqref{eq:pgi} we see the only possible values of the shift $\Delta$ for an odd-cyclic dressing chain.
\begin{cor}\label{cor:k}
  For a fixed period $p=2n+1\in \N$, there exist $(2n+1)$-cyclic Maya diagrams with
  shifts $k=\pm (2n+1),\pm(2n-1),\dots,1$, and no other shifts are  possible.
\end{cor}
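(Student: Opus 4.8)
The plan is to read off the statement directly from Proposition~\ref{thm:Mp} together with the reversal symmetry \eqref{eq:reversal2}. Fix the period $p=2n+1$. By Proposition~\ref{thm:Mp}, any $(p,k)$-cyclic Maya diagram $M$ (with $k>0$, which we may assume by reversal symmetry) has a $k$-modular decomposition into $k$ Maya diagrams $M^{(0)},\ldots,M^{(k-1)}$ of genera $g_0,\ldots,g_{k-1}$, and the period is forced to be
\[ p = \sum_{i=0}^{k-1} p_i,\qquad p_i = 2g_i+1. \]
So the first step is to observe that the constraint $p=2n+1$ becomes a question about which values of $k$ admit nonnegative integers $g_0,\ldots,g_{k-1}$ with $\sum_{i=0}^{k-1}(2g_i+1)=2n+1$.

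The second step is the elementary number-theoretic analysis of this equation. Writing $G=\sum_i g_i\ge 0$, the constraint reads $2G+k = 2n+1$, i.e. $k = 2n+1-2G$. Since $G$ ranges over all nonnegative integers for which $k=2n+1-2G$ stays positive (and each such $k$ is realizable, e.g. by taking $g_0=G$ and $g_1=\cdots=g_{k-1}=0$), the admissible positive shifts are exactly $k = 2n+1, 2n-1, \ldots, 3, 1$, i.e. the odd positive integers not exceeding $2n+1$. In particular $k$ must be odd, which matches the parity of $p$, and $k$ cannot exceed $p$ because each summand $p_i$ is at least $1$.

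The third and final step is to restore the negative shifts. Since the reversal symmetry \eqref{eq:reversal2} sends a dressing chain of shift $\Delta$ to one of shift $-\Delta$, and since by Theorem~\ref{thm:characterization} the shift of the chain is $\Delta = 2k$, every positive admissible $k$ has its negative counterpart $-k$ realized by the reversed chain. This yields exactly $k=\pm(2n+1),\pm(2n-1),\dots,\pm 1$ and no others, which is the assertion.

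I do not expect any genuine obstacle here: the content is entirely contained in Proposition~\ref{thm:Mp}, and the corollary is just the arithmetic observation that a sum of $k$ odd positive integers equals $2n+1$ precisely when $k$ is odd and $1\le k\le 2n+1$. The only point requiring a word of care is the realizability (onto) direction --- one must check that every such $k$ actually occurs, which is immediate by exhibiting a decomposition with a single nonzero genus, so the enumeration is sharp rather than merely an upper bound on the possible shifts.
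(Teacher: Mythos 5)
Your proof is correct and follows essentially the same route as the paper, which derives the corollary directly from the period formula \eqref{eq:pgi} of Proposition~\ref{thm:Mp} (a sum of $k$ odd parts $2g_i+1$ equals $2n+1$ forces $k$ odd and $k\le 2n+1$) together with the reversal symmetry \eqref{eq:reversal2} for negative shifts. Your explicit check of realizability (taking $g_0=G$ and all other genera zero) is a detail the paper leaves implicit, but it is the same argument.
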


\begin{remark}
  The highest shift $k=p$ corresponds to the interlacing of $p$ trivial (genus
  0) Maya diagrams. This class of solutions has been described already by Tsuda \cite{tsuda2005universal}, where the interlacing of $p$ genus-0 Maya diagrams correspond to $p$-reduced partitions. For $p=3$, these solutions are known as Okamoto polynomials, so in general the highest shift $k=2n+1$ dressing chains generalize the Okamoto class.

\end{remark}

We now introduce the notion of colouring, a useful visual
representation of modular decomposition.  Colouring also plays an
essential role in the formulation of the classification results that
follow.

\begin{definition}
  \label{def:setcolour}
  A $k$-coloured multiset is the assignment of one of $k$ colours to
  the elements of a given integer multiset. Formally, we represent a $k$-coloured multiset by $\bgamma=\{( \gamma_i,C_i)\}_{i=1}^p$ where $\gamma_i\in\Z$ and
  $C_i\in\ZkZ=\{0,1,\ldots, k-1\}$ is the ``colour'' of the $i$-th element. A $k$-coloured multiset defines the following multiset  decomposition
  \begin{equation}\label{eq:gammadecomp}
    \bgamma =  \bgamma^{(0)}\sqcup \cdots \sqcup \bgamma^{(k-1)},
  \end{equation}
  where $\bgamma^{(j)},\; j\in \ZkZ$ is the
  sub-multiset of elements having colour $j$.  We  use $\cZp_k$ to
  denote the set of all $k$-coloured multisets of cardinality $p$. Let
  $p_j,\; j\in \ZkZ$ denote the cardinality of $\bgamma^{(j)}$.  We
  will call the sequence $\bp=(p_0,\ldots, p_{k-1})$ the signature of
  $\bgamma$.  Observe that, by definition, $\bp$ serves as a
  composition of $p$, namely $p=p_0+\cdots+ p_{k-1}$.
\end{definition}
\noindent
We may now express the interlacing operator $\Theta_k$ defined in
\eqref{eq:bbeTheta} as the bijection $\Theta_k\colon \cZ^p_k\to \cZ^p$
with action given by \eqref{eq:bbeTheta}

\begin{definition}
  \label{def:flipset}
  Fix a $k\in \N$ and let $M$ be a Maya diagram. We refer to
  \begin{equation}
    \label{eq:bmudef}
    \bgamma^{[k]}=(M+k)\ominus M
  \end{equation}
  as the \textit{$k\supth$ order flip set} of $M$. We will call
  $\bgamma = \Theta_k^{-1}(\bgamma^{[k]})$ the $k\supth$ order block
  coordinates of $M$, and refer to the corresponding
  $\bp=(p_0,\ldots, p_{k-1})$ as the $k\supth$ order signature of $M$.
\end{definition}
\noindent
Observe that \eqref{eq:bmudef} entails
\begin{equation}
  \label{eq:bmuMk}
  \phi_{\bgamma^{[k]}}(M) = M+k 
\end{equation}

Thus, the \textit{$k\supth$ order flip set} $\bgamma^{[k]}$  is the minimum set of flips that turns $M$ into $M+k$. This set coincides with the block coordinates $\bbeta$ of $M$  when $k=1$, but otherwise the two sets are different.

\begin{definition}
  We say that that $\bgamma\in \cZ^p_k$ is an oddly coloured multiset
  if the entries of the corresponding signature $\bp$ are odd, that is
  if each colour occurs an odd number of times\footnote{If $p$ is odd,
    then the number of colours $k$ in an odd colouring must also be
    odd.}.  We  say that $\bgamma\in \cZp_k$ is a $k$-coloured set
  if the corresponding $\bgamma^{[k]}=\Theta_k(\gamma,C)$ is a set, or equivalently if each of the $\gamma^{(j)}$ in the decomposition \eqref{eq:gammadecomp} do not contain repeated elements.

\end{definition}

Note that for a given oddly coloured multiset $\bgamma\in\cZ^p_k$, $\bgamma^{[k]}\in\Z^p$ defined by \eqref{eq:bbeTheta} is in general a multi-set such that \eqref{eq:bmuMk} holds. However, the $k\supth$ order flip set defined by \eqref{eq:bmudef} is always a set as it contains no repeated integers.

\begin{prop}[Proposition 4.13 in \cite{clarkson2020cyclic}]
  Fix a $k\in \N$.  For every Maya diagram $M$, the corresponding
  $k\supth$ order block coordinates are an oddly $k$-coloured
  set. Conversely, for an oddly coloured multiset
  $\bgamma\in \cZ^p_k$, define
  \begin{equation}
    \label{eq:Xikdef}
     \Xi_k(\bgamma) =  \Theta_k(\Xi(\bgamma^{(0)}), \ldots,
    \Xi(\bgamma^{(k-1)})). 
  \end{equation}
  Then, $M=\Xi_k(\bgamma)$ is a $(p,k)$-cyclic Maya diagram.  
\end{prop}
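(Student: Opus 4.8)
The plan is to establish the claimed bijection/characterization in two directions, exploiting the fact that the interlacing operator $\Theta_k$ and the modular decomposition are mutually inverse, together with the genus-based description of cyclic Maya diagrams already proven. First I would prove the forward direction: given an arbitrary Maya diagram $M$, I need to show its $k\supth$ order block coordinates $\bgamma=\Theta_k^{-1}(\bgamma^{[k]})$ form an oddly $k$-coloured set. The key observation is that the $k\supth$ order flip set is $\bgamma^{[k]}=(M+k)\ominus M$, and by the modular decomposition compatibility, the $j\supth$ colour class $\bgamma^{(j)}$ records exactly the positions where $M^{(j)}$ and $M^{(j)}+1$ differ. Indeed, applying the identity $(M+k)\ominus M = \Theta_k\bigl((M^{(0)}+1)\ominus M^{(0)},\ldots,(M^{(k-1)}+1)\ominus M^{(k-1)}\bigr)$, which follows from $M+k=\Theta_k(M^{(0)}+1,\ldots,M^{(k-1)}+1)$, I reduce each colour class to $\bgamma^{(j)}=(M^{(j)}+1)\ominus M^{(j)}$. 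By Proposition~\ref{prop:Xibiject} and equation~\eqref{eq:bbetaM+1}, this symmetric difference is precisely the set of block coordinates of $M^{(j)}$, hence a set (no repetitions) of odd cardinality $2g_j+1$. This simultaneously gives that $\bgamma$ is a set in each colour and that the signature entries $p_j=2g_j+1$ are all odd, i.e. $\bgamma$ is oddly coloured.

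For the converse, I would start from an oddly coloured multiset $\bgamma\in\cZ^p_k$ and define $M=\Xi_k(\bgamma)=\Theta_k(\Xi(\bgamma^{(0)}),\ldots,\Xi(\bgamma^{(k-1)}))$ as in \eqref{eq:Xikdef}. Setting $M^{(j)}=\Xi(\bgamma^{(j)})$, each $M^{(j)}$ is a genus-$g_j$ Maya diagram since $\bgamma^{(j)}$ has odd cardinality $2g_j+1$ (using the bijection $\Xi:\hcZo\to\cM$ and the fact that any multiset of odd cardinality maps under $\Xi$ to a Maya diagram whose genus is at most $g_j$; but here I must be careful about whether $\bgamma^{(j)}$ is a genuine set). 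The essential computation is then to apply Proposition~\ref{thm:Mp} directly: the $k$-modular decomposition of $M$ is exactly $(M^{(0)},\ldots,M^{(k-1)})$ by the inverse relationship between $\Theta_k$ and modular decomposition, so $M$ is $(p,k)$-cyclic with $p=\sum_j(2g_j+1)=\sum_j p_j$, which is the claimed period.

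The main obstacle I anticipate is the subtle distinction between multisets and sets in the converse direction. An oddly coloured \emph{multiset} $\bgamma$ may have repeated elements within a colour class $\bgamma^{(j)}$; when $\Xi$ is applied to such a multiset, coalescing blocks reduce the genus strictly below $(|\bgamma^{(j)}|-1)/2$, as noted in the discussion following Proposition~\ref{prop:Xibiject}. Thus the genus $g_j$ of $M^{(j)}=\Xi(\bgamma^{(j)})$ need not equal $(p_j-1)/2$, so the period realized by $M$ via Proposition~\ref{thm:Mp} could be smaller than $p$. The resolution is that the statement only asserts $M$ is $(p,k)$-cyclic, and being $(p',k)$-cyclic for a smaller odd $p'\le p$ does not preclude $(p,k)$-cyclicity: one can pad with trivial flip-pairs. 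Concretely, the relation $\phi_{\bgamma^{[k]}}(M)=M+k$ from \eqref{eq:bmuMk} holds by construction whether or not $\bgamma$ has repeated elements, since the multi-flip $\phi_{\bgamma^{[k]}}$ depends only on the parity of each element's multiplicity, and the oddly coloured condition guarantees the net effect in each residue class is a genuine shift by one. I would therefore emphasize that \eqref{eq:bmuMk} is the clean statement carrying the content, and that Proposition~\ref{thm:Mp} gives the refined period once one passes to the reduced block-coordinate set described in the paragraph after Proposition~\ref{prop:Xibiject}.
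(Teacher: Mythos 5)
Your proposal is correct and follows essentially the same route as the paper's proof: in the forward direction you decompose the flip set $(M+k)\ominus M$ by residue class and invoke Proposition~\ref{prop:Xibiject} together with \eqref{eq:bbetaM+1} to identify each colour class $\bgamma^{(j)}$ with the odd-cardinality block coordinate set of $M^{(j)}$, and in the converse direction you rest the claim on the by-construction identity $\phi_{\Theta_k(\bgamma)}(M)=M+k$, which is all that Definition~\ref{def:cyclicM} requires since it admits multi-flips indexed by multisets. Your extra paragraph on repeated elements (coalescing blocks lowering the genus, so that Proposition~\ref{thm:Mp} alone would only give a shorter period, remedied by padding with cancelling flip pairs) is a correct and welcome elaboration of a point the paper compresses into the words ``by construction,'' not a departure from its argument.
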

\begin{proof}
  Let $M$ be a Maya diagram and $\bgamma^{[k]}=(M+k)\ominus M$ its $k\supth$
  order flip set.  Set $\bgamma=\Theta_k^{-1}(\bgamma^{[k]})$ and observe
  that
  \begin{align*}
    \phi_{\bgamma^{[k]}}(M)                    &= M+k\\
    &= \Theta_k(\phi_{\bgamma^{(0)}}M^{(0)} ,\ldots,
                     \phi_{\bgamma^{(k-1)}}M^{(k-1)})\\
    &= \Theta_k(M^{(0)} + 1,\ldots, M^{(k-1)}+1) 
  \end{align*}
  It follows that
  \[ M^{(j)}= \Xi(\bgamma^{(j)}),\quad j\in \Z/k\Z,\] and hence that
  each $\bgamma^{(j)},\; j\in \ZkZ$ has odd cardinality.  

  We turn to the proof of the converse. Suppose that
  $\bgamma\in \cZ^p_k$ is an oddly coloured multiset, and let
  $M=\Xi_k(\bgamma)$.  Set $\bgamma^{[k]}= \Theta_k(\bgamma)$ and observe
  that, by construction, $\phi_{\bgamma^{[k]}}(M) = M+k$. This proves the
  second assertion.
\end{proof}

\begin{example}\label{ex:1}
  Figure~\ref{fig:interlacing} provides a visual interpretation of the
  modular decomposition of a Maya diagram $M$ into Maya
  diagrams $M^{(0)}, M^{(1)}, M^{(2)}$ of genus $1,2,0$, respectively.
  Each of these Maya diagrams is dilated by a factor of $3$, shifted by one unit
  with respect to the previous one and superimposed.

The block coordinates of each of the three diagrams are given by:  
 \[
\begin{aligned}
&\gamma^{(0)}=\{0,1,4\},&\qquad &M^{(0)}=\Xi(\gamma^{(0)})=(-\infty,0)\cup[1,4)\\
&\gamma^{(1)}=\{-1,1,3,5,6\},&\qquad &M^{(1)}=\Xi(\gamma^{(1)})=(-\infty,-1)\cup[1,3)\cup[5,6)\\
&\gamma^{(2)}=\{5\},&\qquad &M^{(2)}=\Xi(\gamma^{(2)})=(-\infty,5)
\end{aligned}
\]
The set of colours is $\Z/3\Z=\{0,\rd{1},\bl{2}  \}$. The interlacing of these three Maya diagrams is described by the $3^{\textrm{rd}}$ order block coordinates
\[ \bgamma= \gamma^{(0)}\sqcup \gamma^{(1)}\sqcup \gamma^{(0)} = \{ 0,1,4,\rd{-1,1,3,5,6}, \bl{5}   \},  \]  
which form a $3$-coloured set of cardinality $p=p_0+p_1+p_2=3+5+1=9$. The signature is therefore $\bp=(3,5,1)$. The $3^{\textrm{rd}}$ order flip set is given by 
\[ \bgamma^{[3]}=\Theta_3(\{0,1,4,\rd{-1,1,3,5,6}, \bl{5}\})=
  \{0,3,12,-2,4,10,16,19,17 \}\in \cZ^{9}\] It is straightforward to
verify in this example that \eqref{eq:bmuMk} holds. Note that the
interlaced diagram $M$ has genus $5$ and its block coordinates
$\bbeta$ are given by
\[  \bbeta =\{  -2,-1,0,2,10,11,12,14,15,16,17  \}\in\hat\cZ^{11} \]
In general, there are no simple expressions to derive $\bbeta$ from $\bgamma^{(i)}$ or to connect the genera $g_i$ of the coloured Maya diagrams $M^{(i)}$ with the genus $g$ of the resulting interlaced Maya diagram $M$. However, the block coordinates $\bbeta$ of interlaced Maya diagrams $M$ do not play any significant role in the construction of Maya cycles and rational solutions.
\end{example}

\begin{figure}[h]
\begin{tikzpicture}[scale=0.6]

\draw  (1,3) grid +(11 ,1);

\path [fill,color=black] (0.5,3.5) node {\huge ...}  ++(1,0) circle
(5pt) ++(1,0) circle (5pt) ++(1,0) circle (5pt) ++(1,0) circle (5pt)
++(2,0) circle (5pt) ++(1,0) circle (5pt) ++(1,0) circle (5pt) ++
(4,0) node {\huge ...} +(1,0) node[anchor=west,color=black] {
  $M_0 = \Xi(\{0,1,4\}),\;\qquad\quad\,\,\,\, g_0 = 1$};

\draw[line width=1pt] (5,3) -- ++ (0,2);

\foreach \x in {-4,...,6} 	\draw (\x+5.5,4.5)  node {$\x$};

\draw  (1,1) grid +(11 ,1);

\path [fill,color=red] (0.5,1.5) node {\huge ...} 
++(1,0) circle (5pt) ++(1,0) circle (5pt)  ++(1,0) circle (5pt) 
++(3,0) circle (5pt) ++(1,0) circle (5pt) 
++ (3,0) circle (5pt)  ++(2,0) node {\huge ...} +(1,0) node[anchor=west,color=black] { $M_1 = \Xi(\{-1,1,3,5,6\}),\;\quad g_1 = 2$}; 

\draw[line width=1pt] (5,1) -- ++ (0,2);

\draw  (1,-1) grid +(11 ,1);

\path [fill,color=blue] (0.5,-0.5) node {\huge ...}  ++(1,0) circle
(5pt) ++(1,0) circle (5pt) ++(1,0) circle (5pt) ++(1,0) circle (5pt)
++(1,0) circle (5pt) ++(1,0) circle (5pt) ++(1,0) circle (5pt) ++(1,0)
circle (5pt) ++(1,0) circle (5pt) ++(3,0) node {\huge ...} +(1,0)
node[anchor=west,color=black] { $M_2 = \Xi(\{5\}),\qquad\qquad\qquad g_2 = 0$};

\draw[line width=1pt] (5,-1) -- ++ (0,2);

\draw  (0,-4) grid +(23 ,1);
\foreach \x in {-5,...,17} 	\draw (\x+5.5,2.5-5)  node {$\x$};
\draw[line width=1pt] (5,-4) -- ++ (0,2);

\path [fill,color=black] (2.5,-3.5)   
circle (5pt) ++(6,0) circle (5pt)  
++(3,0) circle (5pt) ++(3,0) circle (5pt) ;

\path [fill,color=red] 
(0.5,-3.5) circle (5pt) ++(9,0) circle (5pt)  
++(3,0) circle (5pt) ++(9,0) circle (5pt) ;

\path [fill,color=blue] 
(1.5,-3.5) circle (5pt) ++(3,0) circle (5pt)  
++(3,0) circle (5pt) ++(3,0) circle (5pt)
++(3,0) circle (5pt) ++(3,0) circle (5pt)  ++(3,0) circle (5pt) ;

\draw (3.5,-5) node[right]
{$M= \Theta_3(M_0,M_1,M_2) = \Xi_3(\{0,1,4,\rd{-1,1,3,5,6},
  \bl{5}\})$}; \draw (3.5,-6) node[right]
{};
\end{tikzpicture}

\caption{Interlacing of three Maya diagrams with genus $1,2$ and $0$}\label{fig:interlacing}
\end{figure}
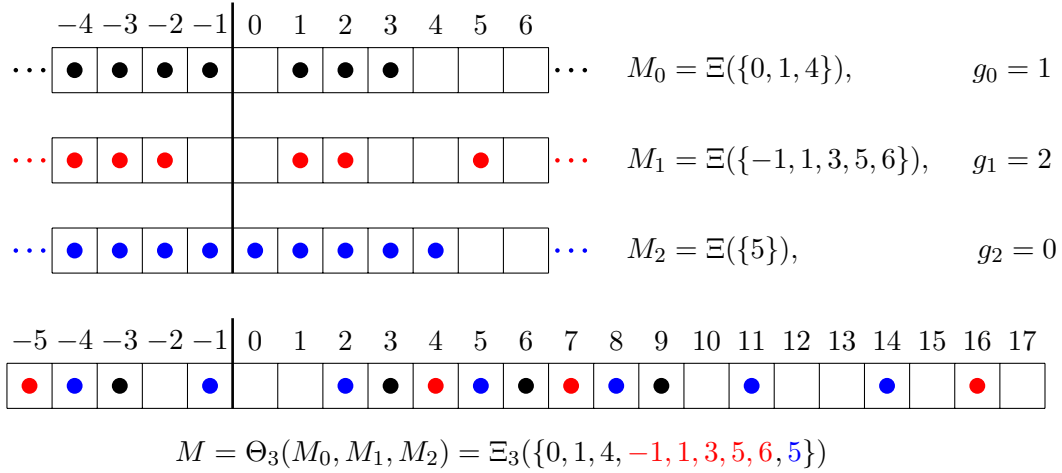

\section{Classification of rational solutions to $A_{2n}$-Painlev\'e}\label{sec:classification}
%%%%%%%%%%%%%%%%%%%%%%%%%%%%%%%

Definition \ref{def:mayacycle} above introduced the concept of
\textit{Maya cycles} : sequences of Maya diagrams connected by flip
operations that close into a cycle.  In order to build all rational
solutions of a $(2n+1)$ cyclic dressing chain, all we need to specify
is how to build all $(2n+1,k)$ Maya cycles for
$k=\pm 1,\dots,\pm 2n+1$.

Evidently, every $M_i$ in a $(p,k)$ Maya cycle is $(p,k)$-cyclic as
per Definition \ref{def:cyclicM}.  The following Proposition
elucidates the relationship between cyclic Maya diagrams and Maya
cycles.
\begin{prop}
  Let $M_0$ be a $(p,k)$-cyclic Maya diagram,
  $\phi_{\bgamma^{[k]}},\; \bgamma^{[k]}\in \cZp$ a multi-flip such that
  $\phi_{\bgamma^{[k]}}(M_0) = M_0+k$, and $\bmu\in \Z^p$ an arbitrary
  enumeration of $\bgamma^{[k]}$. Then
  \begin{equation}
    M_{i+1}=\phi_{\mu_i}(M_{i}),\qquad i=0,\dots,p-1,
  \end{equation} defines a $(p,k)$ Maya cycle.
\end{prop}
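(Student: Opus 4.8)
The plan is to verify directly the two defining properties of a $(p,k)$ Maya cycle from Definition~\ref{def:mayacycle}: that consecutive diagrams differ by a single flip, and that $M_p = M_0 + k$. The first property is immediate from the construction, since $M_{i+1} = \phi_{\mu_i}(M_i)$ is by \eqref{eq:flipdef} the result of a single flip at position $\mu_i$; moreover a flip always toggles membership of exactly one integer, so $M_{i+1}\neq M_i$, and each $M_i$ is again a genuine Maya diagram because flipping a single position preserves the finiteness conditions of Definition~\ref{def:index}. The entire substance of the statement therefore reduces to the closure relation $M_p = M_0 + k$.

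To establish closure I would first unwind the recursion to write $M_p = (\phi_{\mu_{p-1}} \circ \cdots \circ \phi_{\mu_0})(M_0)$, and then identify this ordered composition with the multi-flip $\phi_{\bgamma^{[k]}}$ of \eqref{eq:phimudef}. The two algebraic facts that make this work are that flips are involutions, $\phi_m \circ \phi_m = \mathrm{id}$, and that flips at distinct positions commute, $\phi_m \circ \phi_{m'} = \phi_{m'} \circ \phi_m$ for $m\neq m'$; both follow at once from \eqref{eq:flipdef}, since each $\phi_m$ only toggles membership of the single integer $m$ and different integers are affected independently. Commutativity together with the involution property imply that $\phi_{\mu_{p-1}} \circ \cdots \circ \phi_{\mu_0}$ is independent of the particular enumeration $\bmu$ of $\bgamma^{[k]}$, and hence equals $\phi_{\bgamma^{[k]}}$.

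The one point that needs care is that $\bgamma^{[k]}$ is in general a genuine multiset, so a position of even multiplicity gets flipped an even number of times along the chain and returns to its original state, while a position of odd multiplicity is toggled exactly once. This is precisely the reduction $\phi_{\bgamma^{[k]}} = \phi_{(\bgamma^{[k]})'}$ recorded after Definition~\ref{def:multiflip}, so the net map is unchanged. Invoking the hypothesis $\phi_{\bgamma^{[k]}}(M_0) = M_0 + k$ then yields $M_p = M_0 + k$, which together with the single-flip property completes the verification that $(M_0,\ldots,M_p)$ is a $(p,k)$ Maya cycle. I expect no serious obstacle here: the only genuine subtlety is the bookkeeping for repeated flip positions, and this is dispatched entirely by the commutativity and involutivity of single flips.
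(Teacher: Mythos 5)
Your proof is correct. The paper in fact states this proposition without any proof (it is treated as evident and the text moves directly on to the discussion of coloured sequences), and your argument supplies exactly the routine verification the authors left implicit: the single-flip property is immediate from the construction, and closure $M_p=M_0+k$ follows because single flips at distinct positions commute and are involutions, so the ordered composition $\phi_{\mu_{p-1}}\circ\cdots\circ\phi_{\mu_0}$ is independent of the enumeration and coincides with $\phi_{\bgamma^{[k]}}$, including the correct handling of repeated elements via the reduction $\phi_{\bbe}=\phi_{\bbe'}$ noted after Definition~\ref{def:multiflip}.
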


\subsection{Coloured sequences.}
\label{sect:colseq}
In the same way that  $(p,k)$ cyclic Maya diagrams can be indexed by $k$-coloured sets,  $(p,k)$ Maya cycles will be indexed by $k$-coloured sequences, a concept that we introduce next.
\begin{definition}
  A $k$-colouring of a sequence is the assignment of one of $k$
  colours to each component of that sequence.  Setting
  $\Zp_k= \Z^p\times (\ZkZ)^p$, we formally represent a
  \textit{$k$-coloured integer sequence }of length $p$ as a pair
  \[ (\bnu,C)=\big((\nu_0,\ldots, \nu_{p-1}),(C_0,\dots,C_{p-1})\big)\in \Zp_k\] 
where $C_i$ is the colour of $\nu_i$, for  $i=0,\ldots, p-1$. As before, we say that $(\bnu,C)$ is
  oddly coloured if each colour occurs an odd number of times.  

\end{definition}
Given a coloured sequence $(\bnu,C)\in \Zp_k$, let $[\bnu,C]\in \cZp_k$ be
  the corresponding $k$-coloured multiset whose elements are the
  components of the sequence in question.  Formally,
  \begin{equation}\label{eq:fromnu2gamma}
   [\bnu,C] = (\bgamma^{(0)},\ldots, \bgamma^{(k-1)}),
  \end{equation}
  where
  \begin{equation}\label{eq:fromnu2gamma2}
   \bgamma^{(j)}:= \{ \nu_i \colon C_i = j \},\quad j\in \ZkZ.
   \end{equation}
  
\begin{definition}\label{def:shift}
Define the shift operator $\pi:\Zp_k\to \Zp_k$ with action
\begin{equation}
  \label{eq:piaction}
  \pi(\bnu,C) = (L(\bnu)+\be_{p-1},L(C)),\quad \bnu\in \Z^p,\; C\in (\ZkZ)^p,
\end{equation}
where $L$ is the circular permutation
\begin{equation}
  \label{eq:Ldef}
   L(C) = (C_1,\ldots, C_{p-1},C_0), 
\end{equation}
and $\be_{i}\in \Z^p,\;i=0,\ldots, p-1$ is the $i\supth$ unit vector. Thus,
\begin{eqnarray*}   \pi(\bnu,C) &=&  \pi\big((\nu_0,\ldots, \nu_{p-1}),(C_0,\dots,C_{p-1})\big) \\
&=& \big( (\nu_1,\nu_2,\ldots,
  \nu_{p-1},\nu_0+1), (C_1,\dots,C_{p-1},C_0)\big).
  \end{eqnarray*}
\end{definition}

The next Proposition describes the correspondence between coloured
sequences and Maya cycles.  It makes use of the following auxilliary
notation. Consider
\begin{equation}
  \label{eq:Xikdef2}
  \Xik(\bnu,C) := \Xik([\bnu,C]),\quad (\bnu,C)\in \Zp_k.
\end{equation}
as the generalization of  \eqref{eq:Xikdef} from coloured multisets to coloured sequences, and let 
\[ \pi^i = \overbrace{\pi\circ \cdots \circ \pi}^{i\; \text{times}}\]
denote the iterated action of $\pi$ as defined by $\eqref{eq:piaction}$.

\begin{prop}
  \label{prop:nuCM}
  Let $(\bnu,C)\in \Zp_k$ be an oddly coloured integer sequence, and
  set
  \begin{align}
    \label{eq:mufromnu}
    \mu_i &= k \nu_i + C_i,\quad i=0,\ldots, p-1\\
    \label{eq:M0nu}
    M_0 &= \Xik(\bnu,C),\\
    \label{eq:Mipi}
    M_{i+1} &= \phi_{\mu_i}(M_i),\quad i=0,\ldots, p-1.
  \end{align}
%  \begin{align}
%    \Xi_k(\pi^i(\bnu,C)),\quad i = 0,\ldots, p,
%   \end{align}
%   where
%   $\pi^i = \overbrace{\pi\circ \cdots \circ \pi}^{i\; \text{times}}$
%   and where
   % \begin{equation}
   %   \label{eq:Xikdef2}
   %    \Xik(\bnu,C) := \Xik([\bnu,C]),\quad (\bnu,C)\in \Zp_k.
   % \end{equation}
   % is the indicated generalization of  \eqref{eq:Xikdef}.
   Then, $\bM=(M_0,\ldots, M_{p})$ is a $(p,k)$ Maya cycle with flip
   sequence $\bmu$ that satisfies
   \begin{align}
     M_i=\Xi_k(\pi^i(\bnu,C)),\quad i = 0,\ldots, p,
   \end{align}
   The above mapping $(\bnu,C)\mapsto \bM,\; (\bnu,C)\in \Zp_k$
   constitutes a bijection between the set of oddly $k$-coloured
   sequences of length $p$ and the set of $(p,k)$ Maya cycles.  The
   inverse mapping $\bM\to (\bnu,C)$ is given by taking the $k$-modular
   decomposition of the  flip sequence corresponding to $\bM$.
\end{prop}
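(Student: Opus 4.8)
The plan is to reduce everything to a single master identity, namely
\[ M_i = \Xi_k(\pi^i(\bnu, C)), \qquad i=0,\ldots, p, \]
from which the Maya-cycle property, the identification of the flip sequence, and the bijection all follow. First I would record the combinatorial heart of the argument as two elementary facts about the block-coordinate map $\Xi$ of Definition~\ref{def:genus}. For a multiset $\bgamma$ with non-decreasing enumeration $\gamma_0\le\cdots\le\gamma_{2\ell}$, an integer $m$ lies in $\Xi(\bgamma)$ if and only if $\#\{i:\gamma_i\le m\}$ is even; this parity description is immediate from \eqref{eq:Xidef}. Two consequences follow at once: (i) adjoining a repeated pair $\{a,a\}$ to $\bgamma$ leaves $\Xi(\bgamma)$ unchanged, since the count $\#\{i:\gamma_i\le m\}$ changes by $0$ or $2$; and (ii) replacing a single element $\nu\in\bgamma$ by $\nu+1$ flips membership exactly at $m=\nu$, so that $\Xi\big((\bgamma\setminus\{\nu\})\sqcup\{\nu+1\}\big)=\phi_\nu(\Xi(\bgamma))$.

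Next I would establish the one-step identity $\Xi_k(\pi(\bnu,C))=\phi_{\mu_0}(\Xi_k(\bnu,C))$ with $\mu_0=k\nu_0+C_0$. By Definition~\ref{def:shift}, $\pi$ alters only the colour class $C_0$ of $[\bnu,C]$, removing $\nu_0$ and inserting $\nu_0+1$; by fact (ii) this replaces $M^{(C_0)}=\Xi(\bgamma^{(C_0)})$ by $\phi_{\nu_0}(M^{(C_0)})$ and leaves the other components untouched. Since the modular decomposition satisfies $k\nu+C_0\in M\iff\nu\in M^{(C_0)}$, a single flip $\phi_\nu$ on the $C_0$-th component corresponds under $\Theta_k$ to the single flip $\phi_{k\nu+C_0}$ on the interlaced diagram, giving the claim. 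Because the first entry of $\pi^i(\bnu,C)$ is $(\nu_i,C_i)$ for $0\le i\le p-1$, an induction on $i$ then yields the master identity, with $M_{i+1}=\phi_{\mu_i}(M_i)$ and $\mu_i=k\nu_i+C_i$ as required in \eqref{eq:Mipi}. Evaluating at $i=p$, a direct computation gives $\pi^p(\bnu,C)=(\bnu+\boldsymbol{1},C)$, whence $\Xi_k$ interlaces the shifted components $M^{(j)}+1$ and $M_p=M_0+k$; this confirms that $\bM$ is a $(p,k)$ Maya cycle with flip sequence $\bmu$.

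For the bijection I would exhibit the inverse explicitly: to a $(p,k)$ Maya cycle $\bM$ assign its flip sequence $\bmu$, well defined by $\{\mu_i\}=M_{i+1}\ominus M_i$, and then its $k$-modular decomposition $(\nu_i,C_i)$ via $\mu_i=k\nu_i+C_i$, $0\le C_i<k$. That the composition inverse-after-forward is the identity is immediate, since the forward construction has flip sequence $\bmu$ with $\mu_i=k\nu_i+C_i$. For forward-after-inverse I must check two things. First, $(\bnu,C)$ is oddly coloured: the modular decomposition gives $(M_0+k)^{(j)}=M_0^{(j)}+1$, so the flips of colour $j$ shift $M_0^{(j)}$ by one unit; since adjoining an element to a Maya diagram raises its index by one and deleting one lowers it by one, the net index change $+1$ forces the number of colour-$j$ flips to be odd. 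Second, $F(\bnu,C)=\bM$: by the master identity it suffices to check $M_0=\Xi_k(\bnu,C)$, and for this I would use fact (i) to discard repeated pairs from each colour multiset $\bgamma^{(j)}$, reducing it to its odd-multiplicity set, which is precisely the net flip set $(M_0^{(j)}+1)\ominus M_0^{(j)}$; by \eqref{eq:bbetaM+1} and Proposition~\ref{prop:Xibiject} the latter recovers $M_0^{(j)}=\Xi(\bgamma^{(j)})$, so $\Xi_k(\bnu,C)=M_0$ and the remaining $M_i$ agree by induction.

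The main obstacle is this last direction, and specifically the possibility that the flip sequence of a $(p,k)$ Maya cycle contains repeated flips, which happens exactly for cycles whose underlying cyclic Maya diagram has minimal period smaller than $p$. In that case $[\bnu,C]$ is a genuine multiset rather than a set, and one cannot simply invoke the earlier block-coordinate bijection of Proposition~\ref{prop:Xibiject}. The pair-removal invariance (i) of $\Xi$ is what rescues the argument, ensuring that $\Xi$ of the full flip multiset and of its odd-multiplicity reduction coincide; I would take particular care with the index bookkeeping in this degenerate situation, since it is the only place where the multiset structure, rather than the set structure, genuinely enters the proof.
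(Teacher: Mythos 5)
Your proposal is correct, but note that the paper itself gives no real argument here: its ``proof'' of Proposition~\ref{prop:nuCM} is the single sentence that the claim follows by straightforward application of the relevant definitions. Your write-up is therefore best read as supplying the details the paper omits, and it does so soundly. The three ingredients all check out: the parity characterization of $\Xi$ (membership of $m$ in $\Xi(\bgamma)$ governed by the parity of $\#\{i:\gamma_i\le m\}$) correctly yields both pair-cancellation and the single-flip effect of replacing $\nu$ by $\nu+1$; the one-step identity $\Xi_k(\pi(\bnu,C))=\phi_{k\nu_0+C_0}(\Xi_k(\bnu,C))$ plus induction gives the master identity and, at $i=p$, the closure $M_p=M_0+k$ via $\pi^p(\bnu,C)=(\bnu+\boldsymbol{1},C)$; and the inverse is verified in both compositions. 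Two points deserve emphasis as genuine value added beyond the paper. First, your oddness argument via the index (each flip changes $s_{M^{(j)}}$ by $\pm1$, and the net change across colour-$j$ flips must equal $+1$, forcing $p_j$ odd) is a clean alternative to the route taken in the cited Proposition 4.13 of \cite{clarkson2020cyclic}, which instead invokes uniqueness of block coordinates; your version applies directly to the multiset of flips in a cycle rather than to the $k\supth$ order flip set of a single diagram. Second, you correctly isolate the only delicate case — degenerate cycles with repeated flips, where $[\bnu,C]$ is a proper multiset — and resolve it with pair-cancellation together with $\tilde\bgamma^{(j)}=(M_0^{(j)}+1)\ominus M_0^{(j)}$ and Proposition~\ref{prop:Xibiject}; this is precisely the point a reader could not reconstruct from the paper's one-line proof, and it is handled without gaps.
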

\begin{proof}
The proof follows by a straightforward application of the relevant definitions.
\end{proof}

\begin{definition}
  In parallel to the terminology introduced above, we will refer to
  the coloured integer sequence $(\bnu,C)$ as the block coordinates of
  the Maya cycle generated by \eqref{eq:Mipi}.
\end{definition}

We next describe the effect of translations on a Maya cycle at the level of the coloured sequences, in order to define an equivalence class under translations.  
Let $T\colon \Zp_k\to \Zp_k$ be the invertible mapping
defined by
\[ T: (\bnu,C)\mapsto (\hbnu,L^{-1}(C)),\quad \bnu\in \Z^p,\; C\in
  (\ZkZ)^p,\]
where
\begin{equation}
  \label{eq:M+1bC}
  \hnu_i =
  \begin{cases}
    \nu_{i}+1 & \text{ if } C_i = k-1\\
    \nu_i & \text{ otherwise,}
  \end{cases}
\end{equation}
and where $L$ is the circular permutation \eqref{eq:Ldef}.
\begin{prop}\label{prop:M+1}
  Let $(\bnu,C)\in \Zp_k$ be the block coordinates of a $(p,k)$ Maya
  cycle $\bM$. Then $T(\bnu,C)$ are the block coordinates of the Maya
  cycle $\bM+1=(M_0+1,M_1+1,\ldots, M_p+1)$.
\end{prop}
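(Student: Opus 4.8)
The plan is to exploit the explicit inverse of the bijection in Proposition~\ref{prop:nuCM}: the block coordinates of a $(p,k)$ Maya cycle are recovered by taking the $k$-modular decomposition of its flip sequence. First I would record that $\bM+1$ is itself a $(p,k)$ Maya cycle, which is immediate from \eqref{eq:MCtrans}. Hence $\bM+1$ has a well-defined block-coordinate sequence, and the whole statement amounts to identifying that sequence with $T(\bnu,C)$.

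The key observation is how the flip sequence transforms under translation. If $\bmu=(\mu_0,\ldots,\mu_{p-1})$ is the flip sequence of $\bM$, so that $M_{i+1}=\phi_{\mu_i}(M_i)$ as in \eqref{eq:Mipi}, then translating every diagram by one carries a flip at position $m$ to a flip at position $m+1$, because $\phi_{m+1}(M+1)=\phi_m(M)+1$. Consequently the flip sequence of $\bM+1$ is the termwise increment $\bmu+\mathbf 1$, with $\mathbf 1=(1,\ldots,1)$, taken in the same order. By the inverse map of Proposition~\ref{prop:nuCM}, the block coordinates of $\bM+1$ are therefore exactly the $k$-modular decomposition of $\bmu+\mathbf 1$.

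It then remains to compute that decomposition and match it to \eqref{eq:M+1bC}. Using $\mu_i=k\nu_i+C_i$ from \eqref{eq:mufromnu} with $0\le C_i<k$, incrementing gives $\mu_i+1=k\nu_i+(C_i+1)$, whose $k$-modular (quotient, remainder) coordinates are $(\nu_i,\,C_i+1)$ when $C_i<k-1$ and $(\nu_i+1,\,0)$ when $C_i=k-1$. The value component is thus increased by one exactly when $C_i=k-1$, which is precisely the rule \eqref{eq:M+1bC}, while the colour component is advanced by one modulo $k$ with wrap-around $k-1\mapsto 0$. I would corroborate this at the level of Maya diagrams by the dual computation on the modular components $M^{(0)},\ldots,M^{(k-1)}$ of the root: from \eqref{eq:interlacing} one checks directly that $(M+1)^{(0)}=M^{(k-1)}+1$ and $(M+1)^{(j)}=M^{(j-1)}$ for $1\le j\le k-1$, which is the same carry-and-relabel operation read through $\Xi_k$ as in \eqref{eq:Xikdef}.

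The main obstacle, and the step I would write out most carefully, is this last bookkeeping: one must verify that the reindexing of the coloured sequence induced by ``advance every colour by one, carrying the value when the colour wraps from $k-1$ to $0$'' is recorded exactly by the permutation of the colour string together with the value-modification \eqref{eq:M+1bC} defining $T$, and in particular that it respects the ordering of the sequence rather than merely the underlying multiset. Once this identification is pinned down, injectivity of the correspondence in Proposition~\ref{prop:nuCM} closes the argument, and oddness of the colouring is automatically preserved, since the operation only permutes the signature $\bp$ cyclically.
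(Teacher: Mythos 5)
Your argument is correct, and in fact the paper states Proposition~\ref{prop:M+1} with no proof at all, so your proposal supplies a missing argument rather than duplicating one. The route you take is the natural (essentially the only) one: $\bM+1$ is again a $(p,k)$ Maya cycle by \eqref{eq:MCtrans}; the identity $\phi_{m+1}(M+1)=\phi_m(M)+1$ shows that its flip sequence is the termwise increment $\bmu+\mathbf{1}$; by the inverse map of Proposition~\ref{prop:nuCM} the block coordinates of $\bM+1$ are the $k$-modular decomposition of $\bmu+\mathbf{1}$; and the quotient--remainder computation $\mu_i+1=k\nu_i+(C_i+1)$ when $C_i<k-1$ versus $\mu_i+1=k(\nu_i+1)+0$ when $C_i=k-1$ yields exactly the carry rule \eqref{eq:M+1bC}. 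This is complete.

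One remark on the ``main obstacle'' you flag at the end: it is genuine, but it is a defect of the paper's notation rather than of your proof. Your computation shows that the colour string transforms by the component-wise relabeling $C_i\mapsto C_i+1 \bmod k$, with no positional permutation whatsoever: position $i$ keeps its value--colour pair, with the colour advanced and the value carried. If one reads the paper's formula $T(\bnu,C)=(\hbnu,L^{-1}(C))$ literally, with $L^{-1}$ the positional circular shift \eqref{eq:Ldef} applied to the string $C$, the statement would be false; it contradicts the paper's own Figure~\ref{fig:M+1}, where $C=(2,1,2,2,0)$ becomes $(0,2,0,0,1)=C+\mathbf{1}\bmod 3$ rather than the positional shift $(0,2,1,2,2)$, and it also contradicts the later assertion that $T$ maps the signature $\bp$ to $L^{-1}(\bp)$, since a positional shuffle of the colour string would leave the signature unchanged. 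So $L^{-1}(C)$ must be understood as relabeling each colour by the cyclic permutation of $\ZkZ$, i.e. $C_i\mapsto C_i+1\bmod k$. With that reading, your derivation is precisely a proof of the proposition, and your insistence on checking that the operation respects the ordering of the sequence (not merely the underlying multiset) is exactly the point that settles it.
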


\begin{prop}\label{prop:MM+1}
Maya cycles $\bM=(M_0,\dots,M_{2n+1})$ and $\bM+1=(M_0+1,\dots,M_{2n+1}+1)$ generate the same rational solution $(w_0,\dots,w_{2n}|a_0,\dots,a_{2n})$ of a $2n+1$-cyclic dressing chain.
\end{prop}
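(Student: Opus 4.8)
The plan is to use the explicit formulas \eqref{eq:wsigMi}--\eqref{eq:aimui} of Theorem~\ref{thm:characterization}, which express the rational solution $(\bw|\ba)$ generated by a Maya cycle entirely in terms of three pieces of data: the flip sequence $\bmu$, the sign sequence $\bsigma$, and the Hermite pseudo-Wronskians $H_{M_i}$ of the diagrams in the cycle. I would show that under the translation $\bM\mapsto \bM+1$ each of these is either invariant or transforms in a way that leaves $(w_i|a_i)$ unchanged, whence the two cycles generate the same solution.

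First I would track the combinatorial data. By \eqref{eq:flipseq} the flip sequence of $\bM$ satisfies $\{\mu_i\}=M_{i+1}\ominus M_i$, so translating every diagram by one unit yields $\{\mu_i+1\}=(M_{i+1}+1)\ominus(M_i+1)$; hence the flip sequence of $\bM+1$ is $(\mu_0+1,\dots,\mu_{2n}+1)$, the boundary convention $\mu_{2n+1}=\mu_0+k$ shifting consistently to $\mu_{2n+1}+1=(\mu_0+1)+k$. Since $a_i=2(\mu_i-\mu_{i+1})$ depends only on consecutive differences, every $a_i$ is invariant. Likewise, by \eqref{eq:sign} the sign $\sigma_i$ records whether $\mu_i\in M_i$, and $\mu_i\in M_i$ if and only if $\mu_i+1\in M_i+1$; therefore the sign sequence $\bsigma$ is unchanged as well. (Equivalently, both facts can be read off Proposition~\ref{prop:M+1}, which describes the same translation at the level of block coordinates.)

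It then remains to compare the $w_i$. In \eqref{eq:wsigMi} each $w_i$ is the sum of $\sigma_i z$, already shown invariant, and the difference of the logarithmic derivatives $H'_{M_{i+1}}/H_{M_{i+1}}$ and $H'_{M_i}/H_{M_i}$. The crucial input is Proposition~\ref{prop:Mshift}: the pseudo-Wronskians $H_M$ and $H_{M+1}$ coincide up to a nonzero multiplicative constant, so their logarithmic derivatives agree, $H'_{M_i+1}/H_{M_i+1}=H'_{M_i}/H_{M_i}$ for every $i$. Substituting into \eqref{eq:wsigMi} shows that $\bM+1$ produces exactly the same functions $w_i$ as $\bM$. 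Combined with the invariance of the $a_i$, this establishes the claim.

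The computation is essentially bookkeeping; the only substantive ingredient is the translation invariance of the logarithmic derivative of the Hermite pseudo-Wronskian, encapsulated in Proposition~\ref{prop:Mshift} (or its sharp form \eqref{eq:HMequiv}), where the additive index shift $s_{M+1}=s_M+1$ enters $U_M$ but not $H_M$ and so is harmless for $w_i$. The one point deserving care is the cyclic boundary index $\mu_{2n+1}=\mu_0+k$, which must be shifted in step with the interior parameters so that $a_{2n}$ is treated on the same footing; once this is verified, no further obstacle remains.
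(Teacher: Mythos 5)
Your proposal is correct and follows essentially the same route as the paper's proof: both rely on the construction formulas \eqref{eq:wsigMi}--\eqref{eq:aimui}, the fact that $a_i$ depends only on differences of the (uniformly shifted) flip sequence, and Proposition~\ref{prop:Mshift} to conclude that the log-derivatives of $\H_M$ and $\H_{M+1}$ coincide. Your version simply spells out the bookkeeping (sign sequence, boundary convention $\mu_{2n+1}=\mu_0+k$) that the paper leaves implicit.
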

\begin{proof}
The proof comes from a straightforward application of the construction formulas \eqref{eq:wsigMi}-\eqref{eq:aimui}. We recall that for any Maya diagram $M$, the pseudo-Wronskians $H_M$ and $H_{M+1}$ only differ by a multiplicative constant, as seen in Proposition~\ref{prop:Mshift}. Since only log-derivatives of  Hermite pseudo-Wronskians enter in the rational solution  \eqref{eq:wsigMi} and the parameters \eqref{eq:aimui} only involve differences of the components of the flip sequence, an overall translation of the Maya cycle has no effect in the rational solution.
\end{proof}

The last two Propositions imply that there is an equivalence class of Maya cycles related by translations that generate the same rational solution of a dressing chain. The correspondence between  coloured sequences and rational solutions is thus many to one. A one-to-one correspondence can be achieved by  fixing a canonical representative in each equivalence class. 
\begin{definition}
A $(p,k)$ Maya cycle with $k\in\N$ is in standard form if and only if its first diagram $M_0$ is in standard form. A coloured sequence $(\bnu,C)$ is in standard form if the Maya cycle it defines by \eqref{eq:mufromnu}-\eqref{eq:Mipi} in standard form too.
\end{definition}

It is obvious that in each equivalence class $\bM/\Z$ of Maya cycles related by translations, only one of them is in standard form.

\begin{example}
The $(5,3)$ Maya cycle $\bM$ defined by the coloured sequence $(\bl{4},\rd{3},\bl{1},\bl{2},0)$ is in standard form. The action of a unit translation gives a Maya cycle $\bM+1$ which is not in standard form. Both of them are shown in Figure~\ref{fig:M+1}, where the action of $T$ on the block coordinates described by Proposition~\ref{prop:M+1} can be verified.

\begin{figure}
  \centering
\begin{tikzpicture}[scale=0.5]
  \path [fill,red] (0.5,6.5)  
  ++(2,0) circle (5pt)
  ++(3,0) circle (5pt)
  ++(3,0) circle (5pt) ;

  \path [fill,blue] (0.5,6.5)  
  ++(3,0) circle (5pt)
  ++(6,0) circle (5pt) 
  ++(3,0) circle (5pt) ;

  \path (17.5,6.5)  node[anchor=west] {$(\bl{4},\rd{3},\bl{1},\bl{2},0)$};

  \path [fill,red] (0.5,5.5)  
  ++(2,0) circle (5pt)
  ++(3,0) circle (5pt)
  ++(3,0) circle (5pt) ;

  \path [fill,blue] (0.5,5.5)  
  ++(3,0) circle (5pt)
  ++(6,0) circle (5pt) 
  ++(3,0) circle (5pt) 
  ++(3,0) circle (5pt) ;
  \path (17.5,5.5)  node[anchor=west] {$(\rd{3},\bl{1},\bl{2},0,\bl{5})$};

  \path [fill,red] (0.5,4.5)  
  ++(2,0) circle (5pt)
  ++(3,0) circle (5pt)
  ++(3,0) circle (5pt)
  ++(3,0) circle (5pt) ;

  \path [fill,blue] (0.5,4.5)  
  ++(3,0) circle (5pt)
  ++(6,0) circle (5pt) 
  ++(3,0) circle (5pt) 
  ++(3,0) circle (5pt) ;
  \path (17.5,4.5)  node[anchor=west] {$(\bl{1},\bl{2},0,\bl{5},\rd{4})$};

  \path [fill,red] (0.5,3.5)  
  ++(2,0) circle (5pt)
  ++(3,0) circle (5pt)
  ++(3,0) circle (5pt)
  ++(3,0) circle (5pt) ;

  \path [fill,blue] (0.5,3.5)  
  ++(3,0) circle (5pt)
  ++(3,0) circle (5pt)
  ++(3,0) circle (5pt) 
  ++(3,0) circle (5pt) 
  ++(3,0) circle (5pt) ;
  \path (17.5,3.5)  node[anchor=west] {$(\bl{2},0,\bl{5},\rd{4},\bl{2})$};

  \path [fill,red] (0.5,2.5)  
  ++(2,0) circle (5pt)
  ++(3,0) circle (5pt)
  ++(3,0) circle (5pt)
  ++(3,0) circle (5pt) ;
  \path [fill,blue] (0.5,2.5)  
  ++(3,0) circle (5pt)
  ++(3,0) circle (5pt)
  ++(6,0) circle (5pt) 
  ++(3,0) circle (5pt) ;
  \path (17.5,2.5)  node[anchor=west] {$(0,\bl{5},\rd{4},\bl{2},\bl{3})$};

  \path [fill,black] (0.5,1.5)  
  ++(1,0) circle (5pt);  
  \path [fill,red] (0.5,1.5)  
  ++(2,0) circle (5pt)
  ++(3,0) circle (5pt)
  ++(3,0) circle (5pt)
  ++(3,0) circle (5pt) ;
  \path [fill,blue] (0.5,1.5)  
  ++(3,0) circle (5pt)
  ++(3,0) circle (5pt)
  ++(6,0) circle (5pt) 
  ++(3,0) circle (5pt) ;
  \path (17.5,1.5)  node[anchor=west] {$(\bl{5},\rd{4},\bl{2},\bl{3},1)$};

  \path [fill,blue] (0.5,1.5) ++(0,0) circle (5pt)
  ++(0,1) circle (5pt)
  ++(0,1) circle (5pt)
  ++(0,1) circle (5pt)
  ++(0,1) circle (5pt)
  ++(0,1) circle (5pt);

  \path [fill,red] (-.5,1.5) ++(0,0) circle (5pt)
  ++(0,1) circle (5pt)
  ++(0,1) circle (5pt)
  ++(0,1) circle (5pt)
  ++(0,1) circle (5pt)
  ++(0,1) circle (5pt);

  \path [fill,black] (-1.5,1.5) ++(0,0) circle (5pt)
  ++(0,1) circle (5pt)
  ++(0,1) circle (5pt)
  ++(0,1) circle (5pt)
  ++(0,1) circle (5pt)
  ++(0,1) circle (5pt);

  \draw  (-2,1) grid +(19 ,6);
  \draw[line width=2pt] (1,1) -- ++ (0,6);
  \draw[line width=2pt,dashed] (16,1) -- ++ (0,6);

  \foreach \x in {0,...,15} \draw (\x+1.5,0.5)  node {$\x$};
\end{tikzpicture}

\begin{tikzpicture}[scale=0.5]
  \path [fill,blue] (0.5,6.5)  
  ++(2,0) circle (5pt)
  ++(3,0) circle (5pt)
  ++(3,0) circle (5pt) ;

  \path [fill,black] (0.5,6.5)  
  ++(3,0) circle (5pt)
  ++(6,0) circle (5pt) 
  ++(3,0) circle (5pt) ;

  \path (17.5,6.5)  node[anchor=west] {$(5,\bl{3},2,3,\rd{0})$};

  \path [fill,blue] (0.5,5.5)  
  ++(2,0) circle (5pt)
  ++(3,0) circle (5pt)
  ++(3,0) circle (5pt) ;

  \path [fill,black] (0.5,5.5)  
  ++(3,0) circle (5pt)
  ++(6,0) circle (5pt) 
  ++(3,0) circle (5pt) 
  ++(3,0) circle (5pt) ;
  \path (17.5,5.5)  node[anchor=west] {$(\bl{3},2,3,\rd{0},6)$};

  \path [fill,blue] (0.5,4.5)  
  ++(2,0) circle (5pt)
  ++(3,0) circle (5pt)
  ++(3,0) circle (5pt)
  ++(3,0) circle (5pt) ;

  \path [fill,black] (0.5,4.5)  
  ++(3,0) circle (5pt)
  ++(6,0) circle (5pt) 
  ++(3,0) circle (5pt) 
  ++(3,0) circle (5pt) ;
  \path (17.5,4.5)  node[anchor=west] {$(2,3,\rd{0},6,\bl{4})$};

  \path [fill,blue] (0.5,3.5)  
  ++(2,0) circle (5pt)
  ++(3,0) circle (5pt)
  ++(3,0) circle (5pt)
  ++(3,0) circle (5pt) ;

  \path [fill,black] (0.5,3.5)  
  ++(3,0) circle (5pt)
  ++(3,0) circle (5pt)
  ++(3,0) circle (5pt) 
  ++(3,0) circle (5pt) 
  ++(3,0) circle (5pt) ;
  \path (17.5,3.5)  node[anchor=west] {$(3,\rd{0},6,\bl{4},3)$};

  \path [fill,blue] (0.5,2.5)  
  ++(2,0) circle (5pt)
  ++(3,0) circle (5pt)
  ++(3,0) circle (5pt)
  ++(3,0) circle (5pt) ;
  \path [fill,black] (0.5,2.5)  
  ++(3,0) circle (5pt)
  ++(3,0) circle (5pt)
  ++(6,0) circle (5pt) 
  ++(3,0) circle (5pt) ;
  \path (17.5,2.5)  node[anchor=west] {$(\rd{0},6,\bl{4},3,4)$};

  \path [fill,red] (0.5,1.5)  
  ++(1,0) circle (5pt);  
  \path [fill,blue] (0.5,1.5)  
  ++(2,0) circle (5pt)
  ++(3,0) circle (5pt)
  ++(3,0) circle (5pt)
  ++(3,0) circle (5pt) ;
  \path [fill,black] (0.5,1.5)  
  ++(3,0) circle (5pt)
  ++(3,0) circle (5pt)
  ++(6,0) circle (5pt) 
  ++(3,0) circle (5pt) ;
  \path (17.5,1.5)  node[anchor=west]  {$(6,\bl{4},3,4,\rd{1})$};

  \path [fill,black] (0.5,1.5) ++(0,0) circle (5pt)
  ++(0,1) circle (5pt)
  ++(0,1) circle (5pt)
  ++(0,1) circle (5pt)
  ++(0,1) circle (5pt)
  ++(0,1) circle (5pt);

  \path [fill,blue] (-.5,1.5) ++(0,0) circle (5pt)
  ++(0,1) circle (5pt)
  ++(0,1) circle (5pt)
  ++(0,1) circle (5pt)
  ++(0,1) circle (5pt)
  ++(0,1) circle (5pt);

  \path [fill,red] (-1.5,1.5) ++(0,0) circle (5pt)
  ++(0,1) circle (5pt)
  ++(0,1) circle (5pt)
  ++(0,1) circle (5pt)
  ++(0,1) circle (5pt)
  ++(0,1) circle (5pt);

  \draw  (-2,1) grid +(19 ,6);
  \draw[line width=2pt] (0,1) -- ++ (0,6);
  \draw[line width=2pt,dashed] (15,1) -- ++ (0,6);

  \foreach \x in {0,...,15} \draw (\x+0.5,0.5)  node {$\x$};
\end{tikzpicture}

\caption{The effect of a unit translation $T$ on a $(5,3)$ Maya
  cycle. }
\label{fig:M+1}
\end{figure}
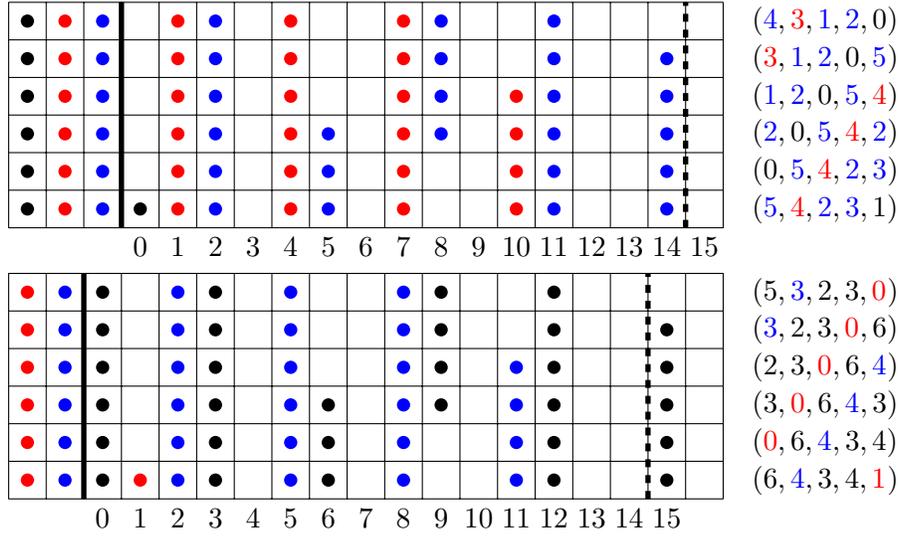
\end{example}

We can finally state the main theorem that expresses a good indexing scheme for rational solutions to an odd cyclic dressing chain.

\begin{thm}\label{thm:main}
 The set of rational solutions to a  $(2n+1)$-cyclic dressing chain with
  shift $\Delta=2k$ bijectively corresponds to the set of oddly
  $k$-coloured sequences $(\bnu,C)\in \Z^{2n+1}_k$ in standard form.
\end{thm}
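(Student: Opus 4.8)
The plan is to obtain the theorem as a composition of bijections already established in the preceding sections, the only genuinely new step being the selection of a canonical representative within each translation class. First I would invoke Theorem~\ref{thm:characterization}: every rational solution $(\bw|\ba)$ of a $(2n+1)$-cyclic dressing chain with shift $\Delta=2k$ is determined by a unique Maya cycle class $\bM/\Z$, and conversely every $(2n+1,k)$ Maya cycle yields such a solution via \eqref{eq:wsigMi}--\eqref{eq:aimui}. This gives a bijection between rational solutions with shift $\Delta=2k$ and equivalence classes of $(2n+1,k)$ Maya cycles modulo integer translation.

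Next I would apply Proposition~\ref{prop:nuCM}, which furnishes a bijection between oddly $k$-coloured sequences $(\bnu,C)\in\Z^{2n+1}_k$ and $(2n+1,k)$ Maya cycles through the explicit formulas \eqref{eq:mufromnu}--\eqref{eq:Mipi}; note that since $p=2n+1$ is odd and each of the $k$ colours must occur an odd number of times, this construction is only consistent when $k$ is odd, in agreement with Corollary~\ref{cor:k}. It then remains to track how the translation equivalence on Maya cycles transfers to coloured sequences, and this is exactly the content of Proposition~\ref{prop:M+1}: the unit translation $\bM\mapsto\bM+1$ corresponds under the bijection to the operator $T$ acting on $(\bnu,C)$. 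Consequently the $\Z$-orbits of Maya cycles are precisely the $T$-orbits of oddly coloured sequences.

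To upgrade this correspondence-up-to-translation into an honest bijection, I would fix one representative per orbit. Because any Maya diagram $M_0$ has a unique integer translate in standard form (all boxes to the left of the origin filled, the first box to the right empty), each class $\bM/\Z$ contains exactly one cycle whose initial diagram $M_0$ is in standard form; by the definition of standard form for coloured sequences this singles out a unique $(\bnu,C)$ in standard form inside each $T$-orbit. Composing the three bijections --- rational solutions $\leftrightarrow$ Maya cycle classes $\leftrightarrow$ $T$-orbits of oddly coloured sequences $\leftrightarrow$ standard-form coloured sequences --- then delivers the asserted bijection.

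The argument is essentially formal once the earlier propositions are available, with no remaining analytic difficulty. The step demanding the most care will be confirming that the standard-form representative is well defined and unique: one must check that the standard-form condition depends only on $M_0$ (so that it is meaningful at the level of cycles) and that Proposition~\ref{prop:M+1} realizes every translation, guaranteeing that two distinct standard-form sequences cannot share a $T$-orbit. With those verifications in place, injectivity and surjectivity of the composite map follow immediately.
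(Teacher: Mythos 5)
Your proposal is correct and follows essentially the same route as the paper's proof: composing the bijection of Theorem~\ref{thm:characterization} (rational solutions $\leftrightarrow$ Maya cycle classes $\bM/\Z$) with that of Proposition~\ref{prop:nuCM} (Maya cycles $\leftrightarrow$ oddly coloured sequences), and then normalizing by the unique standard-form representative in each translation class, with Proposition~\ref{prop:M+1} transferring the $\Z$-action to the operator $T$ on coloured sequences. The only difference is that you spell out explicitly the translation-tracking and uniqueness-of-representative checks that the paper's proof leaves implicit.
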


\begin{proof}
Proposition~\ref{prop:nuCM} establishes a bijection between $(2n+1,k)$ Maya cycles and oddly coloured sequences $(\bnu,C)\in\Z^{2n+1}_k$. Theorem~\ref{thm:characterization} establishes a bijection between rational solutions of a $(2n+1)$-cyclic dressing chain with shift $\Delta=2k$ and $(2n+1,k)$ Maya cycles, up to a translation of the cycle. If the Maya cycle is required to be in standard form, the correspondence between rational solutions and oddly
  $k$-coloured sequences is one-to-one.
\end{proof}
%\comment{\bl{David: I brought a canonical representative in the equivalence class to have a pure bijection. Please check.}}

\subsection{Enumeration and construction of  explicit examples}\label{sec:examples}

We shall describe now how to enumerate and construct explicitly all
rational solutions to the $(2n+1)$-cyclic dressing chain system
\eqref{eq:wfchain}, and therefore, by the equivalence described in
Proposition~\ref{prop:wtof}, also all rational solutions of the
$A_{2n}$-\p\ system \eqref{eq:A2nsystem}.

For a given cyclicity of the chain $2n+1$, by Corollary~\ref{cor:k} we
see that the only possible shifts are $k=\pm 1,\dots,\pm (2n+1)$. The
reversal symmetry \eqref{eq:reversal2} allows to invert the sign of
the shift, so we can focus without loss of generality on solutions
with positive shifts $k=1,\dots,2n+1$.

Next, we fix a given $k$ in that range, and ask ourselves how many different $k$-signatures must be considered. This is the number of different compositions of length $k$ of an odd number $2n+1$ with odd parts, which is precisely
\begin{equation}
a(2n+1,k)=\binom{n+\frac{k-1}{2}}{k-1},\qquad k=1,\dots,2n+1.
\end{equation}
The total number of possible signatures for a given period $2n+1$ is
\begin{equation}
\sum_{k=1}^{2n+1} a(2n+1,k)=F_{2n+1}
\end{equation}
where $F_j$ is the $j^{\textrm{th}}$ Fibonacci number.
As an example, an enumeration of all the possible signatures for $5$-periodic chains is:
\[
\begin{aligned}
&k=1: & a(5,1)=1, &\qquad  (5)\\
&k=3: & a(5,3)=3, &\qquad  (3,1,1), (1,3,1), (1,1,3)\\
&k=5: & a(5,5)=1, &\qquad  (1,1,1,1,1)\\
\end{aligned}
\]
for a total number of $F_3$=5.

\begin{example}

In order to construct a given rational solution, pick a shift and a
signature, say $k=3$ and $5=1+1+3$.  This means that the coloured
block coordinates are given by an integer 5-tuple grouped into 3
colours as per the above composition.  We assume, without loss of
generality, that $M_0$ is in standard form. Let us choose for
instance, $(\bnu,C) = (\bl{4},\rd{3},\bl{1},\bl{2},0)$, with the same colour code  $\Z/3\Z=\{0,\rd{1},\bl{2}  \}$ as in Example~\ref{ex:1}.

 Following
\eqref{eq:bbeTheta}, the $k\supth$ order flip set corresponds to
\[ \bgamma^{[3]}=\Theta_3(\{0,\bl{1,2},\rd{3},\bl{4}\})=\{0,5,8,
  10,14\} .\] As $(\bnu,C)$ does not contain repeated integers with
the same colour, this leads to a non-degenerate cycle (see Section)
and $\bgamma^{[3]}\in\hat\cZ^5$ is a set. In this non-degenerate setting,
every permutation of $\bgamma^{[3]}$ yields a different flip sequence
$\bmu$ and correspondingly a different $(5,3)$ Maya cycle.

\begin{figure}[ht]
  \centering

\begin{tikzpicture}[scale=0.5]
  \path [fill,red] (0.5,6.5)  
  ++(2,0) circle (5pt)
  ++(3,0) circle (5pt)
  ++(3,0) circle (5pt) ;

  \path [fill,blue] (0.5,6.5)  
  ++(3,0) circle (5pt)
  ++(6,0) circle (5pt) 
  ++(3,0) circle (5pt) ;

  \path (17.5,6.5)  node[anchor=west] {$(\bl{4},\rd{3},\bl{1},\bl{2},0)$};

  \path [fill,red] (0.5,5.5)  
  ++(2,0) circle (5pt)
  ++(3,0) circle (5pt)
  ++(3,0) circle (5pt) ;

  \path [fill,blue] (0.5,5.5)  
  ++(3,0) circle (5pt)
  ++(6,0) circle (5pt) 
  ++(3,0) circle (5pt) 
  ++(3,0) circle (5pt) ;
  \path (17.5,5.5)  node[anchor=west] {$(\rd{3},\bl{1},\bl{2},0,\bl{5})$};

  \path [fill,red] (0.5,4.5)  
  ++(2,0) circle (5pt)
  ++(3,0) circle (5pt)
  ++(3,0) circle (5pt)
  ++(3,0) circle (5pt) ;

  \path [fill,blue] (0.5,4.5)  
  ++(3,0) circle (5pt)
  ++(6,0) circle (5pt) 
  ++(3,0) circle (5pt) 
  ++(3,0) circle (5pt) ;
  \path (17.5,4.5)  node[anchor=west] {$(\bl{1},\bl{2},0,\bl{5},\rd{4})$};

  \path [fill,red] (0.5,3.5)  
  ++(2,0) circle (5pt)
  ++(3,0) circle (5pt)
  ++(3,0) circle (5pt)
  ++(3,0) circle (5pt) ;

  \path [fill,blue] (0.5,3.5)  
  ++(3,0) circle (5pt)
  ++(3,0) circle (5pt)
  ++(3,0) circle (5pt) 
  ++(3,0) circle (5pt) 
  ++(3,0) circle (5pt) ;
  \path (17.5,3.5)  node[anchor=west] {$(\bl{2},0,\bl{5},\rd{4},\bl{2})$};

  \path [fill,red] (0.5,2.5)  
  ++(2,0) circle (5pt)
  ++(3,0) circle (5pt)
  ++(3,0) circle (5pt)
  ++(3,0) circle (5pt) ;
  \path [fill,blue] (0.5,2.5)  
  ++(3,0) circle (5pt)
  ++(3,0) circle (5pt)
  ++(6,0) circle (5pt) 
  ++(3,0) circle (5pt) ;
  \path (17.5,2.5)  node[anchor=west] {$(0,\bl{5},\rd{4},\bl{2},\bl{3})$};

  \path [fill,black] (0.5,1.5)  
  ++(1,0) circle (5pt);  
  \path [fill,red] (0.5,1.5)  
  ++(2,0) circle (5pt)
  ++(3,0) circle (5pt)
  ++(3,0) circle (5pt)
  ++(3,0) circle (5pt) ;
  \path [fill,blue] (0.5,1.5)  
  ++(3,0) circle (5pt)
  ++(3,0) circle (5pt)
  ++(6,0) circle (5pt) 
  ++(3,0) circle (5pt) ;
  \path (17.5,1.5)  node[anchor=west] {$(\bl{5},\rd{4},\bl{2},\bl{3},1)$};

  \path [fill,blue] (0.5,1.5) ++(0,0) circle (5pt)
  ++(0,1) circle (5pt)
  ++(0,1) circle (5pt)
  ++(0,1) circle (5pt)
  ++(0,1) circle (5pt)
  ++(0,1) circle (5pt);

  \path [fill,red] (-.5,1.5) ++(0,0) circle (5pt)
  ++(0,1) circle (5pt)
  ++(0,1) circle (5pt)
  ++(0,1) circle (5pt)
  ++(0,1) circle (5pt)
  ++(0,1) circle (5pt);

  \path [fill,black] (-1.5,1.5) ++(0,0) circle (5pt)
  ++(0,1) circle (5pt)
  ++(0,1) circle (5pt)
  ++(0,1) circle (5pt)
  ++(0,1) circle (5pt)
  ++(0,1) circle (5pt);

  \draw  (-2,1) grid +(19 ,6);
  \draw[line width=2pt] (1,1) -- ++ (0,6);
  \draw[line width=2pt,dashed] (16,1) -- ++ (0,6);

  \foreach \x in {0,...,15} \draw (\x+1.5,0.5)  node {$\x$};
\end{tikzpicture}

\begin{tikzpicture}[scale=0.5]
  \path [fill,red] (0.5,6.5)  
  ++(2,0) circle (5pt)
  ++(3,0) circle (5pt)
  ++(3,0) circle (5pt) ;

  \path [fill,blue] (0.5,6.5)  
  ++(3,0) circle (5pt)
  ++(6,0) circle (5pt) 
  ++(3,0) circle (5pt) ;

  \path (17.5,6.5)  node[anchor=west] {$(\rd{3},\bl{4},\bl{1},\bl{2},0)$};

  \path [fill,red] (0.5,5.5)  
  ++(2,0) circle (5pt)
  ++(3,0) circle (5pt)
  ++(3,0) circle (5pt)
  ++(3,0) circle (5pt) ;

  \path [fill,blue] (0.5,5.5)  
  ++(3,0) circle (5pt)
  ++(6,0) circle (5pt) 
  ++(3,0) circle (5pt) ;
  \path (17.5,5.5)  node[anchor=west] {$(\bl{4},\bl{1},\bl{2},0,\rd{4})$};

  \path [fill,red] (0.5,4.5)  
  ++(2,0) circle (5pt)
  ++(3,0) circle (5pt)
  ++(3,0) circle (5pt)
  ++(3,0) circle (5pt) ;

  \path [fill,blue] (0.5,4.5)  
  ++(3,0) circle (5pt)
  ++(6,0) circle (5pt) 
  ++(3,0) circle (5pt) 
  ++(3,0) circle (5pt) ;
  \path (17.5,4.5)  node[anchor=west] {$(\bl{1},\bl{2},0,\rd{4},\bl{5})$};

  \path [fill,red] (0.5,3.5)  
  ++(2,0) circle (5pt)
  ++(3,0) circle (5pt)
  ++(3,0) circle (5pt)
  ++(3,0) circle (5pt) ;

  \path [fill,blue] (0.5,3.5)  
  ++(3,0) circle (5pt)
  ++(3,0) circle (5pt)
  ++(3,0) circle (5pt) 
  ++(3,0) circle (5pt) 
  ++(3,0) circle (5pt) ;
  \path (17.5,3.5)  node[anchor=west] {$(\bl{2},0,\rd{4},\bl{5},\bl{2})$};

  \path [fill,red] (0.5,2.5)  
  ++(2,0) circle (5pt)
  ++(3,0) circle (5pt)
  ++(3,0) circle (5pt)
  ++(3,0) circle (5pt) ;
  \path [fill,blue] (0.5,2.5)  
  ++(3,0) circle (5pt)
  ++(3,0) circle (5pt)
  ++(6,0) circle (5pt) 
  ++(3,0) circle (5pt) ;
  \path (17.5,2.5)  node[anchor=west] {$(0,\rd{4},\bl{5},\bl{2},\bl{3})$};

  \path [fill,black] (0.5,1.5)  
  ++(1,0) circle (5pt);  
  \path [fill,red] (0.5,1.5)  
  ++(2,0) circle (5pt)
  ++(3,0) circle (5pt)
  ++(3,0) circle (5pt)
  ++(3,0) circle (5pt) ;
  \path [fill,blue] (0.5,1.5)  
  ++(3,0) circle (5pt)
  ++(3,0) circle (5pt)
  ++(6,0) circle (5pt) 
  ++(3,0) circle (5pt) ;
  \path (17.5,1.5)  node[anchor=west] {$(\rd{4},\bl{5},\bl{2},\bl{3},1)$};

  \path [fill,blue] (0.5,1.5) ++(0,0) circle (5pt)
  ++(0,1) circle (5pt)
  ++(0,1) circle (5pt)
  ++(0,1) circle (5pt)
  ++(0,1) circle (5pt)
  ++(0,1) circle (5pt);

  \path [fill,red] (-.5,1.5) ++(0,0) circle (5pt)
  ++(0,1) circle (5pt)
  ++(0,1) circle (5pt)
  ++(0,1) circle (5pt)
  ++(0,1) circle (5pt)
  ++(0,1) circle (5pt);

  \path [fill,black] (-1.5,1.5) ++(0,0) circle (5pt)
  ++(0,1) circle (5pt)
  ++(0,1) circle (5pt)
  ++(0,1) circle (5pt)
  ++(0,1) circle (5pt)
  ++(0,1) circle (5pt);

  \draw  (-2,1) grid +(19 ,6);
  \draw[line width=2pt] (1,1) -- ++ (0,6);
  \draw[line width=2pt,dashed] (16,1) -- ++ (0,6);

  \foreach \x in {0,...,15} \draw (\x+1.5,0.5)  node {$\x$};
\end{tikzpicture}
\caption{The $(5,3)$ Maya cycle for the coloured sequence
  $(\bl{4},\rd{3},\bl{1},\bl{2},0)$ and for the Maya cycle 
  $(\rd{3},\bl{4},\bl{1},\bl{2},0)=
  \bs_0(\bl{4},\rd{3},\bl{1},\bl{2},0)$. }
  \label{fig:53cyclic}
\end{figure}
According to \eqref{eq:mufromnu}, the coloured sequence $(\bnu,C)=(\bl{4},\rd{3},\bl{1},\bl{2},0)$ determines
the flip sequence $\bmu= (14,10,5,8,0)$, and yields the Maya cycle
displayed in the first part of Figure~\ref{fig:53cyclic}. The coloured set that defines
Maya diagram $M_{i+1}$ in the cycle is obtained from the coloured set that defines $M_i$ by applying $\pi$ as described in Proposition~\ref{prop:nuCM} and \eqref{eq:piaction} .  The flip sequence $\bmu$ determines the values of the parameters $(a_0,\dots,a_4)$, which
according to \eqref{eq:aimui} become
$(a_0,a_1,a_2,a_3,a_4)=(8,10,-6,16,-34)$.

In principle, $H_M$ would be pseudo-Wronskians \eqref{eq:pWdef2} for
an arbitrary Maya diagram, but having normalized $(\bnu,C)$ to standard
form, all the rational solutions can be expressed in terms of ordinary
Hermite Wronskians, and no generality is lost. In the case of the
choices made above, the sequence of Wronskians is
\begin{align*}
\H_{M_0}(z)&=\Wr(H_1,H_2,H_4,H_7,H_8,H_{11}),\\
\H_{M_1}(z)&=\Wr(H_1,H_2,H_4,H_7,H_8,H_{11},H_{14}),\\
\H_{M_2}(z)&=\Wr(H_1,H_2,H_4,H_7,H_8,H_{10},H_{11},H_{14}),\\
\H_{M_3}(z)&=\Wr(H_1,H_2,H_4,H_5,H_7,H_8,H_{10},H_{11},H_{14}),\\
\H_{M_4}(z)&=\Wr(H_1,H_2,H_4,H_5,H_7,H_{10},H_{11},H_{14}),
\end{align*}
where $H_n=H_n(z)$ is the $n$-th Hermite polynomial. The rational solution to the dressing chain is given by the tuple $(w_0,w_1,w_2,w_3,w_4|a_0,a_1,a_2,a_3,a_4)$, where $a_i$ and $w_i$ are given by \eqref{eq:wsigMi}--\eqref{eq:aimui} as:
\begin{align*}
w_0(z)&=-z+\ddz\Big[\log \H_{M_1}(z) - \log \H_{M_0}(z)\Big],&& a_0=8,\\
w_1(z)&=-z+ \ddz\Big[\log \H_{M_2}(z)-\log \H_{M_1}(z)\Big],&& a_1=10,\\
w_2(z)&=-z+ \ddz\Big[\log \H_{M_3}(z) - \log \H_{M_2}(z)\Big],&& a_2=-6,\\
w_3(z)&=z+ \ddz\Big[\log \H_{M_4}(z) - \log \H_{M_3}(z)\Big],&& a_3=16,\\
w_4(z)&=-z+ \ddz\Big[\log \H_{M_0}(z) - \log \H_{M_4}(z)\Big],&& a_4=-34.
\end{align*}
Finally, Proposition~\ref{prop:wtof} implies that the corresponding rational solution to the $A_4$-\p\ system \eqref{eq:A2nsystem} is given by the tuple $(f_0,f_1,f_2,f_3,f_4|\a_0,\a_1,\a_2,\a_3,\a_4)$, where
\begin{align*}
f_0(z)&=\tfrac13z+\ddz\Big[\log \H_{M_2}(\cc{2}z) - \log \H_{M_0}(cz)\Big],&& \alpha_0=-\tfrac43,\\
f_1(z)&=\tfrac13z+\ddz\Big[\log \H_{M_3}(cz) -\log \H_{M_1}(cz)\Big],&& \alpha_1=-\tfrac53,\\
f_2(z)&= \ddz\Big[\log \H_{M_4}(cz) - \log \H_{M_2}(cz)\Big],&& \alpha_2=1,\\
f_3(z)&=\ddz\Big[\log \H_{M_0}(cz) -\log \H_{M_3}(cz)\Big],&& \alpha_3=-\tfrac83,\\
f_4(z)&=\tfrac13z+\ddz\Big[\log \H_{M_1}(cz) - \log \H_{M_4}(cz)\Big],&& \alpha_4=\tfrac{17}{3},
\end{align*}
with $c^2=-\tfrac16$.

\end{example}

%%%%%%%%%%%%%%%%%%%%%%%%%%%%%%%
\section{Connection with the symmetry group approach}
\label{sec:sym}
%%%%%%%%%%%%%%%%%%%%%%%%%%%%%%%

Noumi and Yamada showed \cite{noumi1998higher} that system
\eqref{eq:A2nsystem} is invariant under a symmetry group, which acts
by B\"acklund transformations on a tuple of functions and
parameters. This symmetry group is the extended affine Weyl group
$\EAWeyln$, generated by the operators
$\bpi,\bs_0,\ldots, \bs_{2n}$ whose action on the tuple
$(f_0,\dots,f_{2n} |\alpha_0,\dots,\alpha_{2n})$ is given by:
\begin{gather}
  \label{eq:BT1}
  \bs_i(f_i) = f_i,\quad s_i(f_j) = f_j \mp \frac{\alpha_i}{f_i}\;
  (j=i\pm 1),\quad s_i(f_j) = f_j \; (j\neq i,\; i\pm 1)\\
  \label{eq:BT2}
  \bs_i(\alpha_i)=-\alpha_i,\quad \bs_i(\alpha_j)=\alpha_j+\alpha_i\;
  (j=i\pm 1),\quad \bs_i(\alpha_j) =\alpha_j\; (j\neq i,i\pm 1)\\
 \label{eq:BT3}\bpi(f_j)=f_{j+1},\\
\label{eq:BT4} \bpi(\alpha_j)=\alpha_{j+1}
\end{gather}
where $i,j=0,\dots,2n \mod(2n+1)$.  A direct calculation and
inspection of \eqref{eq:wtof2} serve to establish that the above
B\"acklund transformations correspond to the following 
  transformation of the dressing chain \eqref{eq:wfchain}:
\begin{gather}
  \label{eq:wBT1}
  \bs_i(w_i) = w_i+\frac{a_i}{w_i+w_{i+1}},\quad \bs_i(w_{i+1}) =
  w_{i+1}-\frac{a_i}{w_i+w_{i+1}},\quad s_i(w_j) = w_j \; (n\neq
  i,i+1)\\
  \label{eq:wBT2}
  \bs_i(a_i)=-a_i,\quad \bs_i(a_j)=a_j+a_i\;
  (j=i\pm 1),\quad \bs_i(a_j) =a_j\; (j\neq i,i\pm 1)\\
  \label{eq:wBT3}\bpi(w_j)=w_{j+1},\\
  \label{eq:wBT4} \bpi(a_j)=a_{j+1}
\end{gather}
The two realizations are equivalent, but we will focus mostly on the
latter realization in terms of dressing chains.

\subsection{Symmetries on Maya cycles}

Above, we showed that an oddly coloured sequence $(\bnu,C)$ specifies
a rational solution to the $A_{2n}$-\p\ system \eqref{eq:A2nsystem},
and that, up to integer translations, every rational solution can
be represented by one such sequence. Since the symmetry group of
transformations \eqref{eq:BT1}-\eqref{eq:BT4} preserves the rational
character of the solutions, it must have a well defined action on Maya
cycles $\bM$ and coloured sequences $(\bnu,C)$. We describe this group
action below.

%\newcomment{It is easier give this action using the flip sequence.}
Let $\bM=(M_0,\dots,M_{2n}, M_{2n+1})$ be a $(2n+1,k)$ Maya cycle
with flip sequence $\bmu=(\mu_0,\ldots, \mu_{2n})$. Define
\begin{align}
  \label{eq:piM}
  \bpi(\bM) &=(M_1,\dots,M_{2n},M_{2n+1}, M_1+k), \\
  \label{eq:siM}
  \bs_i(\bM) &=(M_0,\dots,\hat M_{i+1},\dots,M_{2n+1}),\quad i=0,\ldots,
  2n-1,\\
  \label{eq:s2nM}  
  \bs_{2n}(\bM) &=(\hM_{0},M_1,\dots,M_{2n},\hM_{0}+k),
\end{align}
where
  \[
        \hM_i =   \phi_{\mu_i}(M_{i+1}),\quad i=0,\ldots, 1,\ldots, 2n \]

%\[ \hM_i =  \phi_{\bmu^{(i)}}(M_{i}) \]
%and where
% \begin{align*}
%   \bmu^{(i)}&=M_{i-1}\ominus
%   M_{i+1},\quad i=1,\ldots, 2n-1,\\
%   \bmu^{(0)} &=  (M_{2n}-k)\ominus M_1
% \end{align*}
\noindent
It is clear by inspection that $ \bpi (\bM) $ and
$\bs_i(\bM),\; i=0,\dots,2n$ are Maya cycles with flip sequences given respectively by:

  \begin{align*}
    \pi(\bmu) &= L(\bmu)\\
    \bs_i(\bmu) &= K_{i,i+1}(\bmu),\quad i=0,\ldots, 2n-1\\
    \bs_{2n}(\bmu) &= K_{0,2n}(\bmu),
  \end{align*}
  where $L$ is the circular permutation \eqref{eq:Ldef}, and 
  $K_{i,j}$ denotes a transposition of the indicated elements.  

%\newcomment{Added the action of the symmetry generators on the flip sequence}

\begin{prop}\label{prop:WonM}
  The action of $\EAWeyln$ on Maya cycles described in \eqref{eq:piM}
  - \eqref{eq:s2nM} and the action of $\EAWeyln$ by B\"acklund
  transformations \eqref{eq:wBT1} - \eqref{eq:wBT4} is compatible with
  the transformation $\bM \to (\bw|\ba)$  defined in \eqref{eq:wsigMi}
  \eqref{eq:aimui}. 
\end{prop}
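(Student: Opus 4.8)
The plan is to establish the intertwining $\Phi(g\cdot\bM)=g\cdot\Phi(\bM)$ one generator $g\in\{\bpi,\bs_0,\dots,\bs_{2n}\}$ at a time, where $\Phi$ denotes the map $\bM\mapsto(\bw|\ba)$ of \eqref{eq:wsigMi}--\eqref{eq:aimui}. Checking generators suffices: once the relation holds for a single generator and \emph{every} Maya cycle, an arbitrary word in the generators is handled by peeling off one generator at a time, the intermediate object remaining a Maya cycle at each stage. The parameter part is immediate throughout: since $a_i=2(\mu_i-\mu_{i+1})$ depends only on the flip sequence $\bmu$, the prescribed actions $\pi(\bmu)=L(\bmu)$, $\bs_i(\bmu)=K_{i,i+1}(\bmu)$ and $\bs_{2n}(\bmu)=K_{0,2n}(\bmu)$ reproduce \eqref{eq:wBT2} and \eqref{eq:wBT4} by inspection, so all the work concerns the functions $w_i$.

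For $g=\bpi$ I would simply substitute the shifted cycle $\bpi(\bM)=(M_1,\dots,M_{2n+1},M_1+k)$ into \eqref{eq:wsigMi}. Using the translation invariance $\H_{M+k}\propto\H_M$ of Proposition~\ref{prop:Mshift}, so that $(\log\H_{M+k})'=(\log\H_M)'$, every entry telescopes to $w_i^{\mathrm{new}}=w_{i+1}$ cyclically, which is exactly \eqref{eq:wBT3}.

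The core case is $g=\bs_i$ with $0\le i\le 2n-1$. Here only the intermediate diagram changes, $M_{i+1}\mapsto M_{i+1}'=\phi_{\mu_{i+1}}(M_i)=\phi_{\mu_i}(M_{i+2})$, all other $M_j$ being fixed, so only $w_i,w_{i+1}$ are affected. First I would show the sum $S:=w_i+w_{i+1}$ is invariant: in \eqref{eq:wsigMi} the $\H_{M_{i+1}}$ terms cancel, leaving $S=(\sigma_i+\sigma_{i+1})z+(\log\H_{M_{i+2}})'-(\log\H_{M_i})'$, which involves only the unchanged $M_i,M_{i+2}$; since flipping at $\mu_{i+1}$ leaves the occupation of $\mu_i$ unchanged (when $\mu_i\neq\mu_{i+1}$; if $\mu_i=\mu_{i+1}$ then $a_i=0$ and $\bs_i$ acts trivially), one gets $\{\sigma_i',\sigma_{i+1}'\}=\{\sigma_i,\sigma_{i+1}\}$ and hence $S^{\mathrm{new}}=S$. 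Next I would prove $w_i^{\mathrm{new}}=w_i+a_i/S$: writing $v=a_i/S$ and using the dressing-chain relation $a_i=S'+(w_{i+1}-w_i)S$ from \eqref{eq:wfchain}, the Riccati increment $v'+2w_iv+v^2$ collapses to the identity $-S'+2w_iS+a_i=S^2$, which is immediate. This certifies that $w_i+a_i/S$ solves $y'+y^2=U_{M_i}-\lambda_{i+1}$ with $\lambda_{i+1}=2\mu_{i+1}+1$.

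The decisive --- and hardest --- step is to identify this Riccati solution with the Maya-cycle value, since a Riccati equation has a one-parameter family of solutions and solving it does not by itself select the right branch. The resolution uses the classification of Section~\ref{sec:Maya}: $w_i+a_i/S$ is manifestly rational, hence the logarithmic derivative of a quasi-rational eigenfunction of $\cL_{M_i}$ at eigenvalue $2\mu_{i+1}+1$; by Proposition~\ref{prop:seedfunc} such an eigenfunction is unique up to scale (the eigenvalue $2m+1$ forces $m=\mu_{i+1}$), namely $\psi_{M_i,\mu_{i+1}}$, whose logarithmic derivative is exactly the Maya-cycle $w_i^{\mathrm{new}}$ of \eqref{eq:seedfunc}. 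Thus $w_i^{\mathrm{new}}=w_i+a_i/S$, and with $S^{\mathrm{new}}=S$ this gives $w_{i+1}^{\mathrm{new}}=S-w_i^{\mathrm{new}}=w_{i+1}-a_i/S$, matching \eqref{eq:wBT1}. Finally, the boundary generator $\bs_{2n}$ runs identically once $\H_{M_0+k}\propto\H_{M_0}$ is used to cancel the corner contributions in $w_{2n}+w_0$; the sole new feature is that the governing eigenvalue carries the shift coming from $\mu_{2n+1}=\mu_0+k$, which the same uniqueness argument absorbs. I expect this uniqueness step to be the main obstacle: the Bäcklund algebra is routine, but pinning the rational Riccati solution to the eigenfunction dictated by the transposed cycle is what genuinely requires the trivial-monodromy/quasi-rational machinery.
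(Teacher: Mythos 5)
Your proposal is correct, but the decisive step is handled by a genuinely different mechanism than the paper's. You and the paper agree on the outer structure: check generators one at a time, dispose of the parameters $a_i$ directly from the flip sequence, handle $\bpi$ by inspection using $\H_{M+k}\propto\H_M$, and reduce $\bs_i$ to the identity $\hw_i = w_i + a_i/(w_i+w_{i+1})$. Where you diverge is in how that identity is established. The paper never verifies a Riccati equation and never appeals to Proposition~\ref{prop:seedfunc}: it observes that $M_{i+1}$ and $\hM_{i+1}$ are both joined by flips to $M_i$ and $M_{i+2}$, so that
\[ L_{\hM_{1}}\stackrel{-\hw_0}\longrightarrow L_{M_0}\stackrel{w_0}\longrightarrow L_{M_1}\stackrel{w_1}\longrightarrow L_{M_2}\stackrel{-\hw_1}\longrightarrow L_{\hM_{1}} \]
is a zero-shift $4$-cyclic factorization chain; writing its four dressing-chain equations and using the linear closure $\hw_1=-\hw_0+w_0+w_1$, the first and third equations \emph{add} to give $(w_0-\hw_0)(w_0+w_1)=a_0$ outright. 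This is pure algebra on the chain equations, and it derives the B\"acklund formula rather than verifying it. Your route instead verifies that $w_i+a_i/S$ solves the Riccati equation $y'+y^2=U_{M_i}-(2\mu_{i+1}+1)$ (your identity $-S'+2w_iS+a_i=S^2$ is exactly right), and then pins down the branch by invoking the uniqueness clause of Proposition~\ref{prop:seedfunc}: a rational Riccati solution at that eigenvalue must be the log-derivative of $\psi_{M_i,\mu_{i+1}}$, which is the Maya-cycle value. Both arguments are sound. Yours is more conceptual --- it explains \emph{why} the B\"acklund image must coincide with the transposed Maya cycle, and the same template (Riccati check plus uniqueness) treats all generators, including $\bs_{2n}$ with its shifted eigenvalue, uniformly; the cost is that you import the classification machinery of Section~\ref{sec:Maya} at a point where the paper needs only elementary elimination. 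Conversely, the paper's elimination is self-contained but leaves implicit the fact that the auxiliary $4$-cycle exists and closes with shift $0$, which is precisely the structural fact your sum-invariance and sign-swap observations make explicit. One further credit: you correctly isolate and dispatch the degenerate case $\mu_i=\mu_{i+1}$ (where $a_i=0$ and $\bs_i$ acts trivially), which the paper's proof passes over in silence.
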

\begin{proof}
  The compatibility of $\pi$ follows by a direct inspection.  We
  demonstrate the compatibility of $\bs_0$; the compatibility of the
  other actions is argued similarly. Let $\bM=(M_0,\ldots, M_{2n+1})$
  be a $(2n+1,k)$ Maya cycle, and let $\hbM = \bs_0(\bM)$ as per
  \eqref{eq:siM}.  Let $(\bw,\ba)$ and $(\hat{\bw},\hat{\ba})$ be the
  corresponding rational solutions of the $(2n+1)$-cyclic dressing chain
  as per \eqref{eq:wsigMi} and \eqref{eq:aimui}. Let
  $\bmu=(\mu_0,\ldots, \mu_{2n})$ be the flip sequence corresponding
  to $\bM$.  By construction, $\hbM= (M_0,\hM_1,M_2,\ldots, M_{2n+1})$
  has flip sequence $(\mu_1,\mu_0,\mu_2,\ldots, \mu_{2n})$
  with
  \[ M_1 = \phi_{\mu_0}(M_0),\quad \hM_1 = \phi_{\mu_1}(M_0). \]
  Hence, $\hw_i = w_i, \ha_i = a_i$ for $i=2,\ldots, 2n$.
  Applying \eqref{eq:alphaidef} with
  \[ \lambda_0 = 2\mu_0+1,\quad \lambda_1 = 2\mu_1+1,\quad
    \hat{\lambda}_0 = 2\mu_1+1,\quad \hat{\lambda}_1 =
    2\mu_0+1) \]
  establishes that
  \begin{align*}
    \ha_0 &=2(\mu_0-\mu_1) - a_0 = \bs_0(a_0),\\
    \ha_1 &=    2(\mu_2-\mu_0) = a_0+a_1 = \bs_0(a_1),\\
    \ha_{2n} &=    2(\mu_1-\mu_{2n}) = a_0+a_{2n} = \bs_0(a_{2n})
  \end{align*}

  Observe that
  \[ L_{\hM_1}\stackrel{-\hw_0}\longrightarrow
    L_{M_0}\stackrel{w_0}\longrightarrow
    L_{M_1}\stackrel{w_1}\longrightarrow
    L_{M_2}\stackrel{-\hw_1}\longrightarrow L_{\hM_1}\] forms a cyclic
  factorization chain with shift $0$.
  % \begin{align*}
  %   L_{\hM_1} &= (D_z-\hw_0)(-D_z-\hw_0)+2\mu_1+1\\
  %   L_{M_0} &= (-D_z-\hw_0)(D_z-\hw_0)+2\mu_1+1\\
  %   L_{M_0} &= (D_z+w_0)(-D_z+w_0)+2\mu_0+1\\
  %   L_{M_1} &= (-D_z+w_0)(D_z+w_0)+2\mu_0+1
  % \end{align*}
  Hence,
  \begin{equation}
    \label{eq:w0hw0}
    \begin{aligned}
      (-\hw_0+w_0)' + w_0^2-\hw_0^2 &= 2(\mu_1-\mu_0),\\
    (w_0+w_1)' + w_1^2-w_0^2 &= 2(\mu_0-\mu_1),\\
    (w_1-\hw_1)' + \hw_1^2-w_1^2 &= 2(\mu_1-\mu_0),\\
    (-\hw_1-\hw_0)' + \hw_0^2-\hw_1^2 &= 2(\mu_0-\mu_1),\\
    -\hw_0+w_1+w_2-\hw_1 &= 0
    \end{aligned}
  \end{equation}
  It follows by a straightforward elimination that
  \[ (-\hw_0+w_0)(w_0+w_1) = 2(\mu_1-\mu_0) = - a_0\]
  Therefore,
  \[ \bs_0(w_0)=\hw_0 = w_0 + \frac{a_0}{w_0+w_1}.\] A similar
  elimination in \eqref{eq:w0hw0}  serves to show that
  \[ \bs_0(w_1) = \hw_1.\]
\end{proof}

It is also clear that the action of $\pi$ shown in \eqref{eq:piM} is compatible with the B\"acklund transformation \eqref{eq:BT3} \eqref{eq:BT4}.
Note that the action of $\textbf{s}_i $ on the cycle
only changes one Maya diagram $M_{i+1}$, and consequently the
functions $f_{i+1}$ and $f_{i-1}$ are in agreement with
\eqref{eq:BT1}.

We next describe the corresponding action of the symmetry operators
$\pi,\bs_0,\ldots, \bs_{2n}$ on the set of coloured sequences.  Let
$(\bnu,C)\in \Z^{2n+1}_k$ be a coloured sequence.  Define
$\pi(\bnu,c)$ as in   \eqref{eq:piaction}.  Define
\begin{align}
  \label{eq:sinuC}
  \bs_i(\bnu,C) &= (K_{i,i+1}(\bnu), K_{i,i+1}(C)),\quad \bnu\in
  \Z^p,\; C\in (\ZkZ)^p,\\
      \label{eq:s2nnuC}
    \bs_{2n}(\bnu,C)
    &= (K_{2n,0}(\bnu)-\be_0+\be_{2n}, K_{2n,0}(C))\\ \nonumber
    &= (    \nu_{2n}-1,\nu_1,\ldots, \nu_{2n-1}, \nu_0+1, K_{2n,0}(C)) 
\end{align}
where $K_{i,j}$ denotes the transposition of components in positions
  $i$ and $j$.
\begin{prop}
  \label{prop:M+1coords}
  The actions \eqref{eq:piM}-\eqref{eq:siM} of
  $\bpi,\bs_0,\ldots, \bs_{2n}$ on Maya cycles and the corresponding
  actions on coloured seqences \eqref{eq:piaction}
  \eqref{eq:sinuC}\eqref{eq:s2nnuC} satisfy the defining relations of
  the extended affine Weyl group of type $\EAWeyln$:
  \begin{equation}\label{eq:group}
    \bs_i ^2\equiv 1,\quad (\bs_i \bs_{i+1})^{2n+1}\equiv 1,\quad \bpi \bs_i
    \equiv\bs_{i+1}\bpi,\quad \bpi^{2n+1}\equiv 1,
  \end{equation}
  where $\equiv$ indicates equality modulo  translations.
\end{prop}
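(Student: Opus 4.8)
The plan is to prove the relations for a single realization and then transfer them through the bijection of Proposition~\ref{prop:nuCM}, which is equivariant: by construction the actions \eqref{eq:piM}--\eqref{eq:s2nM} on Maya cycles correspond, under that bijection, to the actions \eqref{eq:piaction}, \eqref{eq:sinuC}, \eqref{eq:s2nnuC} on coloured sequences, so it suffices to check \eqref{eq:group} on $\Z^{2n+1}_k$. I would carry out the verification not on $(\bnu,C)$ directly but on the flip sequence $\bmu$, with $\mu_i=k\nu_i+C_i$, which is equivalent data (it is the $k$-modular decomposition packaged as a single sequence) and on which the generators act transparently. A short computation from \eqref{eq:sinuC}, \eqref{eq:s2nnuC}, \eqref{eq:piaction} shows that for $i<2n$ the operator $\bs_i$ acts on $\bmu$ as the adjacent transposition $K_{i,i+1}$, that $\bs_{2n}$ exchanges the two endpoints while compensating by $\mu_0\mapsto\mu_0+k$ and $\mu_{2n}\mapsto\mu_{2n}-k$, and that $\bpi$ rotates $\bmu$ to the left and raises the wrapped entry by $k$. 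These are precisely the standard generators of the extended affine symmetric group acting on a length-$(2n+1)$ window of charge $k$, the familiar model of $\EAWeyln$, and under this encoding the equivalence $\equiv$ is exactly the identification of $\bmu$ with $\bmu+j$ for $j\in\Z$ (the translations of the cycle).

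First I would dispatch the involutions. For $i<2n$ the identity $\bs_i^2=1$ is immediate, since $K_{i,i+1}^2$ is trivial and the action on both $\bnu$ and $C$ is by the same transposition. For $\bs_{2n}$ I would compute $\bs_{2n}^2$ directly: the endpoints are exchanged twice, returning them to their original positions, while the two shift corrections $+k$ and $-k$ cancel, so $\bs_{2n}^2=1$ holds on the nose, not merely modulo $\equiv$. Next come the relations among the interior generators $\bs_0,\dots,\bs_{2n-1}$, which act as honest adjacent transpositions of $\bmu$; the braid relations $(\bs_i\bs_{i+1})^3=1$ and the commuting relations $(\bs_i\bs_j)^2=1$ for $|i-j|\ge 2$ are inherited verbatim from the symmetric group $S_{2n+1}$ and again hold exactly.

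It then remains to treat the relations that involve the wrap-around generator $\bs_{2n}$ and the rotation $\bpi$, where the shift corrections must be tracked. For $\bpi^{2n+1}$ I would iterate the rotation and use $L^j(\be_{p-1})=\be_{p-1-j}$ to find $\bpi^{2n+1}(\bmu)=\bmu+k$, so that $\bpi^{2n+1}\equiv 1$ because $\bmu\mapsto\bmu+k$ is the translation of the cycle by $k$. The conjugation relations $\bpi\bs_i\equiv\bs_{i+1}\bpi$ follow from the fact that the left-rotation intertwines the adjacent transpositions, shifting the transposed positions by one; the only instances needing care are the boundary cases $\bpi\bs_{2n-1}\equiv\bs_{2n}\bpi$ and $\bpi\bs_{2n}\equiv\bs_0\bpi$, where one matches the accumulated $\pm k$ and rotation shifts against a uniform translation. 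The boundary braids $(\bs_{2n-1}\bs_{2n})^3\equiv 1$ and $(\bs_{2n}\bs_0)^3\equiv 1$, together with the commuting relations $(\bs_{2n}\bs_j)^2\equiv 1$ for $j\ne 0,2n-1,2n$, are verified the same way: the underlying window permutation closes up after the prescribed number of factors, and the residual displacement is shown to be a scalar multiple of $(1,\dots,1)$, i.e. a lattice translation, hence trivial under $\equiv$.

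The main obstacle is exactly this last bookkeeping. The purely permutational content is immediate once the generators are rewritten on $\bmu$; the genuine work is to confirm that in every relation touching $\bs_{2n}$ or $\bpi$ the compensating shifts $\pm k$ assemble into a translation $\bmu\mapsto\bmu+c$ rather than some other displacement, so that the relation holds modulo $\equiv$ even though it may fail exactly in $\Z^{2n+1}_k$. I would make this uniform by recording each generator as a pair consisting of a window permutation together with a shift vector, and composing in the resulting semidirect product; this reduces the whole proposition to one permutation identity and one lattice identity per relation, and removes the need for case-by-case index chases.
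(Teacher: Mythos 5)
Your proposal is correct, and it supplies the verification that the paper's proof entirely omits: the paper disposes of Proposition~\ref{prop:M+1coords} with the single sentence that the relations ``follow directly from the relevant definitions.'' Your organizing device --- encoding a coloured sequence by its flip sequence $\mu_i = k\nu_i + C_i$, so that $\bs_i$ ($i<2n$) becomes the adjacent transposition $K_{i,i+1}$, $\bs_{2n}$ becomes the endpoint swap with compensating shifts $\mp k$, and $\bpi$ becomes the left rotation with $+k$ on the wrapped entry, all composed inside the semidirect product of window permutations and shift vectors --- is exactly the right bookkeeping, and it is in fact \emph{more} careful than the paper's own auxiliary formulas: the displayed flip-sequence actions following \eqref{eq:s2nM} state $\bpi(\bmu)=L(\bmu)$ and $\bs_{2n}(\bmu)=K_{0,2n}(\bmu)$ with the $\pm k$ corrections silently dropped, which your computation restores. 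If one carries your computation through, one finds that every relation except $\bpi^{2n+1}$ holds exactly (including $\bs_{2n}^2=1$, the boundary braids, and the conjugation relations, once the composition convention is fixed), while $\bpi^{2n+1}$ equals the translation $\bmu\mapsto\bmu+k$, so the equivalence $\equiv$ is genuinely needed only there; your hedging that the boundary relations hold ``modulo $\equiv$'' is therefore safe but can be sharpened.

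Two remarks. First, you prove the standard presentation --- braid relations $(\bs_i\bs_{i+1})^3\equiv 1$ together with commutation $(\bs_i\bs_j)^2\equiv 1$ for distant indices --- rather than the relation $(\bs_i\bs_{i+1})^{2n+1}\equiv 1$ as printed in \eqref{eq:group}. You are right to do so: since $\bs_i\bs_{i+1}$ acts as a $3$-cycle, the printed exponent is correct only when $3\mid 2n+1$, and the displayed relations (which also lack the commutation relations) are evidently a verbatim generalization of the $n=1$ case; your version is the correct set of defining relations for $\EAWeyln$. Second, your opening claim that the bijection of Proposition~\ref{prop:nuCM} intertwines the two actions ``by construction'' is a mild overstatement --- that equivariance is precisely the content of the paper's Proposition~\ref{prop:BTaction} and requires a (short) check, particularly for $\bs_{2n}$ where the initial diagram $M_0$ changes. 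This does not create a circularity, because establishing the equivariance nowhere uses the group relations, but it should be stated as a step to be verified rather than as automatic.
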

\begin{proof}
  The above relations follow directly from the relevant definitions.
\end{proof}
\begin{prop}\label{prop:BTaction}
  The action of $\EAWeyln$ on coloured sequences given by
  \eqref{eq:piaction} \eqref{eq:sinuC} \eqref{eq:s2nnuC} and the
  action of $\EAWeyln$ on Maya cycles described in \eqref{eq:piM}
  \eqref{eq:siM} \eqref{eq:s2nM} are compatible with the
  transformation $(\bnu,C)\to \bM$ defined in \eqref{eq:Mipi}.
\end{prop}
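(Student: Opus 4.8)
The plan is to verify compatibility one generator at a time, since $\EAWeyln$ is generated by $\bpi,\bs_0,\ldots,\bs_{2n}$; writing $\Phi$ for the map $(\bnu,C)\mapsto\bM$ of Proposition~\ref{prop:nuCM}, it suffices to establish $\Phi(g(\bnu,C)) = g(\Phi(\bnu,C))$ for each generator $g$. Throughout I will use the two characterizations of $\Phi$ supplied by Proposition~\ref{prop:nuCM}: that the resulting cycle satisfies $M_i=\Xik(\pi^i(\bnu,C))$ together with the flip recursion $M_{i+1}=\phi_{\mu_i}(M_i)$, $\mu_i=k\nu_i+C_i$, $M_p=M_0+k$; and that $\Xik(\bnu,C)$ depends only on the underlying coloured multiset $[\bnu,C]$, by \eqref{eq:Xikdef2}.

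For the shift operator the identity $M_i=\Xik(\pi^i(\bnu,C))$ does almost all the work: the cycle $\Phi(\bpi(\bnu,C))$ has $i\supth$ diagram $\Xik(\pi^{i+1}(\bnu,C))=M_{i+1}$ for $i=0,\ldots,2n$. To identify its closing diagram I would record the elementary facts that $\pi^p(\bnu,C)=(\bnu+\boldsymbol{1},C)$, that $\pi$ commutes with adding $\boldsymbol{1}$ to all components of $\bnu$, and that $\Xik(\bnu+\boldsymbol{1},C)=\Xik(\bnu,C)+k$ (a unit shift of the block coordinates in each colour dilates to a shift by $k$ under $\Theta_k$). Hence $\Xik(\pi^{p+1}(\bnu,C))=M_1+k$, which matches \eqref{eq:piM}. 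For an interior transposition $\bs_i$, $0\le i\le 2n-1$, the action \eqref{eq:sinuC} merely swaps the entries in positions $i,i+1$; this leaves the multiset $[\bnu,C]$, and hence $M_0$, unchanged and replaces the flip sequence $\bmu$ by $K_{i,i+1}(\bmu)$. Running the flip recursion with the transposed sequence and using that single flips always commute, $\phi_{\mu_i}\phi_{\mu_{i+1}}=\phi_{\mu_{i+1}}\phi_{\mu_i}$, I would check that every diagram except $M_{i+1}$ is reproduced and that $M_{i+1}$ is replaced by $\hM_{i+1}=\phi_{\mu_{i+1}}(M_i)$, which is precisely $\bs_i(\bM)$ in \eqref{eq:siM}.

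The remaining generator $\bs_{2n}$ is the delicate case, and I expect it to be the main obstacle, because it transposes the cyclically adjacent flips $\mu_{2n}$ and $\mu_0$ across the closure $M_{2n+1}=M_0+k$, and the accompanying shifts $-\be_0+\be_{2n}$ in \eqref{eq:s2nnuC} must be shown to account exactly for this translation by $k$. Rather than track these shifts by hand, the cleanest route is to invoke the relation $\bs_{2n}\equiv\bpi\,\bs_{2n-1}\,\bpi^{-1}$, valid modulo translations for both group actions by Proposition~\ref{prop:M+1coords}; combining it with the compatibilities of $\bpi$ and $\bs_{2n-1}$ already established yields the compatibility of $\bs_{2n}$ up to a translation of the cycle. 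The translation ambiguity is harmless: by Proposition~\ref{prop:MM+1} translated Maya cycles generate the same rational solution, so the statement is naturally read in the quotient $\bM/\Z$, where the intertwining holds on the nose. A direct verification, balancing the $\mp\be_0+\be_{2n}$ bookkeeping against the closure relation, is also available and would remove any reliance on the group relations, but the conjugation argument is shorter and conceptually cleaner.
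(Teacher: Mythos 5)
Your proposal is correct and follows essentially the same route as the paper's own proof: direct verification of the intertwining for $\bpi$ and for the transpositions via Proposition~\ref{prop:nuCM}, with the remaining generator recovered by conjugation using the group relations of Proposition~\ref{prop:M+1coords}. The differences are only in degree, not in kind: the paper checks just $\bpi$ and $\bs_0$ and lets the relations deliver every other $\bs_i$, whereas you check all interior reflections directly and conjugate only for $\bs_{2n}$; your explicit disposal of the resulting modulo-translation ambiguity through Proposition~\ref{prop:MM+1} (reading the statement in $\bM/\Z$) is a point the paper leaves implicit.
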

\begin{proof}
  Because both sets of actions satisfy \eqref{eq:group} it suffices to
  establish the compatibility of $\bpi$ and $\bs_0$. By inspection of
  \eqref{eq:Mipi},  the
  compatibility of $\bpi$ follows from the relation
  \[ \bpi^{2n+1}(\bnu,C) = T^{2n+1}(\bnu,C) = (\bnu+1,C).\]
  % \[ \hbM= \bs_0(\bM) = (M_0,\hM_1,M_2,\ldots, M_{2n+1})\] where
  % \[ \hM_1 = \phi_{\mu_1}(M_0).\]
  The compatibility of $\bs_0$ follows directly from Proposition
  \ref{prop:nuCM} and the definitions of $\bs_0(\bM)$ and
  $\bs_0(\bnu,C)$.
\end{proof}
\noindent
Note that the above actions preserve the colouring signature. It
follows that $\bpi,\bs_0,\ldots,\bs_{2n}$ preserve the set of oddly
coloured sequences.

According to Proposition~\ref{prop:MM+1} both $\bM$ and $\bM+j$ for any $j\in\Z$ describe the same rational solution of the dressing chain, so the relations \eqref{eq:group} are strict identities for the group action on rational solutions \eqref{eq:BT1}-\eqref{eq:BT4}.
The coincidence of the two representations given by \eqref{eq:BT1} and
\eqref{eq:siM} (with \eqref{eq:wsigMi} and \eqref{eq:wtof2}), entails
interesting identities between Hermite Wronskians, that shall be
further explored elsewhere.

\subsection{Seed solutions}

The usual approach to constructing rational solutions to  the $A_{2n}$-\p\ system \eqref{eq:A2nsystem} is to let the symmetry group act on a number of very simple \textit{seed solutions}, to construct the rest of the rational solutions. 

\begin{definition}
  Fix a $k\in \N$ and $n\in \Nz$ and let $2n+1=p_0+\cdots+p_{k-1}$ be
  a composition of $2n+1$ into $k$ odd parts. A seed sequence is a
  coloured sequence $(\bzero,C_\bp)$ where $\bzero\in \Z^{2n+1}$ is
  the zero vector and where
  \[ C_\bp= (0^{p_0},1^{p_1},\ldots, ).\]
\end{definition}
All the Maya diagrams in the cycle corresponding to a seed coloured sequence $(\bzero,C_\bp)$ have genus zero. The corresponding seed solutions to the  $A_{2n}$-\p\ system \eqref{eq:A2nsystem} for each signature $\bp$ are given by the following Proposition.

\begin{prop}
  Let $2n+1=p_0+\cdots+p_{k-1}$ be a composition of an odd number into
  $k$ odd parts. Let $(f_0,\dots,f_{2n}|\alpha_0,\dots,\alpha_{2n})$
  be the rational solution generated by the corresponding seed
  solution $(\bzero,C_\bp)$.   Then, for every $i=0,\ldots, 2n$ we have
  \begin{align}\label{eq:seed1}
    f_i&= \begin{cases} k^{-1}z & \text{ if } i+1\in Q,\\
      0   & \text{ otherwise}
    \end{cases},\\
    \label{eq:seed2}
    \alpha_i&= \begin{cases} k^{-1} & \text{ if } i+1\in Q,\\
      0 & \text{ otherwise }
    \end{cases},
  \end{align}
  where $Q=\{q_1,\ldots, q_{k}\}$ is the set
  of corresponding partial sums
  \[ q_j=\sum_{r=0}^{j-1} p_r,\quad j=1,\ldots, k.\]
\end{prop}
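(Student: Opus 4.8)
The plan is to make the Maya cycle attached to the seed sequence $(\bzero,C_\bp)$ completely explicit and to observe that every diagram in it is a trivial shift of $\Z_-$; this collapses the Wronskian factors in \eqref{eq:wsigMi} and leaves a dressing-chain solution that is purely linear in $z$, after which the claimed formulas drop out of Proposition~\ref{prop:wtof}.

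First I would unwind the definitions, using Proposition~\ref{prop:nuCM} to pass from $(\bzero,C_\bp)$ to its cycle. Since $\bnu=\bzero$, \eqref{eq:mufromnu} gives $\mu_i=C_i$, so the flip sequence is $\bmu=(0^{p_0},1^{p_1},\ldots,(k-1)^{p_{k-1}})$, with $C_i=j$ exactly when $q_j\le i<q_{j+1}$ (setting $q_0=0$). Each modular component $\bgamma^{(j)}$ is the multiset $\{0^{p_j}\}$, and because all its block coordinates coincide the intervals in \eqref{eq:Xidef} collapse, giving $\Xi(\bgamma^{(j)})=\Z_-$; interlacing $k$ copies of $\Z_-$ returns $\Z_-$, so $M_0=\Xik(\bzero,C_\bp)=\Z_-$ is in standard form. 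I would then prove by induction on $j$ that $M_{q_j}=(-\infty,j)$: within the $j$-th colour block one flips position $j$ a total of $p_j$ times, starting from $(-\infty,j)$; the flips alternate between adding and removing $j$, so after the odd number $p_j$ of them one lands on $(-\infty,j+1)=M_{q_{j+1}}$. In particular $M_{q_j+s}=(-\infty,j)$ for $s$ even and $(-\infty,j+1)$ for $s$ odd, and the closure $M_{2n+1}=(-\infty,k)=\Z_-+k=M_0+k$ confirms that $\bM$ is a $(2n+1,k)$ Maya cycle.

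The decisive simplification is that every $M_i=(-\infty,m)=\Z_-+m$, so by Proposition~\ref{prop:Mshift} the pseudo-Wronskian $\H_{M_i}$ is a constant multiple of $\H_{\Z_-}=1$; hence $(\log\H_{M_i})'=0$ and \eqref{eq:wsigMi} reduces to $w_i=\sigma_i z$. Reading off the sign from \eqref{eq:sign}, at $i=q_j+s$ one has $\mu_i=j$ with $M_i=(-\infty,j)$ for $s$ even (so $j\notin M_i$, $\sigma_i=-1$) and $M_i=(-\infty,j+1)$ for $s$ odd (so $j\in M_i$, $\sigma_i=+1$). Thus within each block the signs alternate as $(-1,+1,\ldots,-1)$, beginning and ending with $-1$ because $p_j$ is odd. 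Consequently $w_i+w_{i+1}=(\sigma_i+\sigma_{i+1})z$ vanishes whenever $i$ and $i+1$ lie in the same block, while at each block boundary $i=q_{j+1}-1$ (and cyclically at $i=2n$, using $w_{2n+1}=w_0$) both adjacent signs equal $-1$, giving $\sigma_i+\sigma_{i+1}=-2$. Likewise $a_i=2(\mu_i-\mu_{i+1})$ from \eqref{eq:aimui} vanishes inside a block and equals $2(j-(j+1))=-2$ at each boundary, with $\mu_{2n+1}=\mu_0+k=k$ handling the wrap-around.

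Finally I would feed this into Proposition~\ref{prop:wtof} with $\Delta=2k$ and $c^2=-\tfrac1{2k}$: since $f_i(z)=c^2(\sigma_i+\sigma_{i+1})z$ and $\alpha_i=c^2a_i$, the boundary values become $f_i=-\tfrac1{2k}(-2)z=k^{-1}z$ and $\alpha_i=k^{-1}$, while all interior values are $0$. As the boundaries are exactly the indices $i$ with $i+1\in\{q_1,\ldots,q_k\}=Q$, this is precisely \eqref{eq:seed1}--\eqref{eq:seed2}. The only real subtlety---more bookkeeping than a genuine obstacle---is the cyclic seam: one must check that each $p_j$ being odd forces the last sign of block $k-1$ and the first sign of block $0$ to both equal $-1$, so that the wrap-around index $i=2n$ (where $i+1=2n+1=q_k\in Q$) produces $f_{2n}=k^{-1}z$ consistently, and that the resulting solution respects the normalizations $\sum_i f_i=z$ and $\sum_i\alpha_i=1$ through the $k$ equal contributions.
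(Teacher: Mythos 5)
Your proposal is correct and follows essentially the same route as the paper, whose proof is simply the assertion that the result follows by straightforward application of the construction rules \eqref{eq:wsigMi}--\eqref{eq:sign} together with Proposition~\ref{prop:wtof}; you have carried out exactly that computation, making the Maya cycle, signs, and flip sequence explicit. All of your intermediate claims (each $M_i=(-\infty,m)$, the alternating signs within blocks, the values $-2z$ and $-2$ at block boundaries including the cyclic seam, and the rescaling by $c^2=-\tfrac{1}{2k}$) check out.
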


\begin{proof}
The proof follows by a straightforward application of the construction rules
\eqref{eq:wsigMi}-\eqref{eq:sign} with Proposition~\ref{prop:wtof} to
build the solution $(f_0,\dots,f_{2n}|\alpha_0,\dots,\alpha_{2n})$ corresponding to the  Maya cycle specified by $(\bzero,C_\bp)$.
\end{proof}

\begin{example}\label{ex:seed}

Consider the $(5,3)$ seed solution of the  $A_{4}$-\p\ system corresponding to the
composition $5=1+3+1$, i.e. with signature $\bp=(1,3,1)$. The seed colour sequence is thus
$(0,\rd{0,0,0},\bl{0})$ , and the Maya cycle generated by this sequence is shown in Figure \ref{fig:seed131}.  Note that all the Maya diagrams $M_i$ in the cycle have genus zero. For signature $\bp=(1,3,1)$ we have $Q=\{ 1,4,5 \}$ and the corresponding seed solution given by \eqref{eq:seed1}-\eqref{eq:seed2} is shown in Figure \ref{fig:seed131}.

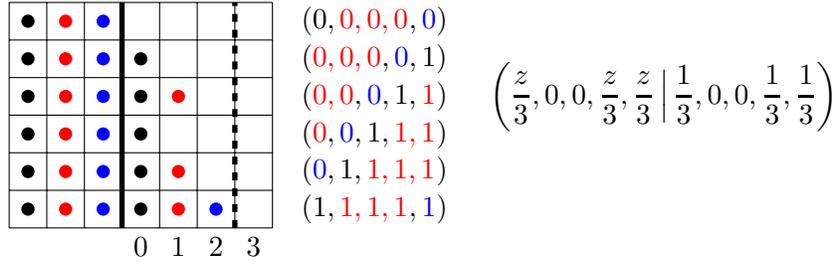
\begin{figure}[h]
  \centering

\begin{tikzpicture}[scale=0.5]
  \path (5.5,6.5)  node[anchor=west] {$(0,\rd{0,0,0},\bl{0})$};

  \path [fill,black] (0.5,5.5)  
  ++(1,0) circle (5pt);

  \path (5.5,5.5)  node[anchor=west] {$(\rd{0,0,0},\bl{0},1)$};

  \path [fill,black] (0.5,4.5)  
  ++(1,0) circle (5pt);

  \path [fill,red] (0.5,4.5)  
  ++(2,0) circle (5pt);

  \path (5.5,4.5)  node[anchor=west]  {$(\rd{0,0},\bl{0},1,\rd{1})$};

  \path [fill,black] (0.5,3.5)  
  ++(1,0) circle (5pt);
  \path (5.5,3.5)  node[anchor=west]   {$(\rd{0},\bl{0},1,\rd{1,1})$};

  \path [fill,black] (0.5,2.5)  
  ++(1,0) circle (5pt);

  \path [fill,red] (0.5,2.5)  
  ++(2,0) circle (5pt);

  \path (5.5,2.5)  node[anchor=west]   {$(\bl{0},1,\rd{1,1,1})$};

  \path [fill,black] (0.5,1.5)  
  ++(1,0) circle (5pt);

  \path [fill,red] (0.5,1.5)  
  ++(2,0) circle (5pt);
  \path [fill,blue] (0.5,1.5)  
  ++(3,0) circle (5pt);
  \path (5.5,1.5)  node[anchor=west]    {$(1,\rd{1,1,1},\bl{1})$};

  \path (10.5,4.5) node[anchor=west]
  {$\displaystyle \left(\frac{z}3,0,0,\frac{z}3,\frac{z}3 \,\Big| \,
      \frac{1}3,0,0,\frac{1}3,\frac{1}3 \right)$};

  \path [fill,blue] (0.5,1.5) ++(0,0) circle (5pt)
  ++(0,1) circle (5pt)
  ++(0,1) circle (5pt)
  ++(0,1) circle (5pt)
  ++(0,1) circle (5pt)
  ++(0,1) circle (5pt);

  \path [fill,red] (-.5,1.5) ++(0,0) circle (5pt)
  ++(0,1) circle (5pt)
  ++(0,1) circle (5pt)
  ++(0,1) circle (5pt)
  ++(0,1) circle (5pt)
  ++(0,1) circle (5pt);

  \path [fill,black] (-1.5,1.5) ++(0,0) circle (5pt)
  ++(0,1) circle (5pt)
  ++(0,1) circle (5pt)
  ++(0,1) circle (5pt)
  ++(0,1) circle (5pt)
  ++(0,1) circle (5pt);

  \draw  (-2,1) grid +(7 ,6);
  \draw[line width=2pt] (1,1) -- ++ (0,6);
  \draw[line width=2pt,dashed] (4,1) -- ++ (0,6);

  \foreach \x in {0,...,3} \draw (\x+1.5,0.5)  node {$\x$};
\end{tikzpicture}

\caption{The Maya cycle for the $(1,3,1)$ seed solution. }
\label{fig:seed131}
\end{figure}

Applying the symmetry operators $\mathbf{s}_0$ and $\mathbf{s}_4$ on the seed colour sequence $(0,\rd{0,0,0},\bl{0})$  leads to
\[ \mathbf{s}_0 (0,\rd{0,0,0},\bl{0}) = (\rd{0},0,\rd{0,0},\bl{0}),\qquad  \mathbf{s}_4 (0,\rd{0,0,0},\bl{0}) =(\bl{-1},\rd{0,0,0},1)  \]
as specified by the action \eqref{eq:sinuC}-\eqref{eq:s2nnuC}. The corresponding cycles and rational solutions of the 
 $A_{4}$-\p\ system are shown in Figure \ref{fig:s0s4seed}. It can be readily verified that the rational solutions are the same that result from the action of the B\"acklund transformations \eqref{eq:BT1}-\eqref{eq:BT4}  on the seed solution.

\end{example}

\begin{figure}[h]
  \centering
\begin{tikzpicture}[scale=0.5]
  \path (5.5,6.5)  node[anchor=west] {$(\rd{0},0,\rd{0,0},\bl{0})$};

  \path [fill,red] (0.5,5.5) ++(2,0) circle (5pt);

  \path (5.5,5.5)  node[anchor=west] {$(0,\rd{0,0},\bl{0},\rd{1})$};

  \path [fill,black] (0.5,4.5)  ++(1,0) circle (5pt);

  \path [fill,red] (0.5,4.5)   ++(2,0) circle (5pt);

  \path (5.5,4.5)  node[anchor=west]  {$(\rd{0,0},\bl{0},\rd{1},1)$};

  \path [fill,black] (0.5,3.5)  
  ++(1,0) circle (5pt);
  \path (5.5,3.5)  node[anchor=west]   {$(\rd{0},\bl{0},\rd{1},1,\rd{1})$};

  \path [fill,black] (0.5,2.5)  
  ++(1,0) circle (5pt);

  \path [fill,red] (0.5,2.5)  
  ++(2,0) circle (5pt);

  \path (5.5,2.5)  node[anchor=west]   {$(\bl{0},\rd{1},1,\rd{1,1})$};

  \path [fill,black] (0.5,1.5)  
  ++(1,0) circle (5pt);

  \path [fill,red] (0.5,1.5)  
  ++(2,0) circle (5pt);
  \path [fill,blue] (0.5,1.5)  
  ++(3,0) circle (5pt);
  \path (5.5,1.5)  node[anchor=west]    {$(\rd{1},1,\rd{1,1},\bl{1})$};

  \path [fill,blue] (0.5,1.5) ++(0,0) circle (5pt)
  ++(0,1) circle (5pt)
  ++(0,1) circle (5pt)
  ++(0,1) circle (5pt)
  ++(0,1) circle (5pt)
  ++(0,1) circle (5pt);

  \path [fill,red] (-.5,1.5) ++(0,0) circle (5pt)
  ++(0,1) circle (5pt)
  ++(0,1) circle (5pt)
  ++(0,1) circle (5pt)
  ++(0,1) circle (5pt)
  ++(0,1) circle (5pt);

  \path [fill,black] (-1.5,1.5) ++(0,0) circle (5pt)
  ++(0,1) circle (5pt)
  ++(0,1) circle (5pt)
  ++(0,1) circle (5pt)
  ++(0,1) circle (5pt)
  ++(0,1) circle (5pt);

  \draw  (-2,1) grid +(7 ,6);
  \draw[line width=2pt] (1,1) -- ++ (0,6);
  \draw[line width=2pt,dashed] (4,1) -- ++ (0,6);

  \foreach \x in {0,...,3} \draw (\x+1.5,0.5) node {$\x$};

  \path (10.5,4.5) node[anchor=west] {
    $\displaystyle \lp \frac{z}3, -\frac{1}{z} , 0 , \frac{z}3 ,
    \frac1z+\frac{z}3\Big|-\frac13,\frac13,0,\frac13,\frac23\rp$};
\end{tikzpicture}

  \begin{tikzpicture}[scale=0.5]
    \path [fill,black] (0.5,6.5) ++(1,0) circle (5pt);
    \path (5.9,6.5)  node[anchor=west] {$(\bl{-1},\rd{0,0,0},1)$};
    
    \path [fill,black] (0.5,5.5) ++(1,0) circle (5pt);
    
    \path (6.5,5.5)  node[anchor=west] {$(\rd{0,0,0},1,\bl{0})$};

  \path [fill,black] (0.5,4.5)  ++(1,0) circle (5pt);

  \path [fill,red] (0.5,4.5)   ++(2,0) circle (5pt);

  \path (6.5,4.5)  node[anchor=west]  {$(\rd{0,0},1,\bl{0},\rd{1})$};

  \path [fill,black] (0.5,3.5)  
  ++(1,0) circle (5pt);
  \path (6.5,3.5)  node[anchor=west]   {$(\rd{0},1,\bl{0},\rd{1,1})$};

  \path [fill,black] (0.5,2.5)    ++(1,0) circle (5pt);

  \path [fill,red] (0.5,2.5)    ++(2,0) circle (5pt);

  \path (6.5,2.5)  node[anchor=west]   {$(1,\bl{0},\rd{1,1,1})$};

  \path [fill,black] (0.5,1.5)   ++(1,0) circle (5pt) ++ (3,0) circle (5pt);

  \path [fill,red] (0.5,1.5)    ++(2,0) circle (5pt);
  \path [fill,blue] (0.5,1.5)  circle (5pt);
  \path (6.5,1.5)  node[anchor=west]    {$(\bl{0},\rd{1,1,1},2)$};

  ++(0,1) circle (5pt)
  ++(0,1) circle (5pt)
  ++(0,1) circle (5pt)
  ++(0,1) circle (5pt);

  \path [fill,red] (-.5,1.5) ++(0,0) circle (5pt)
  ++(0,1) circle (5pt)
  ++(0,1) circle (5pt)
  ++(0,1) circle (5pt)
  ++(0,1) circle (5pt)
  ++(0,1) circle (5pt);

  \path [fill,black] (-1.5,1.5) ++(0,0) circle (5pt)
  ++(0,1) circle (5pt)
  ++(0,1) circle (5pt)
  ++(0,1) circle (5pt)
  ++(0,1) circle (5pt)
  ++(0,1) circle (5pt);

  \path [fill,blue] (-2.5,1.5) ++(0,0) circle (5pt)
  ++(0,1) circle (5pt)
  ++(0,1) circle (5pt)
  ++(0,1) circle (5pt)
  ++(0,1) circle (5pt)
  ++(0,1) circle (5pt);

  \draw  (-3,1) grid +(9 ,6);
  \draw[line width=2pt] (0,1) -- ++ (0,6);
  \draw[line width=2pt, dashed] (5,1) -- ++ (0,6);

  \foreach \x in {-1,...,3} \draw (\x+1.5,0.5) node {$\x$};

  \path (11.5,4.5) node[anchor=west] {
    $\displaystyle \lp \frac{z}3-\frac1z,0,0,  \frac{z}3+\frac1{z} ,
    \frac{z}3\Big|\frac23,0,0,\frac23,-\frac13\rp$};
\end{tikzpicture}

\caption{Applying $\mathbf{s}_0$ and $\mathbf{s}_4$ to the seed solution for signature $\bp=(1,3,1)$.}
\label{fig:s0s4seed}
\end{figure}
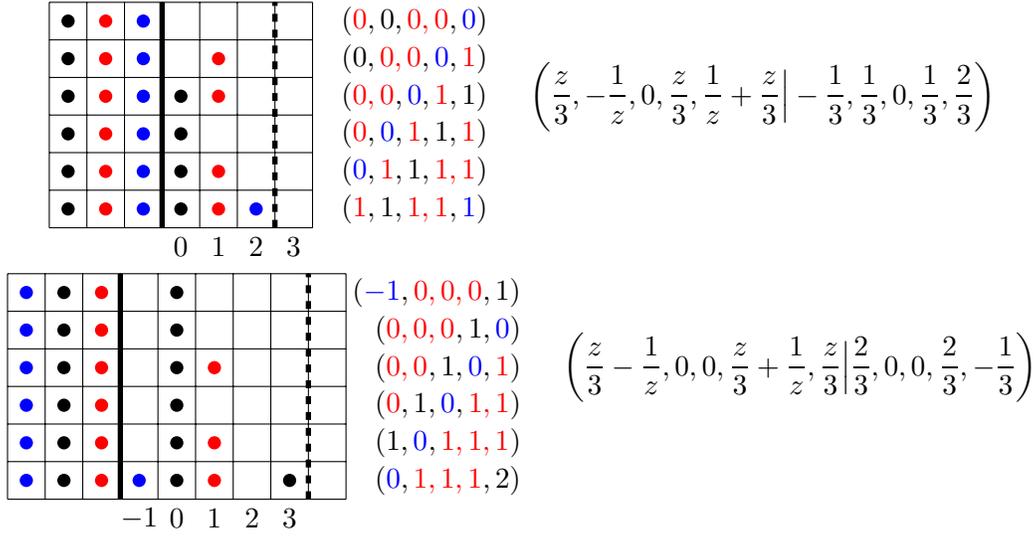

\subsection{Orbits of the extended affine Weyl group $\EAWeyln$ }

In this section we address the transitivity problem, namely to establish that all rational solutions of the $A_{2n}$-\p\ system can be obtained by applying the symmetry group to one of the seed solutions \eqref{eq:seed1}-\eqref{eq:seed2}.

We would like to represent all the rational solutions described by Theorem~\ref{thm:main} as the different orbits of the seed solutions under the action of the symmetry group.

For this purpose, it should be first noted that the action of the
symmetry group \eqref{eq:piM} - \eqref{eq:s2nM} preserves the signature
composition $p=p_0+\cdots+p_{k-1}$. Next, we try to build operators
whose action on block coordinates has a particularly simple action.

\begin{prop}\label{prop:T}
Consider the operators
    \begin{equation}
    \label{eq:Tispi}
    E_i=\bs_i \bs_{i+1} \cdots \bs_{i+2n-1} \pi, \qquad i=0,\dots,2n\mod(2n+1).
  \end{equation}
  The action of these operators on a coloured sequence is given by
  \begin{equation}
    \label{eq:Ti}
    E_i (\bnu,C) = (\bnu+\be_i,C),\quad \bnu\in \Z^{2n+1},\; C\in (\ZkZ)^{2n+1},
  \end{equation}
  where $\be_i\in \Z^{2n+1}$ is the $i\supth$ unit
  vector.

\end{prop}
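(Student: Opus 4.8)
The plan is to write each generator as a coordinate permutation followed by an integer shift of $\bnu$, and to show that in the word $E_i$ the permutation part collapses to the identity, so that $E_i$ is a pure translation whose vector I can then read off.

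First I would record, from \eqref{eq:piaction}, \eqref{eq:sinuC} and \eqref{eq:s2nnuC}, that each of $\pi,\bs_0,\ldots,\bs_{2n}$ acts on the pair $(\bnu,C)$ by a common coordinate permutation $\hat P$ together with a shift $t\in\Z^{2n+1}$ applied only to $\bnu$: namely $\hat P_\pi=L$ with $t_\pi=\be_{2n}$; $\hat P_{\bs_j}=K_{j,j+1}$ with $t_{\bs_j}=0$ for $0\le j\le 2n-1$; and $\hat P_{\bs_{2n}}=K_{2n,0}$ with $t_{\bs_{2n}}=-\be_0+\be_{2n}$. For a composite $g=g_m\circ\cdots\circ g_1$, with $g_1$ applied first, one then has
\[ g(\bnu,C)=\big(\hat P(\bnu)+\boldsymbol{v},\,\hat P(C)\big),\qquad \hat P=\hat P_{g_m}\cdots\hat P_{g_1},\qquad \boldsymbol{v}=\sum_{r=1}^{m}\big(\hat P_{g_m}\cdots\hat P_{g_{r+1}}\big)(t_{g_r}). \]

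Next I would show that the permutation part of $E_i$ is trivial. The indices occurring in $E_i=\bs_i\bs_{i+1}\cdots\bs_{i+2n-1}\pi$ are the $2n$ consecutive residues $i,\ldots,i+2n-1\bmod(2n+1)$, that is every residue except $i-1$; in particular $\bs_{2n}$ occurs exactly once when $i\ge 1$ and not at all when $i=0$. Each $\bs_j$ permutes coordinates by the cyclically adjacent transposition $K_{j,\,j+1\bmod(2n+1)}$, and the product of these $2n$ transpositions is the $(2n+1)$-cycle sending every index to its successor, i.e.\ $L^{-1}$, independently of $i$. Composing with the factor $L$ coming from $\pi$ gives $\hat P=\mathrm{id}$, so $E_i(\bnu,C)=(\bnu+\boldsymbol{v}_i,C)$ for a fixed vector $\boldsymbol{v}_i$, and in particular the colouring is preserved.

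It then remains to compute $\boldsymbol{v}_i$ by transporting the (at most two) nonzero shifts through the permutations applied after them. Since $\pi$ is applied first, its shift $t_\pi=\be_{2n}$ is carried by the full transposition product $L^{-1}$, giving $L^{-1}(\be_{2n})=\be_0$. For $i\ge 1$ the unique occurrence of $\bs_{2n}$ is followed precisely by $\bs_{2n-1},\ldots,\bs_i$, whose product is the cycle $K_{i,i+1}\cdots K_{2n-1,2n}$; this fixes $\be_0$ (because $0<i$) and sends $\be_{2n}$ to $\be_i$, so $t_{\bs_{2n}}=-\be_0+\be_{2n}$ contributes $-\be_0+\be_i$. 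Adding the two contributions gives $\boldsymbol{v}_i=\be_0+(-\be_0+\be_i)=\be_i$; when $i=0$ there is no $\bs_{2n}$ term and $\boldsymbol{v}_0=\be_0$, which is the same formula. This is exactly \eqref{eq:Ti}.

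The main obstacle is this last bookkeeping step. Because the defining relations of $\EAWeyln$ hold only modulo translations (Proposition~\ref{prop:M+1coords}), one cannot obtain $E_i$ from the $i=0$ case by conjugation by $\pi$; the exact translation vector must instead be pinned down by tracking how the shift of the wrap-around generator $\bs_{2n}$ is pushed through the trailing transpositions. The cancellation of the permutation part, established in the previous step, is precisely what reduces this to the finite additive computation above.
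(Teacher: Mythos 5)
Your proof is correct, and it follows the same strategy as the paper's own proof: write each generator as a coordinate permutation acting simultaneously on $\bnu$ and $C$ together with a translation of $\bnu$, show that the permutation parts of the $2n$ transpositions and of $L$ cancel, and then transport the two nonzero translations ($t_\pi=\be_{2n}$ and, for $i\ge 1$, $t_{\bs_{2n}}=-\be_0+\be_{2n}$) through the permutations applied after them. In fact, your bookkeeping is more careful than the paper's. The paper expands $E_i$ for $i\ge 1$ as $\bs_i\cdots\bs_{2n}\bs_0\cdots\bs_{i-1}\pi$, i.e.\ with all $2n+1$ generators, and rests on the displayed identity $K_{i,i+1}\cdots K_{2n,0}K_{0,1}\cdots K_{i-1,i}L=1$, which contains one transposition too many: as a permutation identity it is false (for $2n+1=3$, $i=1$ the left-hand side equals the transposition $K_{1,2}$), and that word is inconsistent with the definition \eqref{eq:Tispi}, in which the generator $\bs_{i-1}$ is omitted. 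Your version --- the product of the $2n$ cyclically adjacent transpositions $K_{i,i+1}\cdots K_{i+2n-1,i+2n}$ equals $L^{-1}$, so that composing with the $L$ coming from $\pi$ gives the identity --- is the correct statement, and your computation of the translation vector (the $\pi$-shift is carried to $L^{-1}(\be_{2n})=\be_0$, the $\bs_{2n}$-shift to $-\be_0+\be_i$, with the degenerate cases $i=0$ and $i=2n$ handled consistently) is exactly what is needed to pin down \eqref{eq:Ti}. Your closing remark is also apt: since the relations of $\EAWeyln$ on coloured sequences hold only modulo translations, one cannot deduce the exact translation vector for general $i$ by conjugating the $i=0$ case with $\bpi$, so the explicit transport of the shifts is unavoidable.
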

\begin{proof}
  The proof follows immediately from the action of the operators
  $\bpi,\bs_0,\ldots, \bs_{2n}$ on the coloured sequence $(\bnu,C)$ as
  shown in \eqref{eq:piaction}, \eqref{eq:sinuC},
  \eqref{eq:s2nnuC} .  For $i=0,\ldots, 2n$, we have
  \[ K_{i,i+1} \cdots K_{2n,0} K_{0,1} \cdots K_{i-1,i}L = 1, \]
  where $K_{i,j}$ is the transposition of components $i,j$ and where
  $L$ is the circular permutation \eqref{eq:Ldef}.
  Hence, for $i=1,\ldots, 2n$ we have
  \begin{align*}
    E_i(\bnu,C)
    &= \bs_i \cdots \bs_{2n} \cdots  \bs_{i-1}(L(\bnu)+\be_{2n},L(C))\\
    &= \bs_i \cdots \bs_{2n} (K_{0,1}\cdots
      K_{i-1,i}L(\bnu)+\be_{2n},K_{0,1}\cdots K_{i-1,i}L(C))\\ 
    &= \bs_i \cdots \bs_{2n-1} (K_{2n,0}K_{0,1}\cdots
      K_{i-1,i}L(\bnu)+\be_{2n},K_{2n,0}K_{0,1}\cdots K_{i-1,i}L(C))\\ 
    &= (\bnu+\be_i,C)
  \end{align*}
  For $i=0$ we have
  \begin{align*}
    E_0(\bnu,C)
    &= \bs_0 \cdots \bs_{2n-1} (L(\bnu)+\be_{2n},L(C))\\
    &=  ((K_{0,1}\cdots     K_{2n-1,2n}L)(\bnu)+(K_{0,1}\cdots
      K_{2n-1,2n})\be_{2n},K_{0,1}\cdots  K_{2n-1,2n}L(C))\\ 
    &= (\bnu+\be_0,C)
\end{align*}
\end{proof}

\begin{example}
  Consider the action of $E_1=\bs_1\bs_2\pi$ on the coloured sequence
  $(2,\rd{3},\bl{0})$. Using the transformation rules
  \eqref{eq:piaction}, \eqref{eq:sinuC},  \eqref{eq:s2nnuC}  we
  have
  \[ (2,\rd{3},\bl{0}) \xrightarrow{\bpi} (\rd{3},\bl{0},3)
    \xrightarrow{\bs_2} (2,\bl{0},\rd{4}) \xrightarrow{\bs_1}
    (2,\rd{4},\bl{0}) = (2,\rd{3},\bl{0}) + \be_1.
  \]
\end{example}

Next, we describe the symmetry operators that act on coloured
sequences by direct permutations.
\begin{prop}\label{prop:sigma}
  The operators $\bs_0,\ldots, \bs_{2n-1}$ generate the action of the
  permutation group $\cS_{2n+1}$ on the set of coloured sequences $(\bnu,C)$ of length $2n+1$.
\end{prop}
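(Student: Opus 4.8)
The plan is to observe that the restricted generators $\bs_0,\ldots,\bs_{2n-1}$ act on a coloured sequence in the simplest possible way --- as the adjacent transpositions of its $2n+1$ positions --- and then to invoke the standard fact that such transpositions generate $\cS_{2n+1}$. First I would read off from \eqref{eq:sinuC} that for each $i=0,\ldots,2n-1$ the operator $\bs_i$ sends $(\bnu,C)$ to $(K_{i,i+1}(\bnu),K_{i,i+1}(C))$, where $K_{i,i+1}$ transposes the components in positions $i$ and $i+1$. Crucially, unlike $\bs_{2n}$ (which by \eqref{eq:s2nnuC} also shifts by $\pm\be$) or $\pi$ (which by \eqref{eq:piaction} applies the circular permutation $L$ and increments a component), these generators introduce neither a lattice translation nor a cyclic relabelling: they act identically and simultaneously on the integer part $\bnu$ and the colour part $C$. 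Thus $\bs_i$ realises exactly the diagonal action of the adjacent transposition $\tau_i=(i,i+1)\in\cS_{2n+1}$ on $\Z^{2n+1}\times(\ZkZ)^{2n+1}$.

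Next I would record that the $2n$ adjacent transpositions $\tau_0,\ldots,\tau_{2n-1}$ of the symbols $\{0,1,\ldots,2n\}$ are precisely the standard Coxeter generators of $\cS_{2n+1}$, so the subgroup of the permutation action that they generate is all of $\cS_{2n+1}$. Because these particular generators never produce $\be$-shifts or circular relabelling, the Coxeter relations $\bs_i^2=1$, $(\bs_i\bs_{i+1})^3=1$, and $\bs_i\bs_j=\bs_j\bs_i$ for $|i-j|>1$ hold here as \emph{strict} identities on coloured sequences, not merely modulo translation as in Proposition~\ref{prop:M+1coords}. Finally, the diagonal permutation action is faithful: two distinct permutations already differ on the sequence $\bnu=(0,1,\ldots,2n)$ with any fixed colouring $C$. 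Hence the group generated by $\bs_0,\ldots,\bs_{2n-1}$ is genuinely isomorphic to $\cS_{2n+1}$ and acts by arbitrary permutations of the $2n+1$ positions, as claimed.

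There is essentially no obstacle in this argument, since it is a direct consequence of the explicit formula \eqref{eq:sinuC}. The only point I would take care to emphasise is the distinction between the strict relations satisfied by $\bs_0,\ldots,\bs_{2n-1}$ and the ``modulo translation'' relations of the full extended affine Weyl group in Proposition~\ref{prop:M+1coords}: it is precisely the omission of $\bs_{2n}$ and $\pi$ from this subset that removes the translational (lattice) part and leaves the honest finite symmetric group $\cS_{2n+1}$ acting by position permutations.
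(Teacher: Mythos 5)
Your proposal is correct and follows the same route as the paper, whose entire proof is the one-line remark that the claim ``follows directly from the definition \eqref{eq:sinuC}''; you have simply made explicit the two ingredients implicit there, namely that each $\bs_i$ with $i\le 2n-1$ acts diagonally as the adjacent transposition $K_{i,i+1}$ (with no $\be$-shift and no circular relabelling), and that the adjacent transpositions of $2n+1$ symbols generate $\cS_{2n+1}$. Your added observations --- that the Coxeter relations hold strictly rather than merely modulo translation, and that the diagonal action is faithful --- are accurate and harmless refinements beyond what the paper records.
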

\begin{proof}
  This follows directly from the definition \eqref{eq:sinuC}.
\end{proof}

%
%\newcomment{David - I pulled away from the theorem statement the circular permutations issue and left it as a remark. Also, the circular permutation associated with $T$ is $L^{-1}$ instead of $L$.}

\begin{thm}\label{thm:symmetry}
Every rational solution of a $(2n+1)$-cyclic dressing chain  can be obtained by the action of  the symmetry group $\EAWeyln$  on a seed solution.

\end{thm}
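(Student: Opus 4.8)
The plan is to exploit the bijection of Theorem~\ref{thm:main} between rational solutions and oddly $k$-coloured sequences, and to prove the equivalent statement that the $\EAWeyln$-action on such sequences is transitive on the set of sequences of each fixed signature, with the seed sequence $(\bzero,C_\bp)$ serving as a distinguished base point. Since the actions \eqref{eq:piaction}, \eqref{eq:sinuC}, \eqref{eq:s2nnuC} visibly preserve the signature $\bp=(p_0,\ldots,p_{k-1})$, it suffices to fix an arbitrary rational solution, represent it by a coloured sequence $(\bnu,C)\in\Z^{2n+1}_k$ of signature $\bp$, and construct an element of $\EAWeyln$ carrying $(\bnu,C)$ to the seed sequence $(\bzero,C_\bp)$ of the same signature. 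Because the group action is invertible, the inverse element then produces $(\bnu,C)$ from the seed, which is exactly what the theorem asserts.

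First I would use the translation operators $E_i$ of Proposition~\ref{prop:T} to reduce the integer part $\bnu$ to zero. Since \eqref{eq:Ti} shows that $E_i$ shifts $\nu_i$ by one while leaving every other component and the entire colouring $C$ fixed, the composite $\prod_{i=0}^{2n} E_i^{-\nu_i}\in\EAWeyln$ sends $(\bnu,C)$ to $(\bzero,C)$; the factors commute in their action on coloured sequences, since each is merely a translation in a distinct coordinate direction, so no ordering ambiguity arises. This step realizes the translation lattice inside the extended affine Weyl group and transports every solution of a given signature to one whose block coordinate vector is trivial.

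Next I would sort the colours. By Proposition~\ref{prop:sigma} the generators $\bs_0,\ldots,\bs_{2n-1}$ realize the full symmetric group $\cS_{2n+1}$ acting on $(\bnu,C)$ by simultaneously transposing matched components of $\bnu$ and of $C$. Applied to $(\bzero,C)$, any such permutation fixes the zero vector and therefore acts purely on the colouring; since $\cS_{2n+1}$ is transitive on colourings with a prescribed signature, some permutation carries $C$ to the monotone seed colouring $C_\bp=(0^{p_0},1^{p_1},\ldots,(k-1)^{p_{k-1}})$, yielding $(\bzero,C_\bp)$ while preserving $\bnu=\bzero$. Composing the two steps gives the desired element of $\EAWeyln$, and inverting completes the argument.

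I expect the only genuine subtlety, rather than a real obstacle, to be the sequencing of the two reductions: one must reduce $\bnu$ to zero \emph{before} permuting, because the permutation generators act on the $\nu$-part as well, and it is precisely the invariance $K_{i,i+1}(\bzero)=\bzero$ that lets the second step rearrange colours without reintroducing a nontrivial translation. Structurally, the two steps amount to the decomposition $\EAWeyln\cong \cS_{2n+1}\ltimes\Z^{2n+1}$, with the lattice acting by the $E_i$ and the finite part by the $\bs_j$; the transitivity claim then reduces to the elementary fact that this semidirect product acts transitively on integer sequences carrying a fixed multiset of colours, the seed being the orbit representative with vanishing coordinates and sorted colours.
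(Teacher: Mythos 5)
Your proposal is correct and follows essentially the same route as the paper's own proof: both invoke the bijection of Theorem~\ref{thm:main}, use the translation operators $E_i$ of Proposition~\ref{prop:T} to handle the integer part $\bnu$, and use Proposition~\ref{prop:sigma} to realize the permutation matching the colourings of a fixed signature. The only (immaterial) difference is direction: the paper builds up from the seed $(\bzero,C_\bp)$ to the given sequence, while you reduce the given sequence to the seed and then invert the group element.
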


\begin{proof}
Assume $(\bw|\ba)$ is rational solution of a $(2n+1)$-cyclic dressing chain with shift $\Delta=2k$. By Theorem~\ref{thm:main}, this solution can be indexed by  a coloured sequence $(\bnu,C)\in\Z^{2n+1}_k$ in standard form. Let $\bp=(p_0,\dots,p_{k-1})$ be its signature and consider the seed solution $(\mathbf{0},C_{\bp})$ corresponding to that signature. Since  $(\bnu,C)$ is in standard form, all the components are $\nu_i\geq 0$ for $ i=0,\dots,2n$. There clearly exists a sequence of operators $E_i$ defined in \eqref{eq:Tispi} that map the seed sequence $(\mathbf{0},C_{\bp})$ into a seed sequence $(\bnu',C')$ that differs from $(\bnu,C)$ at most  by a permutation of its elements, i.e. such that $[(\bnu',C')]=[(\bnu,C)]$. By Proposition\ref{prop:sigma}, there is a sequence of symmetry operators that map $(\bnu',C')$ into $(\bnu,C)$.
\end{proof}

Note, however, that given a rational solution $(\bw|\ba)$, the seed solution and sequence of symmetry transformations is not unique. If $(\bnu,C)$ is the coloured sequence in standard form that corresponds to $(\bw|\ba)$, so does the coloured sequence $T(\bnu,C)$ by Propositions~\ref{prop:M+1} and \ref{prop:MM+1}.
But $T(\bnu,C)$ has signature $L^{-1}(\bp)=(p_{k-1},p_0,\dots,p_{k-2})$. We see thus that seed solutions \eqref{eq:seed1}-\eqref{eq:seed2} corresponding to circular permutations of a given signature belong to the same orbit under the symmetry group $\EAWeyln$.

\begin{remark}
Similar formulas for the action of the generators of the extended affine Weyl group $\EAWeyln$ on Maya diagrams have been given by Noumi in his book (see \cite{noumi2004painleve} Ch. 7.6). However, the main difference between Proposition 7.12 in  \cite{noumi2004painleve}  and Proposition~\ref{prop:WonM} in this work is that in the former the symmetry group acts on a single Maya diagram, while the latter treats the group action on a Maya cycle. Since a rational solution corresponds to a Maya cycle rather than to single Maya diagram (see Theorem~\ref{thm:characterization}), we believe that Propositions~\ref{prop:WonM} and \ref{prop:BTaction} provide the correct corespondence with the action of $\EAWeyln$ on rational solutions \eqref{eq:BT1}-\eqref{eq:BT4} via B\"acklund transformations.
\end{remark}

\section{Degenerate solutions}\label{sec:degen}

Degenerate solutions describe certain embeddings of lower order systems $A_{2m}$-\p\ into higher order systems $A_{2n}$-\p\ for $n>m$. Also, they single out special properties with respect to the isotropy of the group action.

\begin{definition}
Let $(\bw|\ba)=(w_0,\dots,w_{2n}|a_0,\dots,a_{2n})$ be a rational solution to a $(2n+1)$-dressing chain with shift $\Delta=2k$ and let $(\bnu,C)\in\Z^{2n+1}_k$ be its  standard coloured sequence. The flip sequence $\bmu=(\mu_0,\dots,\mu_{2n})\in \Z^{2n+1}$ is given by
\[  \mu_i=k\nu_i+C_i,\qquad i=0,\dots,2n.\ \]
The rational solution $(\bw|\ba)$ is said to be a \textit{degenerate solution} if at least one of the following conditions hold
\begin{enumerate}
\item The flip sequence $\bmu$ contains repeated elements.
\item $\mu_{2n}=\mu_{0}+k$.
\end{enumerate}
\end{definition}

From a degenerate solution one can immediately build a solution to a
lower order system in the following manner. Consider for
simplicity that condition (1) above holds and $\bmu$ has repeated
elements. This means that $[(\bnu,C)]$ is a multiset but not a set.
Let $(\nu_i,C_i)$ and $(\nu_j,C_j)$ be two elements of the coloured
sequence $(\bnu,C)$ whose value and colour coincide. Then the coloured
sequence $(\tilde\bnu,\tilde C)$ obtained by dropping these two
elements is an oddly $k$-coloured sequence of length $2n-1$ and thus
defines a solution to the $A_{2n-2}$-\p\ system.

If the two equal elements  in the coloured sequence are consecutive, e.g. $(\nu_i,C_i)=(\nu_{i+1},C_{i+1})$, then two consecutive flips $\mu_i=\mu_{i+1}$ happen at the same site and the Maya cycle $\bM=(M_0,\dots,M_{2n+1})$ contains two identical Maya diagrams $M_i=M_{i+2}$. Correspondingly, the Maya cycle $\tilde{\bM}$ obtained by dropping $M_{i+1}$ and $M_{i+2}$ is $(2n-1,k)$-cyclic:
\[ \bM=(M_0,\dots,M_i,M_{i+1},M_{i+2},\dots ,M_{2n+1})\to \tilde{\bM}=(M_0,\dots,M_i,M_{i+3},\dots,M_{2n+1}). \]
 If $(f_0,\dots,f_{2n}|\alpha_0,\dots,\alpha_{2n})$ is the corresponding degenerate solution to the  $A_{2n}$-\p\ system, then we can build a solution $(\tilde f_0,\dots,\tilde f_{2n-2}|\tilde \alpha_0,\dots,\tilde \alpha_{2n-2})$    to the $A_{2n-2}$-\p\ system by setting
 \begin{equation}\label{eq:embed}
 \begin{aligned}
\tilde f_j=\begin{cases}
f_j &\text{ if } j\leq i-2,\\
f_j+f_{j+2} &\text{ if } j= i-1,\\
f_{j+2} &\text{ if } j\geq i,
 \end{cases}\qquad \tilde \alpha_j=\begin{cases}
 \alpha_j &\text{ if } j\leq i-2,\\
 \alpha_j+ \alpha_{j+2} &\text{ if } j= i-1,\\
 \alpha_{j+2} &\text{ if } j\geq i,
 \end{cases}
 \end{aligned}
 \end{equation}
 where the indices for $f_i,\alpha_i$ are taken $\mod{2n+1}$ and those for $\tilde f_i,\tilde \alpha_i$ are taken $\mod{2n-1}$. It is clear also that $f_i(z)=0$ and $\alpha_i=0$ in the degenerate solution.
 
 We see thus that in the case of the consecutive flips at the same site leads to a rather trivial embedding of the lower order system into the higher order one. However, there could be two repeated elements in the flip sequence which are not consecutive, i.e $(\nu_i,C_i)=(\nu_{j},C_{j})$ for $j>i+1$. It is still true that the coloured sequence $(\tilde\bnu,\tilde C)$ obtained by dropping these two repeated elements will define a solution to the $A_{2n-2}$-\p\ system, but it is no longer true that one can simply eliminate two Maya diagrams from the $(2n+1,k)$ Maya cycle $\bM=(M_0,\dots,M_{2n+1})$ to obtain a $(2n-1,k)$ Maya cycle $\tilde{\bM}$. This case leads to nontrivial embeddings, in which the reduced solution $(\tilde f_0,\dots,\tilde f_{2n-2}|\tilde \alpha_0,\dots,\tilde \alpha_{2n-2})$   cannot be obtained by a linear combination of the degenerate solution of the higher order system $(f_0,\dots,f_{2n}|\alpha_0,\dots,\alpha_{2n})$.

Finally, the following Proposition states that degenerate solutions have a nontrivial isotropy group.
\begin{prop}\label{prop:isometry}
  If $(f_0,\dots,f_{2n}|\alpha_0,\dots,\alpha_{2n})$ is a degenerate rational solution of the $A_{2n}$ system, then there is an element of the symmetry group $\EAWeyln$ that leaves it invariant.
\end{prop}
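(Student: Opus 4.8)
The plan is to transport the whole question to the level of coloured sequences, where the group action is completely explicit. By Theorem~\ref{thm:main} the degenerate solution $(f_0,\dots,f_{2n}|\alpha_0,\dots,\alpha_{2n})$ is indexed by a unique standard oddly $k$-coloured sequence $(\bnu,C)\in\Z^{2n+1}_k$, with flip sequence $\mu_i=k\nu_i+C_i$. By Propositions~\ref{prop:nuCM}, \ref{prop:WonM} and \ref{prop:BTaction}, the transformation $(\bnu,C)\mapsto\bM\mapsto(\bw|\ba)$ intertwines the action of $\EAWeyln$ on coloured sequences with its action by B\"acklund transformations on solutions; combined with the equivalence of Proposition~\ref{prop:wtof}, it therefore suffices to produce a single group element $g\in\EAWeyln$ that fixes the sequence itself, $g(\bnu,C)=(\bnu,C)$. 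Such a $g$ fixes the associated Maya cycle (by the bijection of Proposition~\ref{prop:nuCM}), hence the dressing-chain tuple $(\bw|\ba)$, and hence the $A_{2n}$ tuple. I would then treat the two degeneracy conditions separately.

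For condition (1), suppose $\mu_i=\mu_j$ for some $i\neq j$ in $\{0,\dots,2n\}$. Since $0\le C_i,C_j\le k-1$, the equality $k\nu_i+C_i=k\nu_j+C_j$ forces $C_i=C_j$ and $\nu_i=\nu_j$, so the two coloured entries in positions $i$ and $j$ coincide. By Proposition~\ref{prop:sigma} the generators $\bs_0,\dots,\bs_{2n-1}$ realise the full symmetric group $\cS_{2n+1}$ acting by simultaneous transpositions of positions in both $\bnu$ and $C$. Choosing $g$ to be a word representing the transposition $(i\,j)$, I obtain $g(\bnu,C)=(\bnu,C)$, because interchanging two identical coloured entries leaves the sequence unchanged.

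For condition (2), suppose $\mu_{2n}=\mu_0+k$. Writing $\mu_{2n}=k\nu_{2n}+C_{2n}$ and $\mu_0+k=k(\nu_0+1)+C_0$ and using again $0\le C_i\le k-1$, this forces $C_{2n}=C_0$ and $\nu_{2n}=\nu_0+1$. Evaluating the explicit action \eqref{eq:s2nnuC} of $\bs_{2n}$,
\[ \bs_{2n}(\bnu,C)=(\nu_{2n}-1,\nu_1,\dots,\nu_{2n-1},\nu_0+1,\,K_{2n,0}(C)), \]
the new entries in positions $0$ and $2n$ are $\nu_{2n}-1=\nu_0$ and $\nu_0+1=\nu_{2n}$, while the transposition $K_{2n,0}$ leaves $C$ unchanged because $C_0=C_{2n}$. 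Hence $\bs_{2n}(\bnu,C)=(\bnu,C)$, and one may take $g=\bs_{2n}$.

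The point I expect to require the most care is the reduction in the first paragraph: I must verify that fixing the coloured sequence \emph{on the nose}---rather than merely modulo the translation $T$ of Proposition~\ref{prop:M+1}---is what descends to a strict invariance of the solution, so that $g$ genuinely lies in the isotropy group and is not just a symmetry up to the translational ambiguity of Proposition~\ref{prop:MM+1}. The two case analyses are then routine verifications; the only other subtlety to track is that condition (2) is a wrap-around repeat in the extended flip sequence (namely $\mu_{2n}=\mu_{2n+1}$), and so is captured precisely by the exceptional generator $\bs_{2n}$ rather than by an ordinary transposition coming from $\cS_{2n+1}$.
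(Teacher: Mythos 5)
Your proposal is correct and follows essentially the same route as the paper's own proof: reduce via Theorem~\ref{thm:main} to the explicit action \eqref{eq:sinuC}--\eqref{eq:s2nnuC} on coloured sequences, note that a repeated flip value forces two identical coloured entries, and realize the transposition of those identical entries (Proposition~\ref{prop:sigma}) to fix the sequence and hence the solution. Your write-up is in fact more complete than the paper's two-line argument, which only discusses repeated entries (condition (1)): your explicit verification that $\bs_{2n}$ fixes the sequence when $\mu_{2n}=\mu_0+k$ (condition (2), where $C_{2n}=C_0$ and $\nu_{2n}=\nu_0+1$, so $\bs_{2n}$ swaps two entries into themselves) covers the wrap-around case the paper leaves implicit, and your remark that exact (not merely translation-modulo) fixing of the sequence suffices is the right way to see that the compatibility Propositions~\ref{prop:WonM} and \ref{prop:BTaction} descend the invariance to the solution itself.
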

\begin{proof}
This follows directly from Theorem~\ref{thm:main} and the action \eqref{eq:sinuC}-\eqref{eq:s2nnuC} of the symmetry group $\EAWeyln$ on coloured sequences. In the case where $(\nu_i,C_i)=(\nu_{i+1},C_{i+1})$, the rational solution is a fixed point of the generator $\mathbf{s}_i$. Otherwise, there is a sequence of symmetry transformations that performs the transposition between the two identical elements of the coloured sequence.
\end{proof}

\begin{example}
  We illustrate here the occurence of degenerate solutions and the
  difference between trivial and non-trivial embeddings. Consider the
  degenerate solutions of the $A_4$-\p\ system corresponding to the
  following coloured sequences:
\begin{eqnarray}
  (\bnu^{(1)},C^{(1)})&=& (0,\rd{1},1,1,\bl{0} )\qquad \bmu^{(1)}=(0,4,3,3,2),\\
  (\bnu^{(2)},C^{(2)})&=& (0,1,\rd{1},1,\bl{0} ),\qquad \bmu^{(2)}=(0,3,4,3,2).
\end{eqnarray}
 Both solutions are degenerate because the flip sequence contains repeated elements, but in the first case they are consecutive while in the second the are not. The corresponding rational solutions are
{\small
\begin{eqnarray*}
  (\bdf^{(1)} |\balpha^{(1)}) &=& \left( \frac{z}3-\frac{1}z +\frac{2z}{z^2+3}, \frac{z}3 + \frac{2 z}{ z^2-3},0,\frac{1}z-\frac{2z}{z^2-3},\frac{z}3-\frac{2z}{z^2+3}\, \Bigg| \,\frac{4}3,-\frac{1}3,0,- \frac{1}3,\frac{1}3   \right) ,\\
  (\bdf^{(2)} |\balpha^{(2)})&=& \left( \frac{z}3, \frac{z}3 + \frac{2 z}{ z^2-3},-\frac{1}z+\frac{2z}{z^2+3} ,-\frac{1}z-\frac{2z}{z^2-3},\frac{z}3-\frac{2z}{z^2+3}\, \Bigg| \,1,\frac{1}3,- \frac{1}3,- \frac{1}3,\frac{1}3   \right)
\end{eqnarray*}
} If we drop the two repeated elements of the coloured sequence, we
obtain in both cases the following solution of the $A_2$-\p\ system.

\[
\begin{aligned}
&(\tilde{\bnu},\tilde C)= (0,\rd{1},\bl{0} ),\qquad \tilde{\bmu}=(0,4,2),\\
&(\tilde{\bdf} |\tilde{\balpha}) = \left(\frac{z}3-\frac{1}z+\frac{2z}{z^2+3},\frac{z}3+\frac{1}z,\frac{z}3-\frac{2z}{z^2+3}  \, \Bigg|\, \frac{4}3,-\frac{2}3,\frac{1}3 \right)
\end{aligned}
\]

This solution to the $A_2$-\p\ system is clearly obtained from
$(\bdf^{(1)} |\balpha^{(1)})$ by applying \eqref{eq:embed}. It
is also clear from \eqref{eq:BT1}-\eqref{eq:BT2} that
\[\mathbf{s}_2 (\bdf^{(1)} |\balpha^{(1)})=(\bdf^{(1)}
  |\balpha^{(1)}),\]
  as explained in Proposition~\ref{prop:isometry}.

\end{example}

%%%%%%%%%%%%%%%%%%%%%%%%%%%%%%%%%%
\section{Summary and Outlook}\label{sec:summary}
%%%%%%%%%%%%%%%%%%%%%%%%%%%%%%%%%%

This paper provides a complete classification of the rational
solutions of Painlev\'e $\Pfour$ and its higher order hierarchy known
as the $A_{2n}$-\p\ or Noumi-Yamada system. First, we recall the
equivalence between the Noumi-Yamada system \eqref{eq:A2nsystem} and a
cyclic dressing chain of Schr\"odinger operators. Then, we show by a
careful investigation of the local expansions of the rational
solutions around their poles, that the solutions have trivial
monodromy, and therefore they must be expressible in terms of
Wronskian determinants whose entries are Hermite polynomials. Next, we
use Maya diagrams to classify all the $(2n+1)$-cyclic dressing chains
and therefore achieve a complete classification. Finally, we connect
our results with the geometric approach mastered by the japanese
school, showing a representation for the action of the symmetry group
of B\"acklund transformations in terms of Maya cycles and oddly
coloured integer sequences.

The natural extension of this work is to tackle the full
classification of the rational solutions to the $A_{2n+1}$-\p\
systems, which include $\Pfive$ and its higher order
extensions. Although the analysis is considerably more involved, it
should be possible to extend this approach from the odd-cyclic to the
even-cyclic case. In this regard, we would like to formulate the
following
\begin{conj}
  All rational solutions of $\Pfive$ (and the $A_{2n+1}$-\p\ system)
  can be expressed as Wronskian determinants whose entries involve
  Laguerre polynomials.
\end{conj}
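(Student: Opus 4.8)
The plan is to mirror, step by step, the strategy that succeeded for the odd-cyclic case, replacing the harmonic oscillator by the \emph{isotonic} (radial) oscillator and Hermite polynomials by Laguerre polynomials. First I would establish the even-period analogue of Proposition~\ref{prop:wtof}, namely that the $A_{2n+1}$-\p\ system is equivalent to a $(2n+2)$-cyclic dressing chain \eqref{eq:wfchain}, now of \emph{even} period (so that $A_3$, and hence $\Pfive$, is the $4$-cyclic case). The four pillars of the odd argument must then be transplanted: (i) the local analysis of rational solutions near their poles (Propositions~\ref{prop:wform}, \ref{prop:res2}, \ref{prop:res2j}); (ii) the identification of the resulting Laurent constraints with the trivial-monodromy conditions of Duistermaat--Gr\"unbaum (Proposition~\ref{prop:dg}); (iii) an Oblomkov-type classification (Proposition~\ref{prop:Obl}) of the monodromy-free potentials in the relevant growth class; and (iv) a combinatorial indexing by a suitable modification of Maya diagrams, leading to the even-period analogues of Theorems~\ref{thm:characterization} and \ref{thm:main}.

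The decisive difference surfaces already in step (i). The residue lemmas of Section~\ref{sec:veselov} rely repeatedly on the \emph{oddness} of the period; in particular Lemma~\ref{lem:a3}, which forces at least one residue $A_i$ to vanish at every pole, rests on a parity count that is unavailable once $2n+1$ is replaced by the even number $2n+2$. I expect this to be not a technical nuisance but the structural heart of the matter. When the parity obstruction disappears, an even chain admits a new kind of pole at which \emph{all} residues are nonzero; such a pole cannot be removed and signals a regular singular point of the associated Schr\"odinger potential carrying a centrifugal term $\ell(\ell+1)(z-\zeta)^{-2}$. The Hermite-like poles (where some residue vanishes, exactly as in the odd case) persist, but the genuinely new Laguerre structure enters precisely through this centrifugal singularity. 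I would therefore split the poles into two types and run the Laurent analysis separately, verifying in each case that the even-coefficient constraints of the form \eqref{eq:res2j} match exactly the odd-coefficient vanishing demanded by Proposition~\ref{prop:dg}.

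With trivial monodromy established at every finite point, step (iii) requires the analogue of Oblomkov's theorem for monodromy-free potentials that grow quadratically at infinity but are permitted one regular singular point. The base operator is then the isotonic oscillator $-D_{zz}+z^2+\ell(\ell+1)z^{-2}$, whose quasi-rational eigenfunctions are built from associated Laguerre polynomials $L_n^{(\alpha)}(z^2)$, the trivial-monodromy condition at the origin forcing $\ell\in\Z$ via $c_{-2}=\ell(\ell+1)$. Since a single rational Darboux transformation sends such an operator to another of the same form with shifted parameter data, the class is closed under Crum transformations, and iterating produces potentials whose eigenfunctions are Wronskians of Laguerre-type functions. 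Transcribing these Wronskians into polynomials---the Laguerre counterpart of the Hermite pseudo-Wronskians $\H_M$ of \eqref{eq:pWdef1}---would yield the asserted representation.

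The hard part will be twofold. First, one must actually prove the Oblomkov-type classification in the isotonic setting, controlling how the centrifugal parameter is allowed to evolve along the chain and showing that the quadratic-plus-centrifugal profile is the \emph{only} admissible asymptotics; the even-period version of the growth argument in Proposition~\ref{prop:wform} is delicate because its clean conclusion $k_i=1$ for all $i$ again exploited the oddness of the sum \eqref{eq:wsum}. Second, the combinatorial bookkeeping must be enlarged: a single Maya diagram no longer suffices, since one must simultaneously record the Hermite-type data governing the behaviour at infinity and the Laguerre-type data governing the centrifugal singularity. Devising the correct doubled or coloured Maya object, establishing the even-period cyclicity classification that replaces Proposition~\ref{thm:Mp}, and matching it to the $A^{(1)}_{2n+1}$ symmetry action as in Theorem~\ref{thm:main} is where the real work lies.
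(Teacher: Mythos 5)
The statement you were asked to prove is not a theorem of the paper at all: it is a conjecture, formulated in the concluding section precisely because the authors do not have a proof. The paper offers only the remark that the odd-cyclic machinery ``should be possible to extend\ldots{} from the odd-cyclic to the even-cyclic case'' together with a claim of ``solid evidence,'' so there is no proof of the statement to compare yours against. Within that caveat, your proposal is a faithful elaboration of exactly the route the authors gesture at: even-period dressing chains for the $A_{2n+1}$ system (with $A_3$, hence $\Pfive$, as the $4$-cyclic case), Laurent analysis at the poles forcing trivial monodromy, an Oblomkov-type classification of monodromy-free potentials with quadratic growth plus a centrifugal singularity (the isotonic oscillator and its Laguerre-type rational extensions), and a new combinatorial indexing compatible with the $\tilde{A}^{(1)}_{2n+1}$ symmetry. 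Your diagnosis of where the odd-period arguments genuinely break is also correct and well targeted: Lemma~\ref{lem:a3} and the conclusion $k_i=1$ in Proposition~\ref{prop:wform} both hinge on parity counts over an odd number of chain elements, and in the even case one must admit poles at which no residue vanishes, which is precisely how the $\ell(\ell+1)(z-\zeta)^{-2}$ term and the Laguerre structure enter.

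That said, what you have written is a research plan, not a proof, and you say so yourself: the two-type pole analysis, the isotonic analogue of Proposition~\ref{prop:Obl}, the even-period growth estimate replacing Proposition~\ref{prop:wform}, and the combinatorial classification replacing Proposition~\ref{thm:Mp} and Theorem~\ref{thm:main} are all flagged as ``where the real work lies'' and none of them is carried out. In particular, without a proved classification of monodromy-free potentials in the quadratic-plus-centrifugal growth class, matched to what the local analysis of even chains actually delivers, the chain of implications from ``rational solution'' to ``Wronskian of Laguerre polynomials'' cannot be closed. So the proposal identifies the correct strategy and the correct obstructions --- the same ones the authors implicitly acknowledge by stating this as a conjecture --- but it leaves the statement exactly as open as the paper does.
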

We have solid evidence to believe that this conjecture is true. Given
the correspondence of $\Pfour$ with Wronskians of Hermite polynomials
and $\Pfive$ with Wronskians of Laguerre polynomials, it is very
appealing to suggest also that Wronskians of Jacobi polynomials must
play an important role in the description of the rational solutions to
$\Psix$. However, unlike $\Pfour$ and $\Pfive$, in the case of $\Psix$
we still lack a basic theory of dressing chains and Darboux
transformations.  Some ideas in this direction can be found in
\cite{adler1994modification}. Also, it would be interesting to investigate
combinatorial representations for the Painlev\'e equations  \PI,\PII,\PIII $ $ and $\Psix$, perhaps considering the realization of these equations as
Fuchs-Garnier systems \cite{jkt09}.

All algebraic solutions of  Painlev\'e's  $\Psix$ equation were
  described in \cite{listyk14} in terms of finite orbits of the
  corresponding symmetry group.  It would be interesting to consider
  whether the approach in this paper can be extended to give an analogous classification of the
  algebraic solutions of \PIV.

\addtocontents{toc}{\protect\setcounter{tocdepth}{0}}
\section*{Acknowledgements}

The research of DGU has been supported in part by the Spanish MICINN under grants PGC2018-096504-B-C33 and RTI2018-100754-B-I00,  the European Union under the 2014-2020 ERDF Operational Programme and the Department of Economy, Knowledge, Business and University of the Regional Government of Andalusia (project FEDER-UCA18-108393). We would like to thank three anonymous referees for their valuable comments that helped to improve the final version of this manuscript.

%\bibliography{painleve4}
%\bibliographystyle{amsplainsorted}
%\bibliographystyle{amsplain}

\end{document}